\def\vecY{Z_1}
\def\vecZ{Z_2}
\numberwithin{equation}{section}
\definecolor{SmartBlue}{RGB}{51, 51, 255}
\setlist[itemize]{topsep=0pt, partopsep=0pt, parsep=0pt, itemsep=0pt}
\newtheorem{theorem}{Theorem}[section]
\newtheorem{proposition}[theorem]{Proposition}
\newtheorem{lemma}[theorem]{Lemma}
\newtheorem{definition}[theorem]{Definition}
\newtheorem{remark}[theorem]{Remark}
\newtheorem{setup}[theorem]{Setup}
\newtheorem*{orient*}{Orientation Condition}
\newtheorem*{junction*}{Junction Conditions}
\newtheorem*{nonexphor*}{Non-expanding horizons}
\newtheorem*{weakisolhor*}{Weakly isolated horizons}
\newtheorem*{isolhor*}{Isolated horizons}
\def\gen{\kil}%{\mathfrak{y}}
\def\ovgen{\ovkil}%{\ov{\gen}}
\def\ovkil{\ov{\teta}}
\def\ovkil{\ov{\teta}}
\def\teta{\eta}%{\widetilde{\eta}}
\def\kil{\teta}
\def\defi{{\stackrel{\mbox{\tiny {\textbf{def}}}}{\,\, = \,\, }}}
\def\nablao{{\stackrel{\circ}{\nabla}}}
\def\Gammao{{\stackrel{\circ}{\Gamma}}}
\def\sone{\hat{s}}
\def\bsone{\bs{\hat{s}}}
\def\metdata{\{\mathcal{N},\gamma,\ellc,\elltwo\}}
\def\G{\mathfrak{G}}
\def\bkilone{\bs{\upvarpi}}
\def\n{\mathfrak{n}}
\def\G{\mathcal G}
\def\Fcal{\mathcal F}
\def\bY{\textup{\textbf{Y}}}
\def\Y{\textup{Y}}
\def\bU{\textup{\textbf{U}}}
\def\U{\textup{U}}
\def\bF{\textup{\textbf{F}}}
\def\F{\textup{F}}
\def\Yn{r}
\def\Q{Q}%{\varkappa_n}
\newcommand{\nn}{\nonumber}
\newcommand{\hate}{\hat{e}}
\newcommand{\bm}[1]{\mbox{\boldmath $#1$}}
\newcommand\ovnabla{\ov{\nabla}}
\newcommand\N{\mathcal N}
\newcommand\M{\mathcal M}
\newcommand\elltwo{\ell^{(2)}}
\newcommand\hypdata{\{ \mathcal{N},\gamma,\ellc, \elltwo, \bY\}}
\newcommand\rig{\zeta}
\newcommand\A{\mathcal A}
\def\bmell{\bm{\ell}}
\def\ellc{\bmell}
\def\nablao{{\stackrel{\circ}{\nabla}}}
\def\K{\mathcal K}
\def\Kgen{\K}%{\K[\gen]}
\def\Kkil{\K}%{\K[\kil]}
\def\Sigmakil{\Sigma}%{\Sigma[\kil]}
\def\defi{{\stackrel{\mbox{\tiny \textup{\textbf{def}}}}{\,\, = \,\, }}}
\def\nablao{{\stackrel{\circ}{\nabla}}}
\def\Gammao{{\stackrel{\circ}{\Gamma}}}
\def\nabh{\nabla^h}
\def\sone{s}
\def\bsone{\bs{s}}
\def\metdata{\{\mathcal{N},\gamma,\ellc,\elltwo\}}
\def\n{\mathfrak{n}}
\def\G{\mathcal G}
\def\Fcal{\mathcal F}
\def\bY{\textup{\textbf{Y}}}
\def\Y{\textup{Y}}
\def\bU{\textup{\textbf{U}}}
\def\U{\textup{U}}
\def\bF{\textup{\textbf{F}}}
\def\F{\textup{F}}
\def\Yn{r}
\def\Q{\kappa_n}%{\varkappa_n}
\def\bmell{\bm{\ell}}
\def\ellc{\bmell}
\def\nablao{{\stackrel{\circ}{\nabla}}}
\def\K{\mathcal K}
\newcommand{\ov}{\overline}
\newcommand{\cp}{\partial}
\newcommand{\bs}{\boldsymbol}
\newcommand{\lp}{\left(}
\newcommand{\rp}{\right)}
\newcommand{\cu}{\mathcal{U}}
\newcommand{\cv}{\mathcal{V}}
\newcommand{\lb}{\left\lbrace}
\newcommand{\rb}{\right\rbrace}
\newcommand{\ld}{\left.}
\newcommand{\rv}{\right\vert}
\newcommand{\tdo}{d}%{\tdb}
\newcommand{\lieo}{\mathsterling}
\def\metdataa{\{\gamma,\ellc,\elltwo\}}
\def\hypdataa{\{\gamma,\ellc,\elltwo,\bY\}}
\newcommand{\tcb}{}%{\textcolor{blue}}
\newcommand{\nullhyp}{\widetilde{\N}}
\newcommand{\spc}{\textup{ }}
\newcommand{\ke}{\widetilde{\kappa}}%{\kappa_{\e}}
\newcommand{\lightfrak}[1]{\scalebox{0.85}[1.2]{$\mathfrak{#1}$}}
\newcommand{\w}{\lightfrak{w}}%{\mathfrak{w}}
\newcommand{\p}{\lightfrak{p}}%{\mathfrak{p}}
\newcommand{\qone}{\mathfrak{q}}
\newcommand{\bqone}{\bs{\qone}}
\def\bcalY{\bs{\calY}}%{\bs{\calY}}
\def\calY{\mathfrak{I}}%{\Pi}
\def\calP{\mathfrak{B}}%{\frakYkil}
\def\bcalP{\bs{\mathfrak{B}}}%{\frakYkil}
\newsavebox\myboxA
\newsavebox\myboxB
\newlength\mylenA
\newcommand*\xoverline[2][0.75]{%
    \sbox{\myboxA}{$\m@th#2$}%
    \setbox\myboxB\null% Phantom box
    \ht\myboxB=\ht\myboxA%
    \dp\myboxB=\dp\myboxA%
    \wd\myboxB=#1\wd\myboxA% Scale phantom
    \sbox\myboxB{$\m@th\overline{\copy\myboxB}$}%  Overlined phantom
    \setlength\mylenA{\the\wd\myboxA}%   calc width diff
    \addtolength\mylenA{-\the\wd\myboxB}%
    \ifdim\wd\myboxB<\wd\myboxA%
       \rlap{\hskip 0.5\mylenA\usebox\myboxB}{\usebox\myboxA}%
    \else
        \hskip -0.5\mylenA\rlap{\usebox\myboxA}{\hskip 0.5\mylenA\usebox\myboxB}%
    \fi}
\DeclareFontFamily{U}{rcjhbltx}{}
\DeclareFontShape{U}{rcjhbltx}{m}{n}{<->rcjhbltx}{}
\DeclareSymbolFont{hebrewletters}{U}{rcjhbltx}{m}{n}
\DeclareMathSymbol{\aleph}{\mathord}{hebrewletters}{39}
\DeclareMathSymbol{\beth}{\mathord}{hebrewletters}{98}
\DeclareMathSymbol{\gimel}{\mathord}{hebrewletters}{103}
\DeclareMathSymbol{\lamed}{\mathord}{hebrewletters}{108}
\DeclareMathSymbol{\mem}{\mathord}{hebrewletters}{109}
\DeclareMathSymbol{\ayin}{\mathord}{hebrewletters}{96}
\DeclareMathSymbol{\tsadi}{\mathord}{hebrewletters}{118}
\DeclareMathSymbol{\qof}{\mathord}{hebrewletters}{113}
\DeclareMathSymbol{\shin}{\mathord}{hebrewletters}{152}
\DeclareMathSymbol{\pe}{\mathord}{hebrewletters}{80}
\DeclareMathSymbol{\heh}{\mathord}{hebrewletters}{104}
\DeclareMathSymbol{\peh}{\mathord}{hebrewletters}{112}
 \newcounter{mnotecount}
 \newcommand{\mnote}[1]%{}
 {\protect{\stepcounter{mnotecount}}$^{\mbox{\tiny
 $\,\bullet$\themnotecount}}$ \marginpar{%\color{red}%
 \raggedright\tiny\em
 $\,\bullet$\themnotecount: #1} }
\title{Null hypersurface data and ambient vector fields:\\ Killing horizons of order zero/one}
\author{Miguel Manzano\footnote{miguelmanzano06@usal.es}\hspace{0.17cm} and Marc Mars\footnote{marc@usal.es}\\ \\
Instituto de F\'{\i}sica Fundamental y Matem\'aticas, IUFFyM\\
Universidad de Salamanca
%\\
%Plaza de la Merced s/n \\
%37008 Salamanca, Spain\\
}
\newcommand\subsubsubsection{\@startsection{paragraph}{4}{\z@}{-2.5ex\@plus -1ex \@minus -.25ex}{1.25ex \@plus .25ex}{\normalfont\normalsize\bfseries}}
\newcommand\subsubsubsubsection{\@startsection{subparagraph}{5}{\z@}{-2.5ex\@plus -1ex \@minus -.25ex}{1.25ex \@plus .25ex}{\normalfont\normalsize\bfseries}}
\begin{document}

\setlength{\abovedisplayskip}{0.15cm}
\setlength{\belowdisplayskip}{0.15cm}

\maketitle

\begin{abstract}
%The formalism of hypersurface data allows one to study hypersurfaces of any causal character abstractly (i.e.\ without viewing them as embedded in an ambient space). The intrinsic and extrinsic geometry of a hypersurface are encoded in a data set. 
In this work, we study null hypersurfaces admitting a privileged vector field $\gen$ which is null and tangent at the hypersurface.\ 
%For hypersurfaces of arbitrary causal character, 
We derive 
%a set of identities 
an identity 
%for the Lie derivative of the first transverse derivative of the metric along $\kil$ at the hypersurface.\ 
that relates 
%involving 
the deformation tensor of $\gen$ with 
%and 
tensor fields codifying the intrinsic and extrinsic geometry of the hypersurface.\ This is done without imposing restrictions either on the topology of the hypersurface or on the (possibly empty) subset of points where $\gen$ vanishes.\ 
%In the case when the hypersurface is null and $\gen$ is null and tangent, 
We introduce a generalized notion of surface gravity $\kappa$ that extends smoothly the usual one to the fixed points.\ 
%Also for the null case, 
We also analyze the properties of the Lie derivative of the Levi-Civita connection along $\gen$.\ This 
analysis 
allows us to introduce three 
new notions of 
%\textit{non-embedded} 
abstractly defined horizons (i.e.\ not necessarily viewed as embedded submanifolds) that we then compare with the standard concepts of non-expanding, weakly isolated and isolated horizons.\ The former generalize the latter to completely general topologies and to horizons admitting fixed points.\ Finally, we study the structure of the fixed points on these new horizons. 
\end{abstract}

%\pagebreak
%
%\tableofcontents
%
%\pagebreak

\section{Introduction}\label{c:Introduction}

%\begin{itemize}
%\item Interplay between geometry of hypersurfaces and existence of a privileged vector field
%\item Geometry studied in terms of the deformation tensor, computation of this tensor and its first derivative on the hypersurface, so that we get constraints on the data if $\Kkil$ happens to be known
%\item tensors $\SigmaZ$
%\item Puntos fijos
%\item Tool: fhd
%\end{itemize}

Many spacetimes of physical and/or geometrical relevance admit privileged vector fields.\ A primary example are spacetimes endowed with Killing fields, but there are many more interesting possibilities.\  
%In the context of semi-Riemannian geometry, it often happens that the manifold 
%%under consideration 
%admits a privileged vector field.\  
%%In many interesting situations a spacetime has a privileged vector field. 
%One common example occurs when considering a spacetime endowed with a Killing vector, but there are many more indeed.\ 
Besides the natural generalization of a spacetime admitting a less restrictive type of symmetry such as a homothety or a conformal Killing vector, it can also happen that there is one observer (modelled, as usual, by a unit future timelike  %Killing 
vector) that is physically or geometrically privileged, e.g.\ by being geodesic, shear-free,  irrotational or any combination thereof.\ Privileged null vector fields are also commonplace, for instance when a spacetime is algebraically special so that the Weyl tensor admits a multiple principle null direction, or when the spacetime admits a Kerr-Schild decomposition.\ The examples are endless.

This paper is devoted to studying the properties of \textit{null} 
hypersurfaces embedded in spacetimes
%\footnote{In fact we work in the more general context of spacetimes.} 
admitting 
%with %hypersurfaces that admit 
a privileged vector field which is \textit{null} and \textit{tangent} at the hypersurface.\ Our main aim is to understand the effect of one such 
%privileged 
vector 
%--- 
 %--- 
on 
%how 
the geometry of the hypersurface.\ We denote the hypersurface by $\nullhyp$ and the vector field by $\kil$.\  
%are affected by the existence of a special vector field, which we denote by $\gen$. 
%Part of the corresponding analysis will be performed for
%At first we shall consider hypersurfaces of arbitrary causal character and a vector $\gen$ not necessarily tangent to the hypersurface. However, 
%We will 
%eventually focus on the 
%consider case when
%the hypersurface is null and 
%the privileged vector field 
%$\kil$ is null and tangent to it. %To distinguish these two scenarios, in the latter case we will use the notation $\kil$ (instead of $\gen$) to refer to the privileged vector field.
%In particular, the analysis in the null case will allow us to construct new notions of Killing horizons of order zero/one with specific properties that will be detailed later,  also to connect them with those of non-expanding, weakly isolated and isolated horizons (see e.g.\ [...]) and finally to analyze the nature of the set of zeroes of $\gen$ in this sorts of horizons.

Our approach to understand the interplay between $\gen$ and the hypersurface $\nullhyp$ 
%--- from now on denoted by $\nullhyp$ --- 
is based upon obtaining a set of 
%general 
identities that relate the geometric properties of $\gen$ with the intrinsic and extrinsic geometry of $\nullhyp$.\ %the hypersurface, from now on denoted by  $\nullhyp$. 
These identities will have the property that \textit{all} the dependence on $\gen$ and on the hypersurface will be fully explicit.\  
%The motivation for deriving identities of this form is that,  
In this way, 
%s is done in this way so that, 
%ts usefulness relies upon the fact that, 
whenever some additional information about $\gen$ is known (for instance if $\gen$ happens to be a Killing vector, an homothety or it is restricted in any manner), the identities will become restrictions on the geometry of the hypersurface.\  
To make our results 
%We are in fact interested in our results being 
applicable to a variety of situations, 
%we keep maximum generality concerning the hypersurface $\nullhyp$, the privileged vector field $\gen$  
%%its deformation tensor $\Kgen$) 
%and the spacetime.\ In particular, 
%%We, however, 
we impose no restriction on the topology of the hypersurface nor on the nature of the zero set $\mathcal{S}\subset\nullhyp$ of $\gen$ (namely the set of points of the hypersurface where $\gen$ vanishes). 

The identities we derive have many potential applications.\ 
%, specially in the case when $\nullhyp$ is null and 
%%and the spacetime admits a privileged vector field $\null$
%the vector field is null and tangent to $\nullhyp$.\ 
In particular, in this paper they allow us to introduce three interesting notions of horizons, namely \textit{abstract Killing horizon of order zero} (AKH$_0$, Definition \ref{defKHD0}), \textit{abstract weak Killing horizon of order one} (AWKH$_1$, Definition \ref{def:weak:KH1}), and \textit{abstract Killing horizon of order one} (AKH$_1$, Definition \ref{defKH1}).\ As we will see, these definitions 
%are remarkable for at least three reasons. Firstly, they 
can be formulated at a purely non-embedded level, namely 
%formulated do not require the 
%enable the study of the geometry of horizons 
without the necessity of assuming 
%either 
the existence of an ambient spacetime.\  
%(they only require the existence of a privileged vector field tangent to the hypersurface).\   
%or an ambient Killing vector field. 
%Secondly, 
Moreover, they generalize the important and well-known notions of non-expanding, weakly isolated and isolated horizons \cite{ ashtekar2000generic, ashtekar2002geometry, ashtekar2000isolated, gourgoulhon20063+, jaramillo2009isolated,krishnan2002isolated}.\ The generalization is in three directions.\ Firstly, \textit{we separate very clearly the concepts that can be defined viewing the hypersurfaces as fully detached from the spacetime}.\ Secondly, in our notions \textit{the generator of the horizon is allowed to vanish 
%anywhere 
at the horizon}.\ Thirdly, we place \textit{no restrictions whatsoever upon the topology of the horizon}.\  
%We include a detailed comparison between the concepts of Killing horizons of order zero/one and those of non-expanding, weakly isolated, and isolated horizons. Finally, we analyze the structure of the fixed points sets of horizons of order one. As we will see, our findings are in agreement with analogous results in the context of full Killing horizons.
%
%
These generalized notions of Killing horizons are interesting for several reasons.\ For instance, near horizon geometries are defined in terms of Killing horizons with vanishing surface gravity.\ By definition, Killing horizons exclude the fixed point set of the Killing vector.\ This can have unpleasant consequences such as transforming cross-sections that would otherwise be compact into non-compact ones.\ By being able to incorporate the zero set of the generators into the framework, this type of difficulties simply disappear.\ It is remarkable that already at the level of abstract Killing horizons of order one the structure of the zero set of the generator is %...
%An interesting fact of WAKH$_1$s is that one can prove that the surface gravity of $\kil$ is constant along the null generators.\ Moreover, if the horizon admits a cross-section then the behaviour of the set of fixed points is 
analogous as for 
%to the corresponding of (full) 
Killing horizons (cf.\ Theorem \ref{lemzerosets}).\ 

Further applications of the results presented here will be analyzed in \cite{manzano2023master} (see also \cite{manzano2023PhD}) and in \cite{mars2024transverse}.\ In \cite{manzano2023master}, we obtain an equation --- valid for completely general null hypersurfaces equipped with a null and tangent privileged vector field --- that generalizes the well-known near horizon equation of isolated horizons \cite{ashtekar2002geometry} and the master equation of multiple Killing horizons \cite{mars2018multiple}.\ In \cite{mars2024transverse} the authors prove uniqueness of spacetimes that solve the $\Lambda$-vacuum field equations to infinite order from data on a Killing horizon. %of infinite order. 

A crucial aspect to explore the interaction between $\gen$ and 
%a hypersurface 
$\nullhyp$ is to be able to codify their geometric properties.\ Our main tool to describe the hypersurface is the \textit{formalism of hypersurface data} 
%\cite{mars2013constraint}, 
%\cite{mars2020hypersurface}, \cite{Mars2023first}, 
%\cite{mars2023covariant}, 
%\cite{manzano2023constraint}, 
%\cite{manzano2023matching}, 
\cite{manzano2023PhD, manzano2023constraint, manzano2023matching, mars2013constraint, mars2020hypersurface, mars2024abstract,  mars2023covariant,  Mars2023first,mars2024transverse}, a mathematical framework for studying general hypersurfaces in a %fully 
detached way from any ambient space.\ This formalism allows one to fully encode the intrinsic geometry of a hypersurface 
%can be fully encoded 
in a $(0,2)$-tensor $\gamma$, a one-form $\ellc$ and a scalar $\elltwo$ subject to specific conditions.\ In particular, from $\metdataa$ one can reconstruct the full ambient metric $g$ at the hypersurface whenever it happens to be embedded.\ The extrinsic geometry of the  hypersurface, on the other hand, can be entirely captured by an additional $(0,2)$-tensor $\bY$ which encodes first transverse derivatives of $g$ evaluated at the hypersurface.\ 
%Regarding the geometric properties of $\gen$, there are many ways of codifying them, although in this paper we have concentrated on two. 
Regarding $\gen$, its geometric properties can be codified in a variety of ways.\ In this paper, we will 
%, and some may prove advantageous over others depending on what information about $\gen$ is eventually known.\ When the hypersurface $\nullhyp$ is of arbitrary causal character we will 
use its deformation tensor $\Kgen\defi\pounds_{\gen}g$.\ 
%However, as we shall see, 
However, we will see that in some cases it is more useful to codify the geometric information not directly with the Lie derivative of the metric, but with the Lie derivative of the Levi-Civita connection along $\kil$.\ This derivative defines a $(1,2)$-tensor 
%(see e.g.\ \cite{yano1957Lie}) 
%This choice, as we will detail next, is not necessarily the most convenient in all situations. In fact, 
%whenever $\nullhyp$ is null and admits a null, tangent vector $\kil$, 
%it turns out that 
%there exists a tensor field that plays a key role in the description of the  , when particularizing the results for , we have codified the properties of $\kil$ in 
%the  $(1,2)$-\textit{tensor Lie derivative of the Levi-Civita connection of $g$ along $\kil$} (see e.g.\ \cite{yano1957Lie}) --- 
which we denote by $\Sigmakil$.\  
%--- also encodes key properties of $\gen$.  
%the  tensor fields that play an important role explore two: deformation tensor and tensor $\Sigmakil$. One natural possibility is to capture them in terms of the deformation tensor $\Kgen\defi\pounds_{\gen}g$. The choice of the tensor fields capturing the geometric properties of $\gen$ can 
This tensor captures all the information of the first derivatives of the deformation tensor $\Kkil$ of $\kil$ as well as curvature aspects of the ambient space.\

The main results of this paper are the following.\ We demonstrate that, even at the detached level, one can introduce a notion of a surface gravity $\kappa$ associated with $\kil$.\ Remarkably, $\kappa$ is defined not only at points where $\kil\neq0$, but everywhere on $\nullhyp$. Furthermore, at points where $\kil\neq0$,  
%on $\nullhyp$, 
it coincides with the standard notion of surface gravity of a null, tangent vector field (see \eqref{nabla:eta:eta:1st}).\ In fact, $\kappa$ extends the standard surface gravity \textit{smoothly} to all points of $\nullhyp$ where $\kil$ vanishes.\ 
We also derive an identity for the tangent components of the first transverse Lie derivative of $\Kgen$ at the hypersurface in terms of $\gen\vert_{\nullhyp}$, $\Kgen\vert_{\nullhyp}$ and $\hypdataa$.\ 
%As already stressed, 
In 
the case when 
%some cases it is better to codify 
the 
%geometric 
information about $\kil$
is codified
in terms of $\Sigmakil$ instead of $\Kkil$ we obtain, 
%we study 
%%in detail 
%the relationship between $\Sigmakil$ and the geometry of the hypersurface.\ 
among other things, %this allows us to 
%we obtain 
an explicit expression for $\Sigmakil(X,W)$,  
%for vectors 
where $X,W$ are tangent to $\nullhyp$ (Lemma \ref{lem:Sigma:eta:general}).\ This result turns out to be the essential 
%key milestone on which the 
in the construction of the notions of AKH$_0$, AWKH$_1$ and AKH$_1$.\  
%are based.\ 
%
%Indeed, the fundamental identity
%%The underlying reason for that is the relation 
%$\Sigmakil^{\mu}{}_{\alpha\beta}=\nabla_{\alpha} \nabla_{\beta}\kil^{\mu}+{R}{}^{\mu}{}_{\beta\nu\alpha}\kil^{\nu}$
%suggests to define abstract (i.e.\ detached) notions of horizons by imposing conditions that, a posteriori, will guarantee that the tangential components of $\Sigmakil$ vanish, in analogy with the case of a Killing vector where the full tensor $\Sigmakil$ is identically zero.\ 
%
As stressed above, the relevance of these identities involving $\Kkil$ and $\Sigmakil$ 
%all these results 
relies on the fact that, when additional information about $\gen$ is known, they 
%identities 
automatically restrict the geometry of the hypersurface in a practical and useful way.

%Still in the null case, as pointed out before, it does not always happen that the information about $\kil$ is given in terms of an explicit expression for its deformation tensor. It may well occur that the information is found at the level of curvature or first derivatives of the deformation tensor. Precisely in this context, the tensor $\Sigmakil$ mentioned earlier plays a fundamental role, as it encodes all information about the first derivatives of the deformation tensor $\Kkil \defi \pounds_{\kil}g$ of $\kil$ as well as some curvature aspects of the ambient space (indeed, it satisfies the equation $\Sigmakil^{\mu}{}{\alpha\beta}=\nabla{\alpha} \nabla_{\beta}\kil^{\mu}+{R}{}^{\mu}{}_{\beta\nu\alpha}\kil^{\nu}$).

The structure of the paper is as follows.\ In Section \ref{sec:prelim:data} we revisit the basic concepts and results concerning the formalism of hypersurface data 
%, both in the general and 
in the null case.\ 
%
%Section \ref{sec:sigo} is devoted to the Lie derivative of the torsion-free connection $\nablao$ that 
%%can be defined from 
%one can canonically define on any hypersurface data set  
%$\metdata$. This tensor, denoted by $\sigo$, is closely related to the 
%%already mentioned 
%tensor $\Sigmakil$.\ 
%
The generalized notion of surface gravity $\kappa$ is presented in Section \ref{secNullCasep4}.\ 
In Section \ref{secLieYp4} we obtain the identity relating the deformation tensor $\Kgen$ of $\gen$ and the tensors $\hypdataa$.\  
%In Section \ref{sec:eta:null} we particularize the results of Section \ref{secLieYp4} to the null case. 
Section \ref{app:rig(sigma)} focuses on the properties of the tensor $\Sigmakil$.\ The notions of Killing horizons of order zero and one are presented in Section \ref{sec:HD-KH0and1} and then compared with the standard concepts of non-expanding, weakly isolated and isolated horizons in Section \ref{sec:connection:NEH,MKH}. The paper concludes with Section \ref{sec:application:zeroes}, where we analyze the structure of the set of fixed points of a generator $\kil$ of a Killing horizon of order one.

\subsection{Notation and conventions}\label{sec:notation}

In this paper, all manifolds are smooth, connected and without boundary.\ Given a manifold $\M$ we use $\mathcal{F}\lp\mathcal{M}\rp\defi C^{\infty}\lp\mathcal{M},\mathbb{R}\rp$ and $\mathcal{F}^{\star}\lp\mathcal{M}\rp\subset\mathcal{F}\lp\mathcal{M}\rp$ for the subset of no-where zero functions.\ The tangent bundle is denoted 
by $T\mathcal{M}$ and $\Gamma\lp T\mathcal{M}\rp$ is the set
of sections (i.e.\ vector fields).\ We use $\pounds$, $d$ for the Lie derivative and exterior derivative.\ Both tensorial and abstract index notation will be employed.\  
%used depending on convenience. 
We work in arbitrary dimension $\mathfrak{n}$ and use the following sets of indices:
\begin{equation}
\label{notation}
\alpha,\beta,...=0,1,2,...,\mathfrak{n};\qquad a,b,...=1,2,...,\mathfrak{n};\qquad A,B,...=2,...,\mathfrak{n}.
\end{equation}
When index-free notation is used (and only then) we shall distinguish
covariant tensors with boldface.\ As usual, parenthesis (resp.\ brackets) denote symmetrization (resp.\ antisymmetrization) of indices.\ The symmetrized tensor product is defined by $A\otimes_s B\equiv\frac{1}{2}(A\otimes B+B\otimes A)$.\  
%We write $\text{tr}_B\bs{A}$ for the trace of a $2$-covariant symmetric tensor $\bs{A}$ w.r.t.\ a $2$-contravariant tensor $B$. 
In any spacetime $(\mathcal{M},g)$,
the scalar product of two vectors is written 
%both 
as $g(X,Y)$,  
%or $\la X,Y\rag$, 
and we use $g^{\sharp}$, $\nabla$  for the inverse metric and Levi-Civita derivative of $g$ respectively.\  
%
%Our notation and convention for the curvature operator of any  connection $D$ is 
%\begin{align}
%\label{curvoperator} R^{D}(X,W) Z \defi
%\left ( D_X D_W - D_W D_X - D_{[X,W]} \right ) Z, %\qquad\forall X,W,Z\in\Gamma(T\M),
%\end{align}
%except for  $D=\nabla$ where we simply write $R$.
%% for the curvature operator. For any covector $\bs{\alpha}\in\Gamma(T^{\star}\M)$
%The curvature tensor of $D$ is the $3$-covariant, $1$-contravariant tensor
%$\mbox{Riem}^{D} ( \bm{\alpha},Z,X,W) ) \defi
%\bm{\alpha} \left ( R^{D}(X,W) Z \right )$ and 
%the Ricci tensor $\Ricc^{D}$ is its contraction in the first and third indices. 
%%
%
Our signature for Lorentzian manifolds $\lp \mathcal{M},g\rp$ is $(-,+, ... ,+)$.

\section{Preliminaries: formalism of hypersurface data}\label{sec:prelim:data}

In this section we summarize the results of the  formalism of hypersurface data needed for this paper.\ 
%As mentioned above,
%%in the \hyperref[c:Introduction]{Introduction}, 
%this formalism allows one to describe hypersurfaces of any causal character without them being embedded in any ambient space.\ 
Since this work focuses on null hypersurfaces, we only introduce the formalism in the null case.\ We refer the reader to \cite{mars2013constraint, mars2020hypersurface} for a more thorough introduction to the formalism. 
%\cite{mars2013constraint, mars2020hypersurface, Mars2023first, mars2023covariant, manzano2023PhD, manzano2023matching, manzano2023constraint}\miguel{Add the paper of ``Abstract Null Geometry" to the list as soon as it can be cited.}. 

\textit{Null metric hypersurface data} $\metdata$ is an $\n$-dimensional manifold
$\mathcal{N}$ endowed with a symmetric $(0,2)$-tensor  $\gamma $ 
%with one-dimensional radical at every point of $\N$, 
of signature $(0,+,...,+)$, 
a covector
$\ellc$, and a scalar $\ell^{(2)}$ subject to the condition that the
symmetric $(0,2)$-tensor $\bs{\mathcal{A}}\vert_p$ on $T_p\mathcal{N}\times\mathbb{R}$ given by 
\begin{equation}
%\label{ambientmetric}
\nn
\begin{array}{c}
  \ld\bs{\mathcal{A}}\rv_p
  \lp\lp W,a\rp,\lp Z,b\rp\rp\defi \ld \gamma \rv_p\lp W,Z\rp+a\ld\ellc\rv_p\lp Z\rp+b\ld\ellc\rv_p\lp W\rp+ab \ell^{(2)}\vert_p,\quad W,Z\in T_p\mathcal{N},\quad a,b\in\mathbb{R}
\end{array}
\end{equation}
is non-degenerate at every $p\in\mathcal{N}$.\  \textit{Null hypersurface data} $\hypdata$ is null metric hypersurface data with an additional symmetric $(0,2)$-tensor $\bY$.\ The tensor $\bs{\mathcal{A}}\vert_p$ is non-degenerate, so it admits a unique ``inverse'' $(2,0)$-tensor  $\mathcal{A}\vert_p$.\ Its splitting
%on $T^{\star}_p\mathcal{N}\times\mathbb{R}$. Splitting its action as
\begin{equation}
\label{ambientinversemetric}
%\begin{array}{c}
\ld\mathcal{A}\rv_p\lp\lp \bs{\alpha},a\rp,\lp \bs{\beta},b\rp\rp\defi \ld P\rv_p\lp \bs{\alpha},\bs{\beta}\rp+a\ld n\rv_p\lp \bs{\beta}\rp+b\ld n\rv_p\lp \bs{\alpha}\rp
%
%+ab n^{(2)}\vert_p
%
,\quad
\bs{\alpha},\bs{\beta}\in T^{\star}_p\mathcal{N},\quad a,b\in\mathbb{R}
%\end{array}
\end{equation}
defines a symmetric $(2,0)$-tensor $P$ and a vector $n$ 
%and a scalar   $n^{\lp 2\rp}$ 
in $\N$ satisfying
\cite{mars2013constraint}:

\vspace{-0.65cm}

%\begin{multicols}{2}
%\noindent
%\begin{align}
%\gamma_{ab} n^b + \ntwo \ell_a & = 0, \label{prod1} \\
%\ell_a n^a + \ntwo \elltwo & = 1, \label{prod2}  \\
%P^{ab} \ell_b + \elltwo n^a & = 0,  \label{prod3} \\
%P^{ab} \gamma_{bc} + n^a \ell_c & = \delta^a_c. \label{prod4}
%\end{align}
%\end{multicols}
\noindent
\begin{minipage}[t]{0.175\textwidth}
\begin{align}
		\gamma_{ab} n^b & = 0, \label{prod1} 
%		\\
%		\ell_a n^a & = 1, \label{prod2}  \\
%		P^{ab} \ell_b + \elltwo n^a & = 0,  \label{prod3} \\
%		P^{ab} \gamma_{bc} + n^a \ell_c & = \delta^a_c. \label{prod4}
\end{align}
\end{minipage}
\hfill
\hfill
\begin{minipage}[t]{0.17\textwidth}
	\begin{align}
%		\gamma_{ab} n^b & = 0, \label{prod1} \\
		\ell_a n^a & = 1, \label{prod2}  %\\
%		P^{ab} \ell_b + \elltwo n^a & = 0,  \label{prod3} \\
%		P^{ab} \gamma_{bc} + n^a \ell_c & = \delta^a_c. \label{prod4}
	\end{align}
\end{minipage}
\hfill
\begin{minipage}[t]{0.3\textwidth}
	\begin{align}
%		\gamma_{ab} n^b & = 0, \label{prod1} \\
%		\ell_a n^a & = 1, \label{prod2}  \\
		P^{ab} \ell_b + \elltwo n^a & = 0,  \label{prod3} %\\
%		P^{ab} \gamma_{bc} + n^a \ell_c & = \delta^a_c. \label{prod4}
	\end{align}
\end{minipage}
\hfill
\begin{minipage}[t]{0.3\textwidth}
	\begin{align}
%		\gamma_{ab} n^b & = 0, \label{prod1} \\
%		\ell_a n^a & = 1, \label{prod2}  \\
%		P^{ab} \ell_b + \elltwo n^a & = 0,  \label{prod3} \\
		P^{ab} \gamma_{bc} + n^a \ell_c & = \delta^a_c. \label{prod4}
	\end{align}
\end{minipage}

\vspace{-0.cm}

The vector field $n$ is no-where zero (by \eqref{prod2}) and defines the degenerate direction of $\gamma$ (by \eqref{prod1}).\ Thus, 
%By definition, 
the radical of $\gamma$ at any 
%point 
$p\in\N$, i.e.\ the set $\textup{Rad}\gamma\vert_p \defi \{X\in T_p \N \spc\vert\spc\gamma(X,\cdot)=0\}$, is spanned by $n\vert_p$. %satisfies  $\textup{Rad}\gamma=\la n\ra$.  
%one-dimensional.\ Therefore, .
%This, together with \eqref{prod1} means that $\textup{Rad}\gamma=\la n\ra$.\ Thus, $n$ is everywhere non-zero (by \eqref{prod2}) and defines the degenerate direction of $\gamma$.
%The vector field $n$ is nowehere zero and defines a privileged direction in $\N$, namely the degeneration direction of $\gamma$. 
We call the integral curves of $n$  \textit{generators} of $\N$.

%A general consequence of the definition of metric hypersurface data is that $\gamma|_p$ is either non-degenerate or it has precisely one degenerate direction. Equivalently, 
%%degenerate 
%%A priori, the definition of metric hypersurface data places no restrictions on 
%%the tensor $\gamma$, which in particular can be degenerate. However,
%%$\bs{\mathcal{A}}$  being non-degenerate forces \cite{mars2020hypersurface} that
%the radical of $\gamma$ at $p\in\N$, i.e.\ the set $\textup{Rad}\gamma\vert_p \defi \{X\in T_p \N \spc\vert\spc\gamma(X,\cdot)=0\}$, is
%either zero- or one-dimensional. A point $p$ is called {\it null}
%if $\textup{dim}(\textup{Rad}\gamma)\vert_p=1$ and {\it non-null} otherwise.
%It is easy to show that $p$ is a null point if and only if 
%$\ntwo\vert_p=0$ and that then necessarily 
%$n\vert_p \neq 0$ defines the degenerate direction of $\gamma\vert_p$, i.e.\  $\textup{Rad}\gamma\vert_p=\la n\vert_p\ra$.

%The following tensor fields play an important role in the hypersurface data formalism 
It is useful to introduce the following tensor fields
\cite{mars2020hypersurface}: 

\vspace{-0.6cm}

%\begin{small}
\noindent
\begin{minipage}[t]{0.55\textwidth}
	\begin{align}
		\label{threetensors} \bF & \defi \frac{1}{2} d \ellc, & \bm{\sone} &\defi  \bF(n,\cdot), & \bU  &\defi  \frac{1}{2}\pounds_{n} \gamma ,
	\end{align}
\end{minipage}
%\begin{minipage}[t]{0.59\textwidth}
%\begin{align}
%	\label{threetensors} \bF & \defi \frac{1}{2} d \ellc, & \bm{\sone} &\defi  \bF(n,\cdot), & \bU  &\defi  \frac{1}{2}\pounds_{n} \gamma  + \ellc \otimes_s d \ntwo,
%\end{align}
%\end{minipage}%
\hfill
%\vrule
\hfill
\begin{minipage}[t]{0.44\textwidth}
\begin{align}
	\label{defY(n,.)andQ}
	\bs{\Yn}&\defi\bY(n,\cdot),& \Q &\defi -\bY(n,n).
\end{align}
\end{minipage}

\vspace{-0.cm}

Observe that $\bU$ is symmetric and $\bF$ is a $2$-form, hence %so %in particular
%From \eqref{threetensors}, it follows that
$\bsone(n)=0$.\ The tensors $\bsone$ and $\bU$ satisfy 
%following identities %These tensor fields satisfy 
%the following identities 
%hold 
\cite{mars2020hypersurface}  

\vspace{-0.7cm}

%\begin{small}
\noindent
%\begin{minipage}[t]{0.4\textwidth}
%\begin{align}
%\pounds_{n} \ellc & = 2 \bm{\sone} - d (\ntwo \elltwo), \label{poundsellc}
%\end{align}
%\end{minipage}%
\begin{minipage}[t]{0.5\textwidth}
	\begin{align}
		\pounds_{n} \ellc & = 2 \bm{\sone} , \label{poundsellc}
	\end{align}
\end{minipage}
\hfill
%\vrule
\hfill
\begin{minipage}[t]{0.45\textwidth}
\begin{align}
\bU (n, \cdot ) & = 0.
%- \ntwo \bm{\sone} + \frac{1}{2} d\ntwo + \frac{1}{2} (\ntwo)^2 d \elltwo. 
\label{Un} 
\end{align}
\end{minipage}

\vspace{-0.cm}

%In particular, equations \eqref{poundsellc} and \eqref{Un} imply %, respectively, 
%%
%%\vspace{-0.6cm}
%%
%%\noindent
%%\begin{minipage}[t]{0.64\textwidth}
%\begin{align}
%	\label{sigmathing1}
%	\pounds_{\ovkil}\ellc&=\alpha\pounds_{ n}\ellc+d\alpha=2\alpha\bsone+d\alpha\quad\forall\ovkil=\alpha n,\alpha\in\Fcal(\N),
%\end{align}
%\end{minipage}
%\hfill
%%\vrule
%\hfill
%\begin{minipage}[t]{0.35\textwidth}
%\begin{align}
%	\label{liefngammaANDU}
%	\pounds_{\varrho n}\gamma&=2\varrho\bU\spc\spc\forall
%	\varrho\in\mathcal{F}(\N).
%\end{align}
%\end{minipage}
%\begin{align}
%	\label{sigmathing1}
%	\pounds_{\ovkil}\ellc&=\alpha\pounds_{ n}\ellc+d\alpha=2\alpha\bsone+d\alpha\quad\text{for any}\quad
%	\ovkil=\alpha n,\spc \alpha\in\Fcal(\N),\\
%		\label{liefngammaANDU}
%		\pounds_{\varrho n}\gamma&=2\varrho\bU,\qquad\qquad\qquad\qquad\qquad\qquad\qquad\qquad\hspace{0.15cm} \forall
%		\varrho\in\mathcal{F}(\N).
%\end{align}
Any null metric hypersurface data $\metdata$ admits a torsion-free connection $\nablao$,
called \textit{metric hypersurface connection}, uniquely defined by the equations
\cite{mars2020hypersurface} 

\vspace{-0.25cm}

\begin{multicols}{2}
\noindent
\begin{align}
\nablao_{a} \gamma_{bc}  =& - \ell_b \U_{ac} - \ell_c \U_{ab}, \label{nablaogamma} \\
\nablao_a \ell_b  =&\spc \F_{ab} - \elltwo \U_{ab}. \label{nablaoll} 
\end{align}
\end{multicols}

\vspace{-0.25cm}

For later use, we include two $\nablao$-derivatives of $n$ \cite{mars2020hypersurface} together with 
an identity for the Lie derivative of $\gamma$ along a vector $\hat{\bs{\sigma}}\defi P(\bs{\sigma},\cdot)$,  $\bs{\sigma}\in\Gamma(T^{\star}\N)$ 
%for any covector field $\mathfrak{s}_a$ on $\N$
\cite[Lemma 2.2]{manzano2023matching}:  %Additional consequences of $\ntwo=0$ are %Somewhat less immediate  facts are 
%\cite[Eq. (4.21)]{mars2020hypersurface}:

\vspace{-0.6cm}

%\begin{small}
\noindent
\begin{minipage}[t]{0.34\textwidth}
	\begin{align}
		\label{nablaonnull}
		\begin{array}{l}
			\nablao_bn^c=n^c\sone_b+P^{ac}\U_{ab},
		\end{array}
	\end{align}
\end{minipage}
\hfill
%\vrule
\hfill
\begin{minipage}[t]{0.15\textwidth}
	\begin{align}
		\label{nablao_nn}
		\hspace{-0.7cm}\nablao_nn =0,
	\end{align}
\end{minipage}
\hfill
\hfill
%Given any covector field $\mathfrak{s}_a$ on $\N$, the following identity \cite[Lemma 2.2]{manzano2023matching} 
%%relating Lie derivatives of $\gamma$ with $\nablao$-derivatives 
%will play a role later on: 
%%
%%We next quote a result from \cite{manzano2023matching} concerning how Lie derivatives of $\gamma$ are related to $\nablao$-derivatives.
%%We provide the corresponding identities below 
%%The corresponding identities are provided in the next lemma 
%%
%%\begin{lemma}\label{lem:liegamma}
%%\textup{\cite[Lemma 2.2]{manzano2023matching}}
%  %Let $\metdata$ be metric hypersurface data, 
%  %$V^a$ any vector field 
%%  and $\mathfrak{s}_a$ any covector field. Define 
%  %$\uhat{V}_a\defi  \gamma_{ab} V^b$
%  %and $\hat{\mathfrak{s}}^a\defi P^{ab} \mathfrak{s}_b$. Then the following identities hold 
%%  
%%  \vspace{-0.2cm}
%%  
%%\begin{multicols}{2}
%%\noindent
%%
\begin{minipage}[t]{0.41\textwidth}
\begin{align}
%     \frac{1}{2} \pounds_{V} \gamma_{ab} & = \ellc(V) \U_{ab} + \nablao_{(a} \uhat{V}_{b)}, \label{derVgamma} \\
     \hspace{-0.43cm}\frac{1}{2} {\pounds}_{\hat{\bs{\sigma}}} \gamma_{ab}   &= \nablao_{(a} \bs{\sigma}_{b)}
      -  \ell_{(a} \nablao_{b)} (\bs{\sigma}(n)). %,\quad\text{where}\quad \hat{\mathfrak{s}}^a\defi P^{ab} \mathfrak{s}_b.
      \label{deromegagamma} 
\end{align}
\end{minipage}
%\end{multicols}

\vspace{-0cm}

%\end{lemma}
The hypersurface data formalism and the actual geometry of hypersurfaces in spacetimes are linked through the %relies upon 
%by means of an appropriate  
notion of embeddedness of the data 
\cite{mars2020hypersurface}.\ 
%Let $\mathfrak{n}$ be the dimension of  $\N$. 
A null metric data $\metdata$ is said to be embedded in an $(\mathfrak{n}+1)$-spacetime $\lp \mathcal{M},g\rp$
if there exists an embedding $\phi :\mathcal{N}\longhookrightarrow\mathcal{M}$ and a rigging $\zeta$ (i.e.\ a vector field along $\phi \lp\mathcal{N}\rp$, everywhere transverse to it) satisfying 
\begin{equation}
\label{emhd}
\phi ^{\star}\lp g\rp= \gamma , \qquad\phi ^{\star}\lp g\lp\zeta,\cdot\rp\rp=\ellc, \qquad\phi ^{\star}\lp g\lp\zeta,\zeta\rp\rp=\elltwo.
\end{equation}
For null hypersurface data $\hypdata$ one requires, in addition, that
\begin{equation}
\label{YtensorEmbDef}
\dfrac{1}{2}\phi^{\star}\lp \pounds_{\zeta}g\rp=\bY.
\end{equation}
Since the definition of embedded data involves both $\phi$ and
$\rig$ we shall say ``embedded with embedding $\phi$ and rigging $\rig$", or
``$\{\phi,\rig\}$-embedded" for short.\ When clear from the context we will identify scalars and vectors on $\N$ with their counterparts on $\phi(\N)$.\ %related by $\phi$. 
We will use the word \textit{abstract} to refer to mathematical objects defined in terms of the hypersurface data only, i.e.\ that do not require the data to be embedded in an ambient space (e.g.\ $\gamma$, $\bU$ and even the manifold $\N$).

In the embedded case, 
%Given $\{\phi,\rig\}$-embedded null metric data $\metdata$,  
the 
%metric hypersurface connection 
%derivative 
connection 
$\nablao$ and the Levi-Civita derivative $\nabla$ of $g$ are related by \cite{mars2020hypersurface}
\begin{align}
	\label{nablaXYnablao}\nabla_{X}W&= \nablao_{X}W-\bY(X,W)
	n - \bU(X,W)\rig,  
	%=\ovnabla_{X}Y-\bk(X,Y)\rig,
	\quad\forall X,W\in\Gamma(T\N).
\end{align}
The choice of a rigging vector is highly non-unique.\ The hypersurface data formalism captures this fact through a built-in gauge freedom \cite{mars2020hypersurface}.\ 
%\begin{definition}\label{def:gauge_things}
Given null hypersurface data $\hypdata$, a non-zero function $z\in\mathcal{F}^{\star}\lp\mathcal{N}\rp$ and a vector field $V\in\Gamma\lp T\mathcal{N}\rp$, the gauge transformed data 
\begin{align*}
\G_{(z,V)}\Big(\hypdata\Big)\defi \lb \mathcal{N},\mathcal{G}_{\lp z,V\rp}\lp \gamma \rp,\mathcal{G}_{\lp z,V\rp}\lp\ellc\rp,\mathcal{G}_{\lp z,V\rp}\big( \ell^{(2)} \big),\mathcal{G}_{\lp z,V\rp}( \bY )\rb 
\end{align*}
is given by %defined as
\begin{align}
	\label{gaugegamma&ell2} \hspace{-0.2cm}\mathcal{G}_{\lp z,V\rp}(\gamma)& \defi  \gamma ,\hspace{0.42cm}\mathcal{G}_{\lp z,V\rp}\lp\ellc\rp \defi z\lp\ellc+ \gamma \lp V,\cdot\rp\rp,\hspace{0.42cm}\mathcal{G}_{\lp z,V\rp}\big(\ell^{(2)}\big) \defi z^2\big( \ell^{(2)}+2\ellc\lp V\rp+ \gamma \lp V,V\rp\big),\\
	\label{gaugeY}\hspace{-0.2cm}\mathcal{G}_{\lp z,V\rp}(\bY)  & \defi z\bY+ \ellc\otimes_s \tdo z+\frac{1}{2}\lieo_{zV} \gamma. 
	%=z\bY+ \lp \ellc+\gamma(V,\cdot)\rp\otimes_s dz+\frac{z}{2}\lieo_{V} \gamma.
\end{align} 
%\end{definition}
The set of transformations $\{\G_{(z,V)}\}$ forms a group \cite{mars2020hypersurface}
%\miguel{\\ \textbf{He quitado:}\\ with identity $\mathcal{G}_{\mathbb{I}} \defi \mathcal{G}_{(1,0)}$, composition law $\mathcal{G}_{\left(z_2, V_2\right)} \circ\mathcal{G}_{\left(z_1, V_1\right)}=\mathcal{G}_{\left(z_1 z_2 , V_2+z_2^{-1} V_1\right)}$ and inverse  $\mathcal{G}^{-1}_{(z,V)} \defi \mathcal{G}_{(z^{-1},-zV)}$} 
and has the property that,    
%Besides, the set $\G_1\defi\{\G_{(z=1,V)}\}$ is a subgroup of $\G$ \cite{manzano2023constraint}. 
%
if $\hypdata$ is  $\{\phi,\rig\}$-embedded in $(\mathcal{M},g)$, then the transformed data $\mathcal{G}_{(z,V)}(\hypdata)$ is 
%$\{\phi,\mathcal{G}_{(z,V)} (\rig)  \defi  z (\rig + \phi _{\star} V)\}$-
embedded in the same space with the same embedding and rigging $\mathcal{G}_{(z,V)} (\rig)  \defi  z (\rig + \phi _{\star} V)$.\ The gauge group induces transformations on \textit{all} other geometric objects, e.g.\ $n$, $P$, $\nablao$.\ In this paper we shall need the  following two \cite{manzano2023constraint, mars2020hypersurface}:
%
%For its later use, we include the induced gauge transformations on $n$, 
%%$\bU$, $\bsone$ 
%and $\Q$,   
%%and the connection $\nablao$ 
%%are 
%namely \cite{manzano2023constraint, mars2020hypersurface}

\vspace{-0.6cm}

%\begin{multicols}{2}
%	\noindent
%	\begin{align}
%		\label{gaugen}\mathcal{G}_{\lp z,V\rp}\lp n\rp&=z^{-1}( n-n^{(2)}V),\\
%		\label{gaugeP}\mathcal{G}_{\lp z,V\rp}\lp P\rp&=P+V\otimes_s(n^{(2)} V-2 n).%\\
%	\end{align}
%\end{multicols}
%
%\vspace{-0.4cm}
%
%\begin{align}
%	\label{prop:gauge:nablao}
%	\hspace{-0.15cm}\G_{(z,V)}(\nablao)-\nablao=&\spc \frac{1}{2z}\Big( V\otimes ( \pounds_{zn}\gamma-\ntwo\pounds_{uV}\gamma+2z\ellc\otimes_sd\ntwo)+n\otimes (\pounds_{zV}\gamma+2\ellc\otimes_s dz)\Big).
%\end{align}
%
%The gauge behaviour of $\bU$, $\bsone$ and $\Q$ in the null case is
%%for null data have been obtained in 
%\cite{manzano2023constraint}:  
%%to be
%
%\vspace{-0.5cm}
%
%%
\noindent
\begin{minipage}[t]{0.5\textwidth}
%\begin{align}
%	\label{gaugen}\mathcal{G}_{\lp z,V\rp}\lp n\rp&=\frac{n}{z} ,\qquad\qquad\quad\spc
%	%		\\
%	%		\label{gaugeP}\mathcal{G}_{\lp z,V\rp}\lp P\rp&=P+V\otimes_s(n^{(2)} V-2 n).%\\
%\end{align}
\begin{align}
	\label{gaugen}\mathcal{G}_{\lp z,V\rp}\lp n\rp&=z^{-1}n ,\qquad\qquad\quad\spc
	%		\\
	%		\label{gaugeP}\mathcal{G}_{\lp z,V\rp}\lp P\rp&=P+V\otimes_s(n^{(2)} V-2 n).%\\
\end{align}
\end{minipage}%
%\hfill
%%\vrule
%\hfill
%\begin{minipage}[t]{0.49\textwidth}
%	\begin{align}
%		\mathcal{G}_{\lp z,V\rp}\lp \bU \rp
%		& =  \frac{\bU}{z},\qquad\qquad\quad\spc \label{Uprime}
%	\end{align}
%\end{minipage}%
%\hfill
%%\vrule
%%\hfill
%\noindent
%\begin{minipage}[t]{0.5\textwidth}
%	\begin{align}
%\mathcal{G}_{\lp z,V\rp}\lp \Q \rp & = \frac{1}{z^2}\big( \Q z - n(z)  \big).\label{Qprime}
%\end{align}
%\end{minipage}
%\hfill
%\hfill
\begin{minipage}[t]{0.49\textwidth}
%	\begin{align}
%		\mathcal{G}_{\lp z,V\rp}\lp \Q \rp & = \frac{1}{z^2}\Big( \Q z - n(z)  \Big).\label{Qprime}
%	\end{align}
\begin{align}
	\mathcal{G}_{\lp z,V\rp}\lp \Q \rp & = z^{-2}\big( \Q z - n(z)  \big).\label{Qprime}
\end{align}
\end{minipage}

\vspace{-0cm}

We conclude the preliminaries by introducing a setup so that we can later refer to it easily.
%and  
%%, along with 
%some background results to be used later in the context of embedded hypersurface data.\ %This will be of use later on. %to be used later.
%
%\tcr{----------------------------------------------}
%
\begin{setup}\label{setup:basis:e_a}
	For null hypersurface data $\hypdata$
	$\{\phi,\rig\}$-embedded in a spacetime $(\mathcal{M}, g)$, we select any local basis $\{ \hate_a \}$ of $\Gamma(T\mathcal{N})$ and define $\{e_a \defi \phi _{\star} \hate_a\}$.\ Then,  
	%By transversality of the rigging, 
	$\{\zeta,e_a\}$ is a 
	%(local) 
	basis of $\Gamma(T\mathcal{M})\vert_{\phi (\mathcal{N})}$.\ The hypersurface $\phi (\mathcal{N})$ admits a unique null normal covector $\bs{\nu}$ satisfying $\bs{\nu}(\zeta) = 1$.\ By construction, $\bs{\nu}$ is an element of the dual basis of $\{\zeta,e_a\}$, which we denote by $\{\bs{\nu}, \bs{\theta}^a\}$.\ We define the vector fields 
	$\nu \defi g^{\sharp}(\bs{\nu},\cdot)$, ${\theta}^a \defi g^{\sharp}(\bs{\theta}^a,\cdot)$ and the covector %Define 
	$\bm{\rig} \defi  g(\rig,\cdot)$ along $\phi(\N)$.
\end{setup}
%It iIn these circumstances,
%satisfy, by definition
By definition of dual basis (and by \eqref{prod1}-\eqref{prod4}), $\{\nu,\theta^a\}$ decompose in the basis $\{\rig, e_a\}$ as 

\vspace{-0.25cm}

%\begin{multicols}{2}
%	\noindent
%	\begin{align}
	%		\label{normal}
	%		\nu & = \ntwo \rig + n^a e_a, \\
	%		\theta^a & = P^{ab} e_b + n^a \rig. \label{omegas}
	%	\end{align}
%\end{multicols}
\begin{multicols}{2}
	\noindent
	\begin{align}
		\label{normal}
		\nu & = n^a e_a, \\
		\theta^a & = P^{ab} e_b + n^a \rig. \label{omegas}
	\end{align}
\end{multicols}

\vspace{-0.4cm}

When a metric data $\metdata$ is embedded, the components of $\bs{\A}$ in the basis $\{ (\hat{e}_a, 0), (0,1)\}$ coincide (by \eqref{emhd}) with those of $g$ in the basis $\{ e_a, \rig\}$. This, together with \eqref{ambientinversemetric}, means that 
%the contravariant metric $g^{\sharp}$ can be expressed in the basis $\{\rig,e_a\}$ as 
%\begin{align}
%	\label{gup} g^{\mu\nu} & \stackbin{\phi (\mathcal{N})}{=} 
%	\ntwo  \rig^{\mu} \rig^{\nu}+ n^c \left ( \rig^{\mu} e_c^{\nu} +
%	\rig^{\nu}e_c^{\mu}  \right ) +P^{cd} e_c^{\mu} e_d^{\nu}
%	%\\
%	%&\qquad\qquad\qquad\qquad
%	\qquad \Longleftrightarrow\qquad g^{\mu\nu}\stackbin{\phi (\mathcal{N})}{=} e_c^{\mu} \theta^c{}^{\nu} + \rig^{\mu} \nu^\nu,
%\end{align}
\begin{align}
	\label{gup} g^{\mu\nu} & \stackbin{\phi (\mathcal{N})}{=} 
	n^c \left ( \rig^{\mu} e_c^{\nu} +
	\rig^{\nu}e_c^{\mu}  \right ) +P^{cd} e_c^{\mu} e_d^{\nu}
	%\\
	%&\qquad\qquad\qquad\qquad
	\qquad \Longleftrightarrow\qquad g^{\mu\nu}\stackbin{\phi (\mathcal{N})}{=} e_c^{\mu} \theta^c{}^{\nu} + \rig^{\mu} \nu^\nu,
\end{align}
where the equivalence follows from \eqref{normal}-\eqref{omegas}.\ 
Observe that $g (\nu, \nu ) = g^{\sharp}(\bs{\nu}, \bs{\nu} ) = 
0$, so 
%\eqref{normal} and 
\eqref{gup} is consistent with 
%$n$ being an \textit{abstract} null generator of $\N$ and 
$\nu$ being null at the hypersurface $\phi(\N)$.\ As mentioned before, the integral curves of $n$ are called null generators.\ This name is justified by \eqref{normal}, as $\nu$ is itself a null generator of $\phi(\N)$.\ 
%
%In particular, this yields  
%%\begin{align}
%	$g (\nu, \nu ) = g^{\sharp}(\bs{\nu}, \bs{\nu} ) = 
%	\ntwo$ on $\phi (\mathcal{N})$. %\label{gnunu}
%%\end{align}
%Thus, in the embedded case $\ntwo$ is the square norm of the normal $\bs{\nu}$ along $\phi(\N)$.  
%
%Observe that \eqref{normal} entails that $\phi_{\star}n=\nu $, so that $\nu$ coincides with $\phi_{\star}n$ at null points. 
It is also worth mentioning that,  
%$\gamma$ becomes the first fundamental form of the hypersurface.\ 
\textit{in the null case, 
%in addition, 
the tensor $\bU$ 
%plays a key role in the null case because, when the data is $\{\phi,\rig\}$-embedded, it 
coincides \textup{\cite{mars2020hypersurface}} 
%, \cite{manzano2023constraint} 
with the second fundamental form of the hypersurface $\phi(\N)$ with respect to $\nu$.}
%the unique null normal $\nu\in\Gamma(T\phi(\N))$ satisfying $g(\nu,\rig)=1$.\

%The metric hypersurface connection $\nablao$ and the Levi-Civita derivative $\nabla$ of $g$ are related by \cite{mars2020hypersurface}
%\begin{align}
%	\label{nablaXYnablao}\nabla_{X}W&=\nablao_{X}W-\bY(X,W)
%	\nu - \bU(X,W)\rig,  
%	%=\ovnabla_{X}Y-\bk(X,Y)\rig,
%	\quad\forall X,W\in\Gamma(T\N),
%\end{align}
%while 
For later use, we include the tangent $\nabla$-derivative of the rigging $\rig$  %is given by 
\cite{mars2013constraint}: 
\begin{align}
	\nabla_{e_a} \rig =\frac{1}{2} \big(\nablao_a \elltwo\big) \nu
	+ \left ( \Y_{ab} + \F_{ab} \right ) \theta^b.
	\label{nablarig3}
\end{align}

\section{Abstract notion of surface gravity}\label{secNullCasep4}

In later sections we shall introduce and study
Killing horizons of order zero and one in the framework of hypersurface data.\ Much in the same way as standard Killing horizons, they will also involve a privileged vector field which is null and tangent to the hypersurface.\ This vector field is a property of the hypersurface itself, so it cannot depend on any choice of transverse direction to the hypersurface or, at the abstract level, on any choice of gauge.\ 
In preparation for that situation, we now consider general null hypersurface data $\hypdata$ endowed
with an extra \textit{gauge-invariant} vector field $\ovkil$ \textit{in the radical of $\gamma$}.\ 
%We use the symbol  $\ovkil$ (instead of $\ovgen$, as in the previous section) to emphasize that $\ovkil\in\text{Rad}\gamma$, so it is no longer arbitrary.
In this section we shall prove that to any such $\ovkil$ one can associate a surface gravity on $\N$.\ This definition is interesting in two respects.\ First, it encodes at the abstract level the usual notion of surface gravity of (embedded) null hypersurfaces. Second, and particularly interesting, 
%with the 
%%The 
%remarkable property that 
it 
%fact is that this quantity 
is well-defined everywhere on $\N$, including the points where $\ovkil$ vanishes.
%  We will apply this same criterion for the remaining of the paper.} 

Since $\ovkil$ lies in the radical of $\gamma$, it is necessarily proportional to $n$.\ We 
% and we can define a function 
let $\alpha\in\Fcal(\N)$ be defined by $\ovkil=\alpha n$, and 
%.\ We also let 
$\mathcal{S}$ be the (possibly empty) set of zeroes of $\ovkil$,  
i.e.\ $\mathcal{S}\defi \{p\in\N\spc\vert\spc \alpha(p)=0\}$.\ 
%We can associate a  gauge-invariant scalar function $\kappa$ to the vector field $\ovkil$.
The following lemma introduces the notion of surface gravity of $\ovkil$ and proves its gauge-invariance. %a gauge-invariant vector in the radical of $\gamma$. %$\ovkil$.
\begin{lemma}\label{lemkappagauge} Let  $\hypdata$  be
null hypersurface data endowed with gauge-invariant vector field 
%$\ovkil \in\textup{Rad}\gamma$ and let $\alpha\in\Fcal(\N)$ be %the proportionality function 
%given by 
$\ovkil=\alpha n$. Then, the function 
%$\kappa\in\mathcal{F}(\mathcal{N})$ defined by
\begin{equation}
	\label{defkappaonN}\kappa\defi d\alpha\lp n \rp-\alpha\bY\lp n,n\rp
\end{equation}
is gauge-invariant.
\end{lemma}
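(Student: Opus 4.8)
The plan is to determine how the scalar $\alpha$ transforms under the gauge group $\{\mathcal{G}_{\lp z,V\rp}\}$ and then verify by a direct computation that the combination $d\alpha(n)-\alpha\bY(n,n)$ is left unchanged. It is convenient to first rewrite $\kappa$ using the scalar $\Q\defi-\bY(n,n)$ from \eqref{defY(n,.)andQ} together with the elementary identity $d\alpha(n)=n(\alpha)$, so that
\begin{equation*}
\kappa=n(\alpha)+\alpha\Q.
\end{equation*}
The whole argument then reduces to combining the known transformation laws \eqref{gaugen} and \eqref{Qprime}.

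First I would fix the transformation law for $\alpha$. Since $\ovkil=\alpha n$ is gauge-invariant by hypothesis and $n$ transforms as $\mathcal{G}_{\lp z,V\rp}\lp n\rp=z^{-1}n$, applying the gauge transformation to $\ovkil=\alpha n$ yields $\mathcal{G}_{\lp z,V\rp}\lp\alpha\rp\,z^{-1}n=\alpha n$. As $n$ is nowhere vanishing (by \eqref{prod2}), this forces $\mathcal{G}_{\lp z,V\rp}\lp\alpha\rp=z\alpha$. Next I would insert the transformation laws into $\kappa=n(\alpha)+\alpha\Q$. The first term becomes
\begin{equation*}
\mathcal{G}_{\lp z,V\rp}\lp n\rp\Big(\mathcal{G}_{\lp z,V\rp}\lp\alpha\rp\Big)=z^{-1}n\lp z\alpha\rp=n(\alpha)+z^{-1}\alpha\,n(z),
\end{equation*}
while, using \eqref{Qprime}, the second term becomes
\begin{equation*}
\mathcal{G}_{\lp z,V\rp}\lp\alpha\rp\,\mathcal{G}_{\lp z,V\rp}\lp\Q\rp=z\alpha\cdot z^{-2}\big(\Q z-n(z)\big)=\alpha\Q-z^{-1}\alpha\,n(z).
\end{equation*}
Adding both contributions, the inhomogeneous pieces $\pm z^{-1}\alpha\,n(z)$ cancel, leaving $n(\alpha)+\alpha\Q=\kappa$, which is precisely the gauge-invariance claimed.

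The computation is short, so there is no genuine obstacle; the only points requiring care are the derivation of the law $\mathcal{G}_{\lp z,V\rp}\lp\alpha\rp=z\alpha$ from the gauge-invariance of $\ovkil$, and the observation that the ``anomalous'' term $-n(z)$ appearing in \eqref{Qprime}, once weighted by $\alpha$, is exactly what cancels the derivative term $z^{-1}\alpha\,n(z)$ produced by differentiating the rescaled scalar $z\alpha$ along $n$. This cancellation is the conceptual reason why the particular combination $d\alpha(n)-\alpha\bY(n,n)$ — rather than $d\alpha(n)$ alone — is the correct gauge-invariant object, and it also explains why the $\bY(n,n)$ term is indispensable.
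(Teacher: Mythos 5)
Your proof is correct and structurally parallel to the paper's: both arguments first derive $\mathcal{G}_{(z,V)}(\alpha)=z\alpha$ from the gauge-invariance of $\ovkil$ together with \eqref{gaugen}, and both conclude by exhibiting the cancellation of the inhomogeneous terms $\pm z^{-1}\alpha\, n(z)$. The only divergence is in how the $\bY(n,n)$ term is handled: the paper computes $\mathcal{G}_{(z,V)}\big(\alpha\bY(n,n)\big)$ from scratch using the full gauge law \eqref{gaugeY} of $\bY$, which requires the auxiliary facts $\ellc(n)=1$ and $(\pounds_{zV}\gamma)(n,n)=0$, whereas you rewrite $\kappa=n(\alpha)+\alpha\Q$ and quote the ready-made transformation law \eqref{Qprime} for $\Q=-\bY(n,n)$, which the paper states in its preliminaries but only deploys much later (in the proof of Proposition \ref{lem:gauge:kappa}). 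Your route is slightly shorter and isolates the cancellation mechanism cleanly, at the cost of hiding the $\bY$-computation inside the cited formula; since \eqref{Qprime} is itself obtained from \eqref{gaugeY} by exactly the computation the paper performs inline, the two arguments are mathematically equivalent.
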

%\end{normalsize}
\begin{proof}
By hypothesis $\ovkil=\mathcal{G}_{(z,V)}(\ovkil)$ for any gauge group element.
%  $\{z\in\mathcal{F}^{\star}(\mathcal{N}),V\in \Gamma(T\mathcal{N})\}$.
By \eqref{gaugen}
this implies
\begin{equation}
	\mathcal{G}_{(z,V)}(\alpha)=z\alpha. \label{gaugef}
\end{equation}
To check that $\kappa$ is gauge-invariant we start with the second term (recall \eqref{gaugeY}):
\begin{align}
	\mathcal{G}_{(z,V)}&\big(\alpha\bY\lp n,n\rp\big)=z\alpha\Big( z\bY+ \ellc\otimes_s \tdo z+\frac{1}{2}\lieo_{zV} \gamma \Big)\big( z^{-1}n,z^{-1}n\big) 
	= \alpha\bY(n,n)+z^{-1}\alpha d z(n),
	\label{gaugefYnn} 
\end{align}
where 
%in the last step 
we used $\ellc(n)=1$ 
(cf.\ \eqref{prod2}) 
and $(\pounds_{zV} \gamma) (n,n)=\pounds_{zV} (\gamma(n,n))-2\gamma(\pounds_{zV}n,n)=0$. Now, the term $d\alpha(n)$ transforms as
\begin{equation}
	\label{gaugedfn}\mathcal{G}_{(z,V)}(d\alpha(n))=d\big(\mathcal{G}_{(z,V)}\alpha\big)\big( z^{-1}n\big)= z^{-1}( \alpha dz+zd\alpha)(n)=z^{-1} \alpha dz(n)+d\alpha(n).
\end{equation}
Combining this with \eqref{gaugefYnn} proves 
that 
%the claim 
$\G_{(z,V)}(\kappa)=\kappa$ indeed.
\end{proof}
%\begin{remark}
%Observe that by making use of the definition of $\Q$ in \eqref{defY(n,.)andQ}, \eqref{defkappaonN} can be rewriten as $n(\alpha)=\kappa-\alpha\Q$.
%\end{remark}
%
%
When 
%Assuming now that 
$\hypdata$ is $\{\phi,\rig\}$-embedded in $(\M,g)$,  
%we can define 
the 
%null 
vector field $\eta\defi \phi_{\star}\ovkil$ is null and tangent to 
%on 
$\phi(\N)$ at all its points.\ Hence,  away from the zeroes of $\kil\vert_{\phi(\N)}$ (i.e.\ on 
$ \phi (\N \setminus\mathcal{S})$), 
%we can define the
the standard notion of surface gravity $\widetilde{\kappa}$ of $\kil$ can be introduced by means of 
\begin{align}
\label{nabla:eta:eta:1st}\nabla_{\teta}\teta=\widetilde{\kappa}\teta. %\qquad\text{on}\qquad\nullhyp\setminus\mathcal{S}.
\end{align}
It turns out that 
%We show next that 
the pull-back of $\widetilde{\kappa}$ to $\N\setminus\mathcal{S}$ is precisely  $\kappa$, as we show next.\  
The interesting fact is that definition \eqref{defkappaonN} does not require $\alpha$ to be non-zero.\ By construction $\kappa$ is smooth and well-defined \textit{everywhere on $\N$}.\ It is not obvious a priori that the spacetime function $\widetilde{\kappa}$, which in general is defined only in  a subset of $\phi(\N)$, extends smoothly to all the hypersurface.\ This is actually an interesting consequence of the following computation that uses $\bU(\ovkil,\cdot)=0$ (cf.\ \eqref{Un}), \eqref{nablao_nn}, \eqref{nablaXYnablao} 
and the definitions of $\ke$ and $\kappa$:  
%(we identify scalars on $\N$ with their counterparts on $\phi(\N)$): 
\begin{align}
	\nonumber %\widetilde{\kappa}\hspace{0.035cm}\phi_{\star}\ovkil=
	\widetilde{\kappa}\eta&=\nabla_{\eta}\eta=\phi_{\star}\big(\nablao_{\ovkil}\ovkil-\bY(\ovkil,\ovkil)n\big)=\alpha \phi_{\star}\lp \alpha \nablao_{n}n+\lp d\alpha (n)-\alpha \bY(n,n)\rp n\rp
	%\\
	%\nn &
	%=\alpha \kappa \hspace{0.035cm}\phi_{\star}n
	%=\kappa\hspace{0.035cm} \phi_{\star}\ovkil
	=\kappa\hspace{0.035cm}\kil\quad\text{on}\quad\phi(\N\setminus\mathcal{S}).
	%\qquad\Longleftrightarrow\qquad\widetilde{\kappa}=\kappa\quad\text{on}\quad\N\setminus\mathcal{S}.
\end{align} 
%This is an interesting corollary of the following result.    
%\begin{proposition}\label{propkappa}
%In the setup of Lemma \ref{lemkappagauge}, define $\eta\defi \phi _{\star}\ovkil\in\Gamma(T\phi (\N))$ and let $\widetilde{\kappa}$ be the function defined by \eqref{nabla:eta:eta:1st} on $\phi(\N \setminus \mathcal{S})$. Then, $\widetilde{\kappa}\circ\phi=\kappa$ on $\N\setminus\mathcal{S}$.
%\end{proposition}
%\begin{proof}
%Let us identify scalars on $\N$ with their counterparts on $\phi(\N)$. 
%The following computation uses \eqref{nablaXYnablao}, \eqref{nablao_nn}
%and the definitions of $\ke$ and $\kappa$:
%\begin{align}
%	\nonumber \widetilde{\kappa}\hspace{0.035cm}\phi_{\star}\ovkil&=\widetilde{\kappa}\eta=\nabla_{\eta}\eta=\phi_{\star}\big(\nablao_{\ovkil}\ovkil-\bY(\ovkil,\ovkil)n\big)=\alpha \phi_{\star}\lp \alpha \nablao_{n}n+\lp d\alpha (n)-\alpha \bY(n,n)\rp n\rp\\
%	\nn &=\alpha \kappa \hspace{0.035cm}\phi_{\star}n=\kappa\hspace{0.035cm} \phi_{\star}\ovkil\qquad\Longleftrightarrow\qquad\widetilde{\kappa}=\kappa\quad\text{on}\quad\N\setminus\mathcal{S}.
%\end{align} 
%
%\vspace{-0.8cm}
%
%\end{proof}
Thus, $\widetilde{\kappa}\circ\phi=\kappa$ on $\N\setminus\mathcal{S}$, which 
%This result 
justifies calling the \textit{abstract} function $\kappa$ surface gravity.\ 
%As already mentioned, 
%%this function 
%$\kappa$ is well-defined and smooth everywhere on $\N$, 
%%including the 
%even at points where $\ovkil$ may vanish.

%and, in the embedded case, it agrees with the standard 
%%usual definition of 
%surface gravity $\ke$ in the domain where the latter is defined (namely at thos%e points of $\phi(\N)$ where $\kil$ is non-zero). %and extends it smoothly to the whole hypersurface.

%\section{Lie derivative of $\bY$ along $\ovkil$}
\section{Lie derivative of $\bY$ along a tangent null vector field}
%\section{Null hypersurfaces equipped with a tangent null vector field}
\label{secLieYp4}
%\label{sec:eta:null}

%As mentioned in the \hyperref[c:Introduction]{Introduction},  
The purpose of this section is to study the interplay between 
%In this section we link 
the geometry of embedded null hypersurfaces 
%with the properties
and the existence  
of a privileged vector field in the ambient space.\ More specifically, we shall derive identities that relate 
%\textit{completely general} 
$\{\phi,\rig\}$-embedded null hypersurface data $\hypdata$ with the deformation tensor of 
%any 
a vector field $\gen$, defined on a neighbourhood of $\phi(\N)$, which becomes null and tangent to $\phi(\N)$ at all its points.\ The idea is that  when $\gen$ is special in the sense that we have information about its deformation tensor, then one automatically obtains restrictions on the hypersurface data.\ 
%We work in full generality, so 
The results 
in this section have many potential consequences and have already found application in several circumstances (see \cite{manzano2023PhD,manzano2023master,mars2024transverse}). One of them, on which we will focus in the paper, is the construction of abstract notions of Killing horizons of order zero and one.\ 

In fact, 
the analysis of this section can be performed for an \textit{arbitrary} privileged vector field (non-necessarily tangent or null at the hypersurface) as well as for \textit{general} (i.e.\ non-necessarily null) hypersurface data.\ Such analysis can be found in the Ph.D.\ thesis of the first named author  \cite[Chap.\ 5]{manzano2023PhD}.\ 
%, and it can be considered of interest for [reasons] (see e.g.\ \tcr{[new-paper-marc-gabri]}, where \tcr{...}).\ 
%For the purposes of this paper and 
For the sake of conciseness, here we only present the results in the case of null data and a null tangent privileged vector field $\kil$.\ 
%We stress, however, that 

Consider null hypersurface data $\hypdata$ $\{\phi,\rig\}$-embedded in a spacetime $(\M,g)$.\ Suppose that $(\M,g)$ admits a privileged vector field $\gen$ in a neighbourhood of $\phi(\N)$ with the properties of being tangent to $\phi(\N)$ and null on this hypersurface.\ 
%Throughout this section, 
We assume Setup \ref{setup:basis:e_a} and %, 
%now assume that we have selected a vector field $\gen$ in a neighbourhood of $\phi(\N)$. 
%as in the previous section, 
%we 
use 
%$\kil$ (instead of $\gen$) to denote the vector  and 
$\ovkil$ to denote the counterpart of $\kil$ on $\N$, i.e.\ $\kil\defi \phi_{\star}\ovkil$.\ By construction $\ovkil$ is  gauge-invariant and belongs to the radical of $\gamma$, so there exists a function  $\alpha\in\Fcal(\N)$ such that $\ovkil \defi \alpha n$.\ This means, in particular, that all results in Section \ref{secNullCasep4} apply.\ 
The so-called deformation tensor of $\gen$ is defined by
\begin{equation}
\label{defDeforTensor}\Kgen \defi \pounds_{\gen} g.
\end{equation}
%\tcr{The purpose of this section is to particularize equation \eqref{poundsY} to the setup where $\hypdata$ is null and the selected vector field $\ovgen$ is both null and tangent to $\phi(\N)$.} 
%As in the previous section we use $\kil$ (instead of $\gen$) to denote the vector  and $\ovkil$ for its counterpart on $\N$, i.e.\ $\kil\defi \phi_{\star}\ovkil$. By construction $\ovkil$ is  gauge-invariant and  belongs to the radical of $\gamma$, so there exists $\alpha\in\Fcal(\N)$ such that $\ovkil \defi \alpha n$ and all results in  Section \ref{secNullCasep4} apply.
%
%
%and it is helpful to
%
It is useful to encode information of this tensor and its first transverse derivative at the hypersurface by means of the following two functions $\w,\p$, covector $\bqone$, and symmetric $(0,2)$-tensor $\bcalY$ on $\N$: 
%\footnote{The letters 
	%%shin 
	%$\w$ (/{\textesh}in/), 
	%%qoph 
	%$\p$ (/kuf/, /kof/), 
	%%pe 
	%$\qone$ (/fe sofit/, /fej sofit/) and 
	%%Dalet 
	%$\calY$ (/'dal$\varepsilon$t/, /'dal$\varepsilon$d/) are respectively the twenty-first, the nineteenth, the seventeenth and the fourth of the Hebrew alphabet.}
%\tcb{se $\nu=\phi_{\star}n$ we intr the only null normal to $\phi(\N)$ verifying $g(\rig,\nu)\vert_{\phi(\N)}=1$, see Setup \ref{setup:basis:e_a})}

\vspace{-0.5cm}

%729

\begin{minipage}[t]{0.719\textwidth}
	\begin{align}
		\hspace{-0.3cm}\label{defwpqb}\p&\defi \phi^{\star}(\Kkil \left (\rig,  \rig  \right )),\quad \w\defi \phi^{\star}(\Kkil \left (\rig,  \nu  \right )),\quad  \bqone\defi \phi^{\star}(\Kkil \left (\rig, \cdot \right )),
		%	\\
		%	\label{defwpqb:Y}\calY&\defi \dfrac{1}{2}\phi ^{\star}\left ( \pounds_{\rig} \Kkil \right ),
	\end{align}
\end{minipage}%
\hfill
%\vrule
\hfill
\begin{minipage}[t]{0.281\textwidth}
	\begin{align}
		\label{defwpqb:Y}\bcalY&\defi \dfrac{1}{2}\phi ^{\star}\left ( \pounds_{\rig} \Kkil \right ).
	\end{align}
\end{minipage}

\vspace{0.1cm}

%
%
%
% \( \mathfrak{w} \) \( \lightfrak{w} \)
% 
%  \( \mathfrak{p} \) \( \lightfrak{p} \)
%  
%    \( \mathfrak{q} \)    \( \bs{\mathfrak{q}} \)
%
%
%$\textup{J}$, $\textbf{\textup{J}}$
%
%$ABCDEFGHIJKLMNQRSUV$
%
%
%$\mathfrak{B}$, $\bs{\mathfrak{B}}$
%$\mathfrak{ABCDEFGHIJKLMNOPQRSTUVWXYZ}$
%
%$\mathfrak{abcdefghijklmnopqrstuvwxyz}$
%
%Default: \( x \) \\
%Bold: \( \mathbf{x} \) \\
%Italic: \( \mathit{x} \) \\
%Sans-serif: \( \mathsf{x} \) \\
%Typewriter: \( \mathtt{x} \) \\
%Calligraphic: \( \mathcal{X} \) \\
%Fraktur: \( \mathfrak{x} \) \\
%Blackboard bold: \( \mathbb{x} \) \\
%Script: \( \mathscr{x} \) \\
%Bold italic: \( \boldsymbol{x} \)
%
%
%Although these objects depend on $\eta$, for simplicity we do not reflect this fact in the notation.
Note that \eqref{defwpqb} involve only transverse components of the deformation tensor $\Kkil$.\ There is no need to introduce a name for the tangential components because they are codified by $\bU$.\ Indeed,
%$\gamma(n,\cdot)=0$ implies
%\begin{equation}
%\label{liefngammaANDU}\pounds_{\varrho n}\gamma=2\varrho\bU\qquad\qquad \forall%\varrho\in\mathcal{F}(\N)
%\end{equation}
%which, together with
%\eqref{Iden1Prop} for $\beta=0$, gives %with it follows that
\begin{equation}
	\label{pullbackdeftensornull}
	\phi ^{\star} \Kkil   =
	%\stackbin{\eqref{LiePull}}= 
	\phi^{\star}\big(\pounds_{\kil}g\big)=
	 \pounds_{\ovkil} \gamma=
	\pounds_{\alpha n} \gamma=
	%\stackbin{\eqref{liefngammaANDU}}= %\alpha\pounds_n\gamma=
	2\alpha\bU,
	%\label{UandpullbackKeta}
\end{equation}
where we have made use of the identities 

\vspace{-0.7cm}

\noindent
\begin{minipage}[t]{0.5\textwidth}
\begin{align}
	\phi ^{\star}\lp \pounds_{\phi _{\star}X} T\rp = \pounds_{X} (\phi ^{\star} T),
	%\qquad \forall X\in \Gamma(T\mathcal{N}),
	\label{LiePull}
\end{align}
\end{minipage}
\hfill
%\vrule
\hfill
\begin{minipage}[t]{0.49\textwidth}
\begin{align}
	\label{liefngammaANDU}
	\pounds_{\varrho n}\gamma&=2\varrho\bU,
	%\spc\spc\forall
	%\varrho\in\mathcal{F}(\N)
\end{align}
\end{minipage}

\vspace{-0cm}

valid for any covariant tensor field $T$ on $\M$, any vector field $X\in \Gamma(T\N)$ and any function $\varrho\in\Fcal(\N)$ (in particular, \eqref{liefngammaANDU} follows from combining \eqref{prod1} %$\gamma(n,\cdot)=0$ 
and $\pounds_{n}\gamma=2\bU$, cf.\ \eqref{threetensors}).\ Observe that $\bqone$ and $\w$ verify 
%From \eqref{defwpqb}, it is immediate to check that 
$\bqone(n)=\w$. 

The following theorem finds
%the already mentioned identity for the derivative $\pounds_{\ovkil}\bY$. 
%The main result of this section is an identity 
an identity for the Lie derivative of the tensor field $\bY$ along $\ovkil$ in terms of the deformation tensor of $\kil$ at $\phi(\N)$, the pull-back of its first transverse derivative and hypersurface data.\  
%(Theorem \ref{propLieetaY}).\ 
%These identities are expressions for $\pounds_{\ovkil}\bY$, where $\ovkil\in\Gamma(T\N)$ is defined by $\kil\defi\phi_{\star}\ovkil$.\ 
Their relevance, as mentioned before, relies on the fact that they establish a connection between the ambient properties of the vector $\kil$ and the geometry of the hypersurface.
\begin{theorem}\label{propLieetaY}
	Consider null hypersurface data $\hypdata$ $\{\phi,\rig\}$-embedded in a spacetime $(\M,g)$ 
	and assume  
	%the notation in 
	Setup \ref{setup:basis:e_a}.\ Let $\gen$ be a vector field %\textcolor{blue}{defined} 
	in a neighbourhood of $\phi(\N)$, null and tangent 
	%such that 
	at $\phi(\N)$.\ 
	%it is 
	%with the properties of being 
	%.\ 
	Define 
	%$\ovgen \in \textup{Rad}\gamma$ 
	$\ovkil$ by $		\gen\vert_{\phi (\mathcal{N})} \defi \phi _{\star} \ovgen$, 
	%\miguel{\\I think now there is no need to extend the \\ rigging off $\phi(\N)$},
	$\alpha\in\Fcal(\N)$ by  $\ovkil=\alpha n$, and 
	%the scalar 
	$A_{\gen} \in \Fcal(\N)$,  
	%and the vector 
	$X_{\gen} \in \Gamma(T\mathcal{N})$  by
%	\begin{align}
%		A_{\gen} & \defi 
%		n^a (\pounds_{\ovgen} \ellc)_a - \Kgen (\rig,  \nu  ) \label{Aeta}, \\
%		X_{\gen} & \defi \left ( \frac{1}{2} \Kgen \left (\rig, n^a \rig- 2 \theta^a \right )
%		+ \frac{1}{2} n^a \ovgen(\elltwo) + P^{ab} (\pounds_{\ovgen} \ellc)_b \right ) \hate_a. \label{Xeta}
%	\end{align}

\vspace{-0.65cm}

%729

\begin{minipage}[t]{0.37\textwidth}
\begin{align}
	A_{\gen} & \defi 
	n^a (\pounds_{\ovgen} \ellc)_a - \w \label{Aeta}, %\\
%	X_{\gen} & \defi \left ( 
%	\frac{1}{2}\big(  \ovgen(\elltwo)-\p \big) n^a + P^{ab}\big(  (\pounds_{\ovgen} \ellc)_b-\qone_b\big) \right ) \hate_a. \label{Xeta}
\end{align}
\end{minipage}
\hfill
%\vrule
\hfill
\begin{minipage}[t]{0.62\textwidth}
	\begin{align}
%	A_{\gen} & \defi 
%	n^a (\pounds_{\ovgen} \ellc)_a - \w \label{Aeta}, \\
	X_{\gen} & \defi \left ( 
	\frac{1}{2}\big(  \ovgen(\elltwo)-\p \big) n^a + P^{ab}\big(  (\pounds_{\ovgen} \ellc)_b-\qone_b\big) \right ) \hate_a. \label{Xeta}
\end{align}
\end{minipage}

\vspace{-0.1cm}

Then, the commutator $[\kil,\rig]$ is given by the two equivalent expressions	
%	\begin{align}
%	[\gen, \rig] \spc\stackbin{\phi (\mathcal{N})}{=}&\spc  \frac{1}{2} \left ( \ovgen(\elltwo)
%		-  \Kgen (\rig,\rig) \right ) \nu
%		+ \Big( (\pounds_{\ovgen} \ellc)_a
%		- \Kgen (\rig,e_a) 
%		\Big) \theta^a\stackbin{\phi(\N)}=A_{\gen} \rig + \phi _{\star} X_{\gen}. \label{commutator}
%	\end{align}
\begin{align}
	[\gen, \rig] \stackbin{\phi(\N)}= A_{\gen} \rig + \phi _{\star} X_{\gen}
	\qquad\text{and}\qquad 
	[\gen, \rig] \stackbin{\phi (\mathcal{N})}{=}\frac{1}{2} \left ( \ovgen(\elltwo)
	-  \p \right ) \nu
	+ \Big( (\pounds_{\ovgen} \ellc)_a
	- \qone_a  
	\Big) \theta^a. \label{commutator}
\end{align}
%	\begin{equation}
%		[\kil,\rig]=\frac{1}{2} \lp \ovgen (\elltwo) - \Kgen (\rig,\rig) \rp \nu+\Big( -\Kgen ( \rig, e_a)+(\pounds_{\ovkil}\ellc)_a\Big)  \theta^a
%	\end{equation}
	Moreover, the derivative $\pounds_{\ovgen} \bY$ satisfies the identity 
%	\begin{align}
%		\pounds_{\ovgen} \bY \stackbin{\phi(\N)}= A_{\gen} \bY + \ellc \otimes_s  \tdo A_{\gen}+ \frac{1}{2} \pounds_{X_{\gen}} \gamma+ \frac{1}{2} \phi ^{\star}\left ( \pounds_{\rig} \Kgen \right ). \label{poundsY}
%	\end{align}
	\begin{align}
	\pounds_{\ovgen} \bY \spc \stackbin{\N}=&\spc A_{\gen} \bY + \ellc \otimes_s  \tdo A_{\gen}+ \frac{1}{2} \pounds_{X_{\gen}} \gamma+ \bcalY,
	%\quad\text{which is equivalent to} 
	\label{poundsY}%\\
	\end{align}	
which in fully expanded form reads
%%	\begin{align}
%%		\label{final2} 
%%		%\nn 
%%		\pounds_{\ovkil} \Y_{bd} =&\spc \nablao_{b}\nablao_{d}\alpha+2\alpha\nablao_{(b}\sone_{d)}+2\sone_{(b}(\nablao_{d)}\alpha)+\big(n(\alpha)-\w\big) \Y_{bd} -\nablao_{(b}\qone_{d)}  +\dfrac{1}{2}\big( \ovkil(\elltwo)-\p\big)\U_{bd} + \calY_{bd}
%%		%	\\
%%		%	\label{flie_Nr} \hspace{-0.15cm}\alpha\pounds_nr_d=&\spc 
%%		%	n^b\Big(\nablao_{b}\nablao_{d}\alpha+ 2\alpha\nablao_{(b}\sone_{d)}+2\sone_{(b}(\nablao_{d)}\alpha)-\w\Y_{bd}  -\nablao_{(b}\qone_{d)}+ \calY_{bd}\Big) +\Q\nablao_{d} \alpha.
%%	\end{align}
\begin{align}
	\nn 
	\pounds_{\ovkil} \Y_{bd} \spc \stackbin{\N}=&\spc \nablao_{b}\nablao_{d}\alpha+2\alpha\nablao_{(b}\sone_{d)}+2\sone_{(b}(\nablao_{d)}\alpha)+n(\alpha) \Y_{bd}  +\dfrac{1}{2} \ovkil(\elltwo)\U_{bd} \\
	\label{final2}  & \spc -\Big(\nablao_{(b}\qone_{d)}+\w \Y_{bd}+\dfrac{\p}{2}\U_{bd}-\calY_{bd}\Big).
\end{align}	
\end{theorem}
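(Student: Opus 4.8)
The plan is to establish the three assertions in the order they are stated: first the commutator \eqref{commutator}, then the compact identity \eqref{poundsY}, and finally its expansion \eqref{final2}. The two displayed forms of $[\gen,\rig]$ are equivalent by pure linear algebra --- substituting the decompositions \eqref{normal}--\eqref{omegas} of $\nu$ and $\theta^a$ in the basis $\{\rig,e_a\}$, together with $\bqone(n)=\w$, turns one into the other --- so it suffices to prove the $A_\gen\rig+\phi_\star X_\gen$ form. Since $\{\bs{\nu},\bs{\theta}^a\}$ is dual to $\{\rig,e_a\}$ and $\bs{\nu}=g(\nu,\cdot)$, $\bs{\theta}^a=g(\theta^a,\cdot)$, the components of $[\gen,\rig]$ are the contractions $A_\gen=g(\nu,[\gen,\rig])$ and $X_\gen^a=g(\theta^a,[\gen,\rig])$. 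Using $\nu=\phi_\star n$ and \eqref{omegas}, the whole computation reduces to the scalar $g(\rig,[\gen,\rig])$ and the covector $g(e_b,[\gen,\rig])$.

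I would compute $g(\rig,[\gen,\rig])=g(\rig,\nabla_\gen\rig)-g(\rig,\nabla_\rig\gen)$. The first term is tangential and equals $\tfrac12\ovgen(\elltwo)$ because $g(\rig,\rig)=\elltwo$ on $\phi(\N)$ by \eqref{emhd} and $\gen$ is tangent; the second I rewrite via $\Kkil(\rig,\rig)=2g(\nabla_\rig\gen,\rig)$ as $\tfrac12\p$, so $g(\rig,[\gen,\rig])=\tfrac12(\ovgen(\elltwo)-\p)$. For the covector I expand $(\pounds_{\ovgen}\ellc)_b$ using $\ell_b=g(\rig,e_b)$ and metric compatibility to get the manifestly tangential identity $(\pounds_{\ovgen}\ellc)_b=g(e_b,\nabla_\gen\rig)+g(\nabla_{e_b}\gen,\rig)$, then replace $g(\nabla_{e_b}\gen,\rig)=\Kkil(e_b,\rig)-g(\nabla_\rig\gen,e_b)=\qone_b-g(\nabla_\rig\gen,e_b)$ to obtain $g(e_b,[\gen,\rig])=(\pounds_{\ovgen}\ellc)_b-\qone_b$. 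Assembling through \eqref{normal}--\eqref{omegas} and $\qone_an^a=\w$ reproduces \eqref{Aeta}--\eqref{Xeta} exactly. The main obstacle is that $[\gen,\rig]$ genuinely contains the uncontrolled transverse derivative $\nabla_\rig\gen$; the resolution, used in both contractions, is that the only combinations of $\nabla_\rig\gen$ that survive are precisely the transverse components $\p,\w,\bqone$ of $\Kkil$, which are exactly the prescribed data \eqref{defwpqb}.

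For \eqref{poundsY} I start from $\bY=\tfrac12\phi^\star(\pounds_\rig g)$ (eq.\ \eqref{YtensorEmbDef}) and apply the naturality \eqref{LiePull} with $\gen=\phi_\star\ovgen$ to write $\pounds_{\ovgen}\bY=\tfrac12\phi^\star(\pounds_\gen\pounds_\rig g)$. The operator identity $\pounds_\gen\pounds_\rig-\pounds_\rig\pounds_\gen=\pounds_{[\gen,\rig]}$ then gives $\pounds_{\ovgen}\bY=\tfrac12\phi^\star(\pounds_\rig\Kkil)+\tfrac12\phi^\star(\pounds_{[\gen,\rig]}g)$, whose first summand is $\bcalY$ by \eqref{defwpqb:Y}. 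For the second I use that $(\pounds_Vg)(e_a,e_b)=g(\nabla_{e_a}V,e_b)+g(\nabla_{e_b}V,e_a)$ is manifestly tangential, so $\phi^\star(\pounds_Vg)$ depends only on $V\vert_{\phi(\N)}$ and is additive over the decomposition $[\gen,\rig]=A_\gen\rig+\phi_\star X_\gen$. The tangential piece yields $\tfrac12\phi^\star(\pounds_{\phi_\star X_\gen}g)=\tfrac12\pounds_{X_\gen}\gamma$ via \eqref{LiePull}, and for the normal piece the product rule $\pounds_{A_\gen\rig}g=A_\gen\pounds_\rig g+2\,dA_\gen\otimes_s g(\rig,\cdot)$ pulls back to $2A_\gen\bY+2\,dA_\gen\otimes_s\ellc$, i.e.\ $A_\gen\bY+\ellc\otimes_s dA_\gen$; summing the three contributions gives \eqref{poundsY}.

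Finally, the expansion into \eqref{final2} is a direct computation that I would organize around one cancellation. Cartan's formula with $\ovgen=\alpha n$, $\ellc(n)=1$ and $\bsone=\bF(n,\cdot)$ gives $\pounds_{\ovgen}\ellc=d\alpha+2\alpha\bsone$, hence $A_\gen=n^a(\pounds_{\ovgen}\ellc)_a-\w=n(\alpha)-\w$, and $A_\gen\bY$ produces the $n(\alpha)\Y_{bd}-\w\Y_{bd}$ terms. I then split $X_\gen$ into its $n$-part $\tfrac12(\ovgen(\elltwo)-\p)n$, whose contribution is $\tfrac12(\ovgen(\elltwo)-\p)\U_{bd}$ by \eqref{liefngammaANDU}, and its $P$-part $\hat{\bs\sigma}=P(\bs\sigma,\cdot)$ with $\bs\sigma=(\pounds_{\ovgen}\ellc)-\bqone$, which I evaluate through \eqref{deromegagamma} as $\nablao_{(b}\sigma_{d)}-\ell_{(b}\nablao_{d)}(\bs\sigma(n))$. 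Since $\bs\sigma(n)=n(\alpha)-\w=A_\gen$, the term $-\ell_{(b}\nablao_{d)}A_\gen$ cancels the contribution $\ellc\otimes_s dA_\gen$ of \eqref{poundsY}; substituting $\sigma_d=2\alpha\sone_d+(d\alpha)_d-\qone_d$ and using that $\nablao$ is torsion-free turns $\nablao_{(b}\sigma_{d)}$ into $\nablao_b\nablao_d\alpha+2\alpha\nablao_{(b}\sone_{d)}+2\sone_{(b}\nablao_{d)}\alpha-\nablao_{(b}\qone_{d)}$. Collecting these with $\bcalY=\calY_{bd}$ reproduces \eqref{final2}.
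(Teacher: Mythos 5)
Your proof is correct, and its overall skeleton coincides with the paper's: (i) obtain the commutator by trading the uncontrollable transverse derivative $\nabla_{\rig}\gen$ for the transverse components $\p,\w,\bqone$ of $\Kkil$, (ii) derive \eqref{poundsY} from the operator identity $\pounds_{\gen}\pounds_{\rig}g=\pounds_{[\gen,\rig]}g+\pounds_{\rig}\Kkil$ (the paper's \eqref{basic}) pulled back via \eqref{LiePull}, and (iii) expand into \eqref{final2} using \eqref{deromegagamma} and $\pounds_{\ovkil}\ellc=d\alpha+2\alpha\bsone$. Where you genuinely depart from the paper is in the organization of steps (i) and (iii), and in both cases your route is leaner. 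For the commutator, the paper computes the full vector fields $\nabla_{\gen}\rig$ and $\nabla_{\rig}\gen$ (its \eqref{keyforlater} and \eqref{Iden2Prop}), which requires the hypersurface-data formulas \eqref{nablaoll} and \eqref{nablarig3}; you instead compute only the scalar contractions $g(\rig,[\gen,\rig])$ and $g(e_b,[\gen,\rig])$, using nothing beyond metric compatibility, the identity $\Kkil(X,W)=g(\nabla_X\gen,W)+g(\nabla_W\gen,X)$ and \eqref{emhd}, so \eqref{nablarig3} is never invoked and the tangential character of every surviving term is manifest. For the expansion, the paper splits $X_{\gen}$ into three pieces, $\mathcal{W}=-P(\bqone,\cdot)-\tfrac{\p}{2}n$, $\tfrac12\ovkil(\elltwo)n$ and $\cv=P(\pounds_{\ovkil}\ellc,\cdot)$, and applies \eqref{deromegagamma} twice (its \eqref{lieVgammaprop1}--\eqref{lieWgammaprop1}), after which the unwanted $\ell_{(b}\nablao_{d)}(\cdot)$ terms cancel across three separate formulas; you merge the two $P$-parts into the single one-form $\bs{\sigma}=\pounds_{\ovgen}\ellc-\bqone$, apply \eqref{deromegagamma} once, and the observation $\bs{\sigma}(n)=A_{\gen}$ makes the cancellation against $\ellc\otimes_s dA_{\gen}$ immediate. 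Both arguments are rigorous; yours buys brevity and makes the two key cancellations structural rather than computational, while the paper's produces the explicit intermediate expressions for $\nabla_{\gen}\rig$ and $\nabla_{\rig}\gen$, which are of independent use in the data formalism.
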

\begin{proof}
Throughout the proof, we identify $\N$ and $\phi(\N)$ as well as scalars, vectors and one-forms on $\N$ with their counterparts on $\phi(\N)$.\ All equalities are meant to hold in the hypersurface.\ 
%$\phi(\N)$, 
We start by finding explicit expressions for 
%In order to obtain the commutator $[\kil,\rig]$ we need to compute 
the derivatives $\nabla_{\kil}\rig$ and $\nabla_{\rig}\kil$.\ For $\nabla_{\kil}\rig$, we combine \eqref{nablaoll}, \eqref{nablarig3} and the fact that $\bU(\ovkil,\cdot)=0$ (cf.\ \eqref{Un})
%$\ovkil=\alpha n$ 
to get
\begin{align}
	\label{keyforlater} \nabla_{\gen} \rig & =  \ovgen^b \nabla_{e_b} \rig= \frac{1}{2} \ovgen (\elltwo) \nu+ \ovgen^b \big( \Y_{ba} + \F_{ba} \big) \theta^a 	=  \frac{1}{2} \ovgen (\elltwo) \nu+ \ovgen^b \big(  \nablao_{b}\ell_a+\Y_{ba}   \big) \theta^a. 
\end{align}
%after using  in the last step.\ 
The calculation of $\nabla_{\rig} \gen$, based on the expression $g^{\mu\nu}= e_c^{\mu} \theta^c{}^{\nu} + \rig^{\mu} \nu^\nu$ (cf.\  \eqref{gup}), is as follows: 
%\begin{align}
%	\rig^{\mu}\nabla_{\mu} \gen^{\beta} & = g^{\alpha\beta} \rig^{\mu} 
%	\nabla_{\mu} \gen_{\alpha}  \nonumber 
%	=\left ( e_a^{\alpha} \theta^a{}^{\beta} + \rig^{\alpha} \nu^{\beta}\right ) \rig^{\mu}\nabla_{\mu} \gen_{\alpha} \nonumber 
%	%\\
%	%& 
%	= \theta^a{}^{\beta}e_{a}^{\alpha} \rig^{\mu} \nabla_{\mu} \gen_{\alpha}  +
%	\frac{1}{2} \rig^{\alpha} \rig^{\mu} \left ( \nabla_{\mu} \gen_{\alpha}+ \nabla_{\alpha} \gen_{\mu} \right )  \nu^{\beta} \nonumber \\
%	& = \theta^a{}^{\beta}\left ( \Kgen ( \rig, e_a) - \la \rig, \nabla_{e_a} \gen \rag\right ) + \frac{1}{2} \Kgen (\rig,\rig) \nu^{\beta}, \label{inter}
%\end{align}
\begin{align}
	\rig^{\mu}\nabla_{\mu} \gen^{\beta} & = g^{\alpha\beta} \rig^{\mu} 
	\nabla_{\mu} \gen_{\alpha}  \nonumber 
	=\left ( e_a^{\alpha} \theta^a{}^{\beta} + \rig^{\alpha} \nu^{\beta}\right ) \rig^{\mu}\nabla_{\mu} \gen_{\alpha} \nonumber 
	%\\
	%& 
	= \theta^a{}^{\beta}e_{a}^{\alpha} \rig^{\mu} \nabla_{\mu} \gen_{\alpha}  +
	\frac{1}{2} \rig^{\alpha} \rig^{\mu} \left ( \nabla_{\mu} \gen_{\alpha}+ \nabla_{\alpha} \gen_{\mu} \right )  \nu^{\beta} \nonumber \\
	& = \theta^a{}^{\beta}\big( \qone_a  - g( \rig, \nabla_{e_a} \gen)\big) + \frac{\p}{2} \nu^{\beta}, \label{inter}
\end{align}
where in the last step we used $\Kgen_{\alpha\beta} = 2\nabla_{(\alpha} \gen_{\beta)}$ twice.\ We elaborate the second term using 
%\eqref{prod}, 
\eqref{nablarig3}:
%and the fact that $\la \rig,\gen \rag \eqN  \ellc(\ovgen)$, %(cf.\ \eqref{decometa}), 
%which yields
\begin{align}
	g( \rig, \nabla_{e_a} \gen  ) & =\nabla_{e_a}\big( g (\rig, \gen )\big) - g( \nabla_{e_a} \rig, \gen)
	=  \nablao_a (  \ell_b \ovgen^b )  - \left ( \Y_{ab} + \F_{ab} \right)\ovgen^b  = \ell_b \nablao_a \ovgen^b - \Y_{ab} \ovgen^b, \label{extra:eq:2146351} 
\end{align}
where in the last step we inserted \eqref{nablaoll} (recall $\bU(\ovkil,\cdot)=0$).\  
Substituting \eqref{extra:eq:2146351} into
\eqref{inter} gives %\eqref{Iden2Prop}.
%We start by proving the identity
%\begin{align}
%	%&\phi ^{\star} ( \Kgen )_{ab} =\spc   \pounds_{\ovgen} \gamma_{ab}, \label{Iden1Prop} \\
%	&\nabla_{\rig} \gen = \spc 
%	\frac{1}{2} \Kgen (\rig,\rig) \nu+\Big( \Kgen ( \rig, e_a)- \ell_b \nablao_a \ovgen^b + \Y_{ab} \ovgen^b  \Big)  \theta^a.
%	\label{Iden2Prop} 
%\end{align}
\begin{align}
	%&\phi ^{\star} ( \Kgen )_{ab} =\spc   \pounds_{\ovgen} \gamma_{ab}, \label{Iden1Prop} \\
	&\nabla_{\rig} \gen = \spc 
	\frac{\p}{2}  \nu+\Big( \qone_a- \ell_b \nablao_a \ovgen^b + \Y_{ab} \ovgen^b  \Big)  \theta^a.
	\label{Iden2Prop} 
\end{align}
The right part of \eqref{commutator} follows from inserting \eqref{keyforlater} and \eqref{Iden2Prop} into $[\kil,\rig]=\nabla_{\kil}\rig-\nabla_{\rig}\kil$ and using that $(\pounds_{\ovgen} \ellc)_{a}=\ovgen^b\nablao_b \ell_{a}+\ell_{b}\nablao_{a}\ovgen^b$.\ The left part of \eqref{commutator} is then a direct consequence of \eqref{normal}-\eqref{omegas}, definitions \eqref{Aeta}-\eqref{Xeta} and the fact that $\bqone(n)=\w$.\ %into the left part (recall that). 

The identity \eqref{poundsY} 
is based on the commutation property $[\pounds_X, \pounds_W] =\pounds_{[X,W]}$.\ Applying this to the ambient metric $g$ with vectors $\gen$ and $\rig$, one obtains
\begin{align}
	\pounds_{\gen} \pounds_{\rig} g = \pounds_{[\gen,\rig]} g + \pounds_{\rig} \Kgen.\label{basic}
\end{align}
%\miguel{\\ I have removed\\ the sentence \\ ``
This expression requires that the rigging is  extended off $\phi(\N)$, but the final result is independent of the extension, as one can easily check from \eqref{poundsY}.\ 
%By Remark \ref{comm:beta:zero} (where $\beta=0$), we know that 
%%the commutator 
%$[\gen, \rig] = A_{\gen} \rig + \phi _{\star} X_{\gen}$.  
We now take the pull-back of \eqref{basic} to  $\mathcal{N}$.\ For the left-hand side we use identity \eqref{LiePull} and the definition \eqref{YtensorEmbDef} of $\bY$.\ For the first term in the right-hand side we simply make use of the already proven result
%apply identity \eqref{Iden1Prop} with $\gen$ replaced by the commutator $[\gen,\rig]$,  so that $\beta$ and $\ovgen$ get replaced by  $A_{\gen}$ and $X_{\gen}$ respectively. 
$[\gen, \rig] = A_{\gen} \rig + \phi _{\star} X_{\gen}$ and find
\begin{align}
	\nonumber  \phi ^{\star} ( \pounds_{[\gen,\rig]} g)&  =\phi ^{\star} \left ( \pounds_{A_{\kil}  \zeta} g + \pounds_{\phi _{\star} X_{\eta} } g \right )= \phi ^{\star} \left ( A_{\kil}  \pounds_{\zeta} g + dA_{\kil}  \otimes \bm{\zeta}+ \bm{\zeta} \otimes dA_{\kil}  + \pounds_{\phi _{\star} X_{\eta} } g \right ) \\
	\label{pullbackdeftensor} & = 2 A_{\kil}  \bY + \ellc \otimes \tdo A_{\kil} + \tdo A_{\kil}  \otimes \ellc + \pounds_{X_{\eta}} \gamma,
\end{align}
where in the last equality we used \eqref{LiePull} and \eqref{emhd}-\eqref{YtensorEmbDef}.\ Identity \eqref{poundsY} then follows at once.\ 
%\end{proof}

%\begin{proof}
Finally, to obtain \eqref{final2} we first notice that, as a consequence of \eqref{poundsellc}, $\ovkil$ satisfies 
	\begin{align}
		\label{sigmathing1}
		\pounds_{\ovkil}\ellc&=\alpha\pounds_{ n}\ellc+d\alpha=2\alpha\bsone+d\alpha. %\quad\forall\ovkil=\alpha n,\alpha\in\Fcal(\N).
	\end{align}	
%We now establish two preliminary expressions.\
We now define the vectors $\cv\defi P(\pounds_{\ovkil}\bs{\ell},\cdot)$ and $\mathcal{W}\defi -P(\bqone,\cdot)-\frac{\p}{2} n$ in $\N$ 
and %.\ We want to 
prove the 
%two 
identities
	%that the following two expressions hold:
	\begin{align}
		\label{lieVgammaprop1}\dfrac{1}{2}\lp\pounds_{\cv}\gamma\rp_{bd}=&\spc\nablao_{b}\nablao_{d}\alpha+2\alpha\nablao_{(b}\sone_{d)}+2\sone_{(b}\nablao_{d)}\alpha-\ell_{(b}\nablao_{d)} n(\alpha)\quad\text{and}\\
		\label{lieWgammaprop1}\dfrac{1}{2}\lp\pounds_{\mathcal{W}}\gamma\rp_{bd}=&\spc-\nablao_{(b}\qone_{d)} +\ell_{(b}\nablao_{d)}\w-\dfrac{\p}{2}\U_{bd}.
	\end{align}
	To establish \eqref{lieVgammaprop1} we particularize \eqref{deromegagamma} for $\bs{\sigma}=\pounds_{\ovkil}\bs{\ell} $ and then use \eqref{sigmathing1} to compute
	\begin{align}
		\label{middleeq25}\nablao_{(b}(\pounds_{\ovkil}\bs{\ell})_{d)}&=\nablao_{b}\nablao_{d}\alpha+2\alpha\nablao_{(b}\sone_{d)}+2\sone_{(b}\nablao_{d)}\alpha \quad\text{and}\quad (\pounds_{\ovkil}\ellc)(n)=n(\alpha).
		%\label{pounds_Tellc(n)}(\pounds_{\ovkil}\ellc)(n)&=n(\alpha).
	\end{align}  
	For \eqref{lieWgammaprop1}, we use \eqref{liefngammaANDU} and find 
	\begin{equation}
		\lp\pounds_{\mathcal{W}}\gamma\rp_{cd}=-\big(\pounds_{P(\bqone,\cdot)}\gamma\big)_{cd}-\p\U_{cd}.
	\end{equation}
	Expression \eqref{lieWgammaprop1} follows by particularizing \eqref{deromegagamma} to $\bs{\sigma}=\bqone$ and using $\bqone(n)=\w$.\  
	%
	%
	%Once we have \eqref{lieWgammaprop1} and \eqref{middleeq25}, the identity \eqref{final2} 
	%%and its corollary \eqref{flie_Nr} 
	%will be a consequence of Proposition \ref{propLieetaY}. Therefore we need to compute the function $A_{\eta}$ and the vector $X_{\eta}$. 
	%
	%
	Once we have \eqref{lieVgammaprop1}-\eqref{middleeq25}, 
	we can compute 
	%the terms 
	$A_{\kil}$, $\ellc\otimes_s dA_{\eta}$ and $\frac{1}{2}\pounds_{X_{\eta}}\gamma$ explicitly and thus rewrite \eqref{poundsY} as  \eqref{final2}.\
%	%-\eqref{lieWgammaprop1}
%	the identity 
%	%\eqref{Lie:id:sigma} 
%	\eqref{final2}
%	will be a consequence of 
%	%Theorem \ref{propLieetaY}.\ 
%	\eqref{poundsY}.\ We therefore need to compute the terms $\ellc\otimes_s dA_{\eta}$ and $\frac{1}{2}\pounds_{X_{\eta}}\gamma$ in \eqref{poundsY}.\ 
	Specifically, the second expression in \eqref{middleeq25} entails
	\begin{align}
		\label{A_Tfinal1} A_{\eta} & =n(\alpha)-\w\qquad\Longrightarrow\qquad (\ellc\otimes_s dA_{\eta})_{bd}=\ell_{(b} \Big ( \nablao_{d)}n(\alpha)-\nablao_{d)}\w \Big),
	\end{align}
	while 
	%Now, using 
	the definitions of $\cv$, $\mathcal{W}$  allow us to rewrite $X_{\kil}$ as 
	%,  can be rewritten as 
	$X_{\eta} = \mathcal{W}+\frac{1}{2}  \ovkil(\elltwo) n+ \cv $ (cf.\ \eqref{Xeta}).\ Thus, 
	%and hence $\frac{1}{2}\pounds_{X_{\eta}}\gamma$ reads
	\begin{align}
		\label{lieX_TgammafinalANDXTfinal1} 
		%	X_{\eta} &= \mathcal{W}+\frac{1}{2}  \ovkil(\elltwo) n+ \cv \quad\Longrightarrow\quad 
		\dfrac{1}{2}\pounds_{X_{\eta}}\gamma=\dfrac{1}{2}\pounds_{\mathcal{W}}\gamma+\dfrac{1}{2}\ovkil(\elltwo)\bU+\dfrac{1}{2}\pounds_{\mathcal{V}}\gamma,
	\end{align}
	where the implication is a consequence of \eqref{liefngammaANDU}.\ 
	%
	%
	%First, %ly, the right part of 
	%the second expression in \eqref{middleeq25} entails
	%\begin{align}
	%\label{A_Tfinal1} A_{\eta} & =n(\alpha)-\w\qquad\Longrightarrow\qquad (\ellc\otimes_s dA_{\eta})_{bd}=\ell_{(b} \left ( \nablao_{d)}n(\alpha)-\nablao_{d)}\w \right ).
	%\end{align}
	%Secondly, %and its corresponding image $\{e_a\defi \phi_{\star}(\hate_a)\}$, 
	%substituting \eqref{omegas} into $\Kkil \left (\rig, n^a \rig- 2 \theta^a \right )$ leads to
	%\begin{equation}
	%\frac{1}{2} \Kkil \left (\rig, n^a \rig- 2 \theta^a \right )\hate_a=\lp -P^{ab}\qone_b-\frac{\p}{2} n^a\rp \hate_a=\mathcal{W}.
	%\end{equation}
	%Consequently,
	%\begin{align}
	%\label{lieX_TgammafinalANDXTfinal1} X_{\eta} &= \mathcal{W}+\frac{1}{2}  \ovkil(\elltwo) n+ \cv \quad\Longrightarrow\quad \dfrac{1}{2}\pounds_{X_{\eta}}\gamma=\dfrac{1}{2}\pounds_{\mathcal{W}}\gamma+\dfrac{1}{2}\ovkil(\elltwo)\bU+\dfrac{1}{2}\pounds_{\mathcal{V}}\gamma,
	%\end{align}
	%where the implication is a consequence of \eqref{liefngammaANDU}. 
	Inserting \eqref{A_Tfinal1} and \eqref{lieX_TgammafinalANDXTfinal1} into \eqref{poundsY} gives \eqref{final2}
%	\begin{align}
%		%\label{final2} 
%		\nn \pounds_{\ovkil} \Y_{bd} =&\spc \nablao_{b}\nablao_{d}\alpha+2\alpha\nablao_{(b}\sone_{d)}+2\sone_{(b}(\nablao_{d)}\alpha)+\big(n(\alpha)-\w\big) \Y_{bd} -\nablao_{(b}\qone_{d)}  +\dfrac{1}{2}\big( \ovkil(\elltwo)-\p\big)\U_{bd} + \calY_{bd}
%		%	\\
%		%	\label{flie_Nr} \hspace{-0.15cm}\alpha\pounds_nr_d=&\spc 
%		%	n^b\Big(\nablao_{b}\nablao_{d}\alpha+ 2\alpha\nablao_{(b}\sone_{d)}+2\sone_{(b}(\nablao_{d)}\alpha)-\w\Y_{bd}  -\nablao_{(b}\qone_{d)}+ \calY_{bd}\Big) +\Q\nablao_{d} \alpha.
%	\end{align} 
	%\eqref{final2} 
after substituting \eqref{lieVgammaprop1}-\eqref{lieWgammaprop1}.\ 
	%
	%Finally, \eqref{flie_Nr} follows by contracting the decomposition (recall \eqref{defY(n,.)andQ})
	%\begin{equation}
	%\label{lieetalien}\pounds_{\ovkil}\bY=\tdo \alpha\otimes \bs{r}+\bs{r}\otimes \tdo \alpha+\alpha\pounds_n\bY
	%\end{equation}
	%with $n$. This yields $n^b\pounds_{\ovkil} \Y_{bd}=-\Q\nablao_{d} \alpha+n(\alpha)r_{d} +\alpha\pounds_nr_d$
	%and hence \eqref{flie_Nr} after inserting \eqref{final2} and using $\U_{ab}n^a=0$.
\end{proof}
As mentioned before, \eqref{poundsY}-\eqref{final2} relate the deformation tensor $\Kgen$ of $\eta$ with the variation of the tensor $\bY$ along the degenerate direction defined by $\ovkil$.\ This gives a   transport equation for $\bY$ along the generators of $\N$, sourced by the ambient properties of $\eta$ codified in specific quantities constructed from $\Kgen$.\ When this tensor is known (e.g.\ if $\kil$ is a Killing vector or a less restrictive type of symmetry such as a conformal Killing vector or a homothety) then \eqref{poundsY}-\eqref{final2} provide useful information on $\pounds_{\ovkil}\bY$.\ 

As already said, identities \eqref{poundsY}-\eqref{final2} have many potential applications.\ 
In this paper we focus on its role to define and study abstract Killing horizons, but they are also 
the basis to obtain a completely general ``master equation" which extends the well-known \textit{near horizon equation} of isolated horizons (see e.g.\ \cite{ashtekar2002geometry}) as well as the so-called \textit{master equation} of multiple Killing horizons (see e.g.\ \cite{mars2018multiple}).\ These results will be presented in 
%elsewhere 
\cite{manzano2023master}.\ It is also worth mentioning that the analysis of this paper has also played a key role in \cite{mars2024transverse},  where uniqueness of spacetimes that are $\Lambda$-vacuum to infinite order is established from data on a Killing horizon.\ We emphasize that \eqref{poundsY}-\eqref{final2} are valid for completely general null hypersurfaces (with no topological restrictions) admitting a tangent null (gauge-invariant) vector.\ In particular, this vector is allowed to vanish on any subset of $\N$.

\section{The Lie derivative of the Levi-Civita connection}\label{app:rig(sigma)}
%$\Sigmakil\defi \pounds_{\eta}\nabla$

Theorem \ref{propLieetaY} involves a vector
$\kil$ defined in a neighbourhood of $\phi(\N)$.\ As already stressed, we think
of $\kil$ as a privileged vector field of which one has  information.\ So far we have encoded this information by means of its deformation tensor $\Kkil$.\ However, it may happen that the available information is at the level of
the Lie derivative of the Levi-Civita connection of $g$, which is a tensor that we denote by $\Sigmakil$, instead as in the Lie derivative of the metric.\ Obviously, the two are closely related, but studying
%$\Sigmakil\defi \pounds_{\eta}\nabla$ 
$\Sigmakil$ 
can give an interesting new perspective.\ In this section we study the interplay between
$\Sigmakil$ and the null hypersurface data set.\ In particular we intend to capture the essence of
$\Sigmakil$ at the abstract level.\ As we will see, 
%Among other things, 
this  
%will allow us to introduce a tensor with very simple gauge behaviour, and 
will pave the way to introducing and studying various types of abstract horizons in Section
\ref{sec:HD-KH0and1}.

The organization of the section is as follows.\ We start by defining 
%the tensor Lie derivative of the Levi-Civita connection of $g$ 
$\Sigmakil$ and by summarizing some of its well-known properties.\ Then, we derive an explicit expression for the pull-back $\phi^{\star}\lp g(W,\Sigmakil)\rp$, where    
$W$ is any vector field along $\phi(\N)$, not necessarily tangent.\ 
%As we will see, the pull-back $\phi^{\star}\lp g(W,\Sigmakil)\rp$ is closely related to the tensor $\calP$ introduced in \eqref{calPab:expression}.\  
This will allow us to find a new tensor on $\N$ which codifies information about the deformation tensor $\Kkil$ and its first transversal derivative on $\phi(\N)$. 
%Then, we shall obtain the gauge transformation of the pull-back $\phi^{\star}\lp \bs{\rig}\Sigmakil\rp$, first for general null hypersurface data and then for the case when $\bU=0$. This process will reveal, in the particular case of $\bU=0$,  a new $\G_1$-invariant tensor field on $\N$ and a fully gauge-invariant covector on $\N$. 
Finally, we provide an explicit expression for the vector field  $\Sigmakil(\phi_{\star}\vecY,\phi_{\star}\vecZ)$, where  %for any 
%pair of vector fields 
%two vectors 
$\vecY,\vecZ\in\Gamma(T\N)$.

%In many situation of interest,
%null hypersurfaces embedded in spacetimes $(\M,g)$ arise as hypersurfaces where a certain vector field
%$\kil$ becomes null and tangent.\ In such cases, one can expect the tensor Lie derivative of the Levi-Civita connection $\nabla$ of $g$ along $\kil$, namely
%
%
%, to play a special role.\ This is because, as previously mentioned, $\Sigmakil$ is related to the Riemann tensor of $(\M,g)$ (see \eqref{Sigma:Curvature}) and encodes information about $\kil$.\ As we shall see, $\Sigmakil$ is closely related to first derivatives of the deformation tensor $\Kkil$ of $\kil$, so this tensor is automatically relevant in any circumstance where information on the Lie derivative $\pounds_{\eta} g$ is known. Following the general idea behind hypersurface data, we would like

%\subsection{Tensor Lie derivative of the connection $\nablao$ along $n$}\label{sec:sigo}

The difference of connections is a tensor, so the Lie derivative of a connection along a vector field also defines a tensor.\  
This tensor carries useful geometric information, in particular concerning the curvature of the connection.\ A standard reference for the Lie derivative of a general connection is 
\cite{yano1957Lie}.\ 
%, to which we refer for further details.\ 
Here, as mentioned before, we shall restrict ourselves to the Lie derivative of the Levi-Civita connection $\nabla$ of $g$ along $\kil$.\ This $(1,2)$-tensor field
%, which we denote by 
$\Sigmakil$ 
%
%
%
%We start by recalling some standard facts about 
%the Lie derivative of a connection.\ %For further details we refer to 
%
%%. In particular, it is well-known that the tensor Lie derivative of a connection 
%%
%%
%%
%%
%%A metric hypersurface data set (null or not)
%%defines a privileged vector field $n$ on $\N$. Thus, it makes sense 
%%% This is even more so when $\N$ consists of null points exclusively, as $n$ spans the radical of $\gamma$. In such circumstances, the direction of $n$ (although not the scale) remains invariant under gauge transformations (cf.\ \eqref{gaugen}). It therefore makes sense 
%%to study  the 
%%Lie derivative of $\nablao$ along $n$. 
%%In this section we obtain the explicit form of this tensor in the case of general metric hypersurface data $\metdata$. % for the connection $\nablao$ and the vector field $n$.
%%%
%%%in the case of the $\nablao$ connection and the vector field $n$. 
%%%
%%The result will play a key role in Sections \ref{app:rig(sigma)} and \ref{sec:HD-KH0and1}.
%%%(see also \cite{manzano2024master}), and in particular in the context of abstract Killing horizons.
%%
%%
%%
%
%
%
%Given any smooth manifold $\M$ endowed with an affine connection $D$ and
%a vector field $Z$, the Lie derivative of $D$ along $Z$, denoted by $\SigmaZ$, 
is defined by 
%defined by
%\begin{align}
%	\SigmaZ (X,W) \defi  \pounds_{Z}D_XW-D_X\pounds_{Z}W-D_{\pounds_ZX}W, \quad
%	\quad \forall X,W\in\Gamma(T\M),
%	\label{defderD}
%\end{align}
\begin{align}
	\Sigmakil (X,W) \defi  \pounds_{\kil}\nabla_XW-\nabla_X\pounds_{\kil}W-\nabla_{\pounds_\kil X}W, \quad
	\quad \forall X,W\in\Gamma(T\M).
	\label{defderD}
\end{align}
An alternative notation for $\Sigmakil$ that we shall also use is
%In the following, we will frequently use the notation 
%which can also be written simply as 
$\Sigmakil = \pounds_{\kil}\nabla$.\
%This tensor only depends on $Z$ and on 
%$D$, so we can also use
%the shorthand notation 
%to refer to it.\ 
It is easy to prove that $\Sigmakil$ is symmetric in its covariant indices because $\nabla$ is  torsion-free.

A basic use of $\Sigmakil$ is to compute the commutator between Lie derivatives
and covariant derivatives.\ The general identity for a general $(0,p )$-tensor $T$ reads
%The corresponding expression for covariant tensors  is, in abstract index notation,
\begin{align}
	\pounds_{\kil} \nabla_{\alpha} T_{\beta_1 \cdots \beta_p } = \nabla_{\alpha} \pounds_{\kil} T_{\beta_1 \cdots \beta_p } - \sum_{\mathfrak{i}=1}^p  \Sigmakil^{\mu}{} _{\alpha \beta_\mathfrak{i}} T_{\beta_1 \cdots \beta_{\mathfrak{i}-1} \mu \beta_{\mathfrak{i}+1} \cdots \beta_p }. \label{commutation}
\end{align}
%$\mathfrak{ABCDEFGHIJKLMNOPQRSTUWXYZ}$
Particularly relevant 
%for this paper 
is the 
relation between 
$\Sigmakil$ and the Riemann tensor ${R}{}^{\mu}{}_{\beta\nu\alpha}$ of $g$, %(cf.\ \eqref{curvoperator}). 
namely \cite{yano1957Lie}
%\begin{lemma}\label{lem:Sigma:Curvature}
%	\textup{\cite{yano1957Lie}} Let $\M$ be a manifold endowed with a torsion-free connection $D$.\ Then, 
%	\begin{align}
%		\label{Sigma:Curvature} 
%		\quad
%		(\SigmaZ)^{\mu}{}_{\alpha\beta}&=D_{\alpha} D_{\beta}Z^{\mu}+{R^{D}}\hspace{0.03cm}{}^{\mu}{}_{\beta\nu\alpha}Z^{\nu}.
%	\end{align}
%\end{lemma}
\begin{align}
	\label{Sigma:Curvature} 
	\quad
	\Sigmakil^{\mu}{}_{\alpha\beta}&=\nabla_{\alpha} \nabla_{\beta}\kil^{\mu}+{R}{}^{\mu}{}_{\beta\nu\alpha}\kil^{\nu}.
\end{align}
We now write 
%recall 
the 
%well-known 
relation between $\Sigmakil$ 
and 
%to 
the deformation tensor of $\eta$.\ Particularizing \eqref{commutation} for $T_{\alpha\beta}=g_{\alpha\beta}$ and performing a standard cyclic permutation of indices yields  %\tcr{[ADD-A-CITATION]}
\begin{align}
\label{Sigma:eta:def:tensor}\Sigmakil^{\lambda}{} _{\alpha \beta} =\frac{1}{2}g^{\mu\lambda}\Big( \nabla_{\alpha} \Kkil_{\beta\mu}+\nabla_{\beta}\Kkil_{\mu\alpha} -\nabla_{\mu} \Kkil_{\alpha\beta}\Big).
\end{align}
This expression 
%the result \eqref{Sigma:eta:def:tensor} 
allows us to 
%With this we can 
find  the pull-back $\phi^{\star}\lp g(W,\Sigmakil)\rp$ in terms of hypersurface data and the components of  $\Kkil$ and $\pounds_{\rig} \Kkil$ introduced in \eqref{defwpqb}-\eqref{defwpqb:Y}.
\begin{lemma}\label{lem:rig(sigma)}
  Let $\hypdata$ be null hypersurface data $\{\phi,\rig\}$-embedded in a spacetime $(\M,g)$.\ Consider any 
  vector field $\eta$, defined on a neighbourhood of $\phi(\N)$, which is null and tangent at  $\phi(\N)$.\ 
%  Define 
%  % Let $\Kkil\defi \pounds_{\eta}g$ be the deformation tensor of $\eta$ and
%  $\{\w,\p,\bqone,\calY\}$ as in 
%  %the corresponding tensors
%  %defined by
%  \eqref{defwpqb}-\eqref{defwpqb:Y}. 
  Then, for any vector field $W$ along $\phi(\N)$ (not necessarily tangent), it holds %the following identity holds:
%  \begin{align}
%\nn \phi^{\star}\lp g(W,\Sigmakil)\rp_{ab}
%=&\spc \beta\lp \nablao_{(a}\qone_{b)}+\w\Y_{ab} +\p \U_{ab}-\calY_{ab}\rp+\ov{W}{}^c\Big(  (\nablao_a\alpha) \U_{bc}+(\nablao_b\alpha) \U_{ac}\\
%\label{W:Sigma}&+ \U_{ab}( \qone_c-\nablao_c\alpha )+\alpha( \nablao_a\U_{bc}+ \nablao_b\U_{ca}- \nablao_c\U_{ab})\Big).
%  \end{align}
\begin{align}
	\label{W:Sigma} \phi^{\star}\lp g(W,\Sigmakil)\rp_{ab}
	= \beta\calP_{ab}+\ov{W}{}^c\Big(  2(\nablao_{(a}\alpha) \U_{b)c}+ \U_{ab}( \qone_c-\nablao_c\alpha )+\alpha( 2\nablao_{(a}\U_{b)c}- \nablao_c\U_{ab})\Big),\\
	\calP_{ab}
	\defi  \nablao_{(a}\qone_{b)}+\w\Y_{ab}+\p\U_{ab}-\calY_{ab},\qquad\qquad\qquad\qquad\qquad\qquad\qquad\qquad\spc\spc \label{calPab:expression}.
\end{align}
where $\alpha,\beta\in\Fcal(\N)$ and  $\ov{W}\in\Gamma(T\N)$ are defined by $\eta\vert_{\phi(\N)}=\alpha\phi_{\star}n$ and
  %are defined 
  by the decomposition
  \begin{equation}
\label{decom:W}W=\beta \rig+\phi_{\star}\ov{W}.
\end{equation}
%and $\calP$ is a symmetric $(0,2)$-tensor field defined by 
%\begin{align}
%	\calP_{ab}
%	\defi \nablao_{(a}\qone_{b)}+\w\Y_{ab}+\p\U_{ab}-\calY_{ab} \label{calPab:expression}.
%\end{align}
\end{lemma}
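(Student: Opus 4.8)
The plan is to start from the curvature-free identity \eqref{Sigma:eta:def:tensor}, which expresses $\Sigmakil$ purely through first derivatives of the deformation tensor $\Kkil$ (I would avoid \eqref{Sigma:Curvature} precisely because the curvature terms are harder to control at $\phi(\N)$). Lowering the upper index against $g(W,\cdot)$ and using that $\Kkil$ is symmetric gives, for an arbitrary ambient $W$,
\begin{align}
g(W,\Sigmakil)_{\alpha\beta}=\tfrac12 W^{\mu}\big(\nabla_{\alpha}\Kkil_{\beta\mu}+\nabla_{\beta}\Kkil_{\alpha\mu}-\nabla_{\mu}\Kkil_{\alpha\beta}\big),\nonumber
\end{align}
which is manifestly symmetric in $\alpha,\beta$. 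I then pull this back to $\N$ by evaluating $\alpha=a$, $\beta=b$ along $e_a=\phi_\star\hate_a$ and inserting the decomposition \eqref{decom:W}, $W=\beta\rig+\phi_\star\ov W$.

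For the two terms of the form $W^{\mu}\nabla_a\Kkil_{b\mu}$ I would apply the Leibniz rule, $W^{\mu}\nabla_a\Kkil_{b\mu}=e_a(\Kkil(e_b,W))-\Kkil(\nabla_{e_a}e_b,W)-\Kkil(e_b,\nabla_{e_a}W)$, and for the remaining $W^{\mu}\nabla_{\mu}\Kkil_{ab}$ I would peel off the normal piece and trade $\rig^{\mu}\nabla_{\mu}\Kkil_{ab}$ for the Lie derivative $\pounds_{\rig}\Kkil$, so that the transverse derivative of $\Kkil$ enters only through $\phi^{\star}(\pounds_{\rig}\Kkil)=2\bcalY$ (cf.\ \eqref{defwpqb:Y}). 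All other ingredients come from hypersurface data: the tangential pull-back $\phi^{\star}\Kkil=2\alpha\bU$ (cf.\ \eqref{pullbackdeftensornull}); the transverse components $\Kkil(\rig,\rig)=\p$, $\Kkil(\rig,e_a)=\qone_a$, $\Kkil(\rig,\nu)=\qone(n)=\w$, together with $\Kkil(\nu,e_a)=2\alpha\U_{da}n^d=0$ by \eqref{Un}; and the connection identities $\nabla_{e_a}e_b=\phi_\star\nablao_a\hate_b-\Y_{ab}\nu-\U_{ab}\rig$ (cf.\ \eqref{nablaXYnablao}) and $\nabla_{e_a}\rig=\tfrac12(\nablao_a\elltwo)\nu+(\Y_{ad}+\F_{ad})\theta^d$ (cf.\ \eqref{nablarig3}), with $\theta^d$ and $\nu$ expanded via \eqref{normal}-\eqref{omegas}.

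A first structural check is that the derivatives of the scalar $\beta$ cancel: each term $W^{\mu}\nabla_a\Kkil_{b\mu}$ produces $+(e_a\beta)\qone_b$ from $e_a(\Kkil(e_b,W))$ and $-(e_a\beta)\qone_b$ from $-\Kkil(e_b,\nabla_{e_a}W)$, while $W^{\mu}\nabla_{\mu}\Kkil_{ab}$ carries no derivative of $W$. Hence the result is algebraic in $\beta$ and decouples into a $\beta$-part and an $\ov W$-part. The $\ov W$-part is routine: combining $e_a(2\alpha\U_{bc}\ov W^c)$ with the $\nablao$-corrections from the Christoffel terms rebuilds the covariant derivatives $\nablao_a(\alpha\U_{bc})$ and, after symmetrization against the $-\tfrac12 W^\mu\nabla_\mu\Kkil_{ab}$ contribution, reproduces exactly the bracket in \eqref{W:Sigma}. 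For the $\beta$-part one uses $e_a(\qone_b)-\Gammao^d_{ab}\qone_d=\nablao_a\qone_b$ to see that the symmetrized $W^{\mu}\nabla_{(a}\Kkil_{b)\mu}$ piece equals $\nablao_{(a}\qone_{b)}+\w\Y_{ab}+\p\U_{ab}$ plus two spurious contractions coming from $\nabla_{e_a}\rig$, while $-\tfrac12\rig^{\mu}\nabla_{\mu}\Kkil_{ab}$ contributes $-\calY_{ab}$ together with those same two contractions with opposite sign.

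The main obstacle is precisely this last cancellation in the $\beta$-part: one must verify that the terms proportional to $\bsYn=\bY(n,\cdot)$ and $\bsone=\bF(n,\cdot)$ (arising from $(\Y_{ad}+\F_{ad})n^d$), as well as the contractions $\alpha(\Y_{ad}+\F_{ad})P^{de}\U_{be}$ symmetrized in $a,b$, cancel identically between the $\nabla_{(a}\Kkil_{b)\mu}$ terms and the $-\tfrac12\pounds_{\rig}\Kkil$ term. Once this is checked, the $\beta$-coefficient collapses to $\nablao_{(a}\qone_{b)}+\w\Y_{ab}+\p\U_{ab}-\calY_{ab}=\calP_{ab}$, i.e.\ \eqref{calPab:expression}, and assembling the $\beta$- and $\ov W$-parts yields \eqref{W:Sigma}.
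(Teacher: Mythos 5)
Your proposal is correct, and it follows essentially the same route as the paper: both start from the curvature-free identity \eqref{Sigma:eta:def:tensor}, split $W$ according to \eqref{decom:W}, treat the tangential part with the Leibniz rule together with $\phi^{\star}\Kkil=2\alpha\bU$, $\Kkil(\nu,\phi_{\star}X)=0$ and \eqref{nablaXYnablao}, and trade $\rig^{\mu}\nabla_{\mu}\Kkil_{\alpha\beta}$ for $\pounds_{\rig}\Kkil_{\alpha\beta}-\Kkil_{\mu\beta}\nabla_{\alpha}\rig^{\mu}-\Kkil_{\mu\alpha}\nabla_{\beta}\rig^{\mu}$ so that the transverse derivative of $\Kkil$ enters only through $\bcalY$. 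The one place where your bookkeeping differs is the rigging part, and there what you call ``the main obstacle'' is in fact no obstacle at all: by the symmetry of $\Kkil$, the terms $-\Kkil(e_{(b},\nabla_{e_{a)}}\rig)$ produced by your Leibniz expansion of $\rig^{\mu}\nabla_{(a}\Kkil_{b)\mu}$ are exactly the negatives of the terms $+\Kkil(\nabla_{e_{(a}}\rig,e_{b)})$ released by the Lie-derivative trade in $-\tfrac12\rig^{\mu}\nabla_{\mu}\Kkil_{ab}$, so they cancel pairwise \emph{before} $\nabla_{e_a}\rig$ is ever expanded. Consequently the contractions involving $\bs{r}$, $\bsone$ and $\alpha(\Y_{ad}+\F_{ad})P^{de}\U_{be}$ that you propose to track never need to be computed, and neither \eqref{nablarig3} nor \eqref{omegas} is required in this part; this is precisely how the paper organizes the computation, by absorbing the $\nabla\rig$ terms into total derivatives $\nabla_{e_a}\big(\Kkil(e_b,\rig)\big)$, after which only $\hate_a(\qone_b)$, the term $-\Kkil(\nabla_{e_a}e_b,\rig)=-\Gammao{}^d_{ab}\qone_d+\Y_{ab}\w+\U_{ab}\p$ from \eqref{nablaXYnablao}, and $-\calY_{ab}$ survive, giving $\calP_{ab}$ in one line. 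Similarly, your explicit check that the $d\beta$ terms cancel, while correct, is superfluous: $g(W,\Sigmakil)$ is tensorial (pointwise linear) in $W$, so $\beta$ and $\ov{W}{}^c$ can be split off from the outset, which is the paper's first step.
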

\begin{proof}
Assume the notation in  Setup \ref{setup:basis:e_a}. From the decomposition \eqref{decom:W}, we have
\begin{align}
\label{eq:inter:sigma}\phi^{\star} (g(W,\Sigmakil))_{ab}
= \beta \rig_{\lambda}\Sigmakil^{\lambda}{} _{\alpha \beta}e^{\alpha}_ae^{\beta}_b + \ov{W}{}^c (e_c)_{\lambda}\Sigmakil^{\lambda}{} _{\alpha \beta}e^{\alpha}_ae^{\beta}_b.
\end{align}
Therefore, to prove the lemma it suffices to compute the contractions
%(recall \eqref{Sigma:eta:def:tensor})
\begin{align}
\label{Sigma:eta:cont1}(e_c)_{\lambda}\Sigmakil^{\lambda}{} _{\alpha \beta}e^{\alpha}_ae^{\beta}_b&=\frac{1}{2}e^{\alpha}_ae^{\beta}_b e_c^{\mu}\lp \nabla_{\alpha} \Kkil_{\beta\mu}+\nabla_{\beta}\Kkil_{\mu\alpha} -\nabla_{\mu} \Kkil_{\alpha\beta}\rp,\\
\label{Sigma:eta:cont2}\rig_{\lambda}\Sigmakil^{\lambda}{} _{\alpha \beta}e^{\alpha}_ae^{\beta}_b&= \frac{1}{2}e^{\alpha}_ae^{\beta}_b\rig^{\mu}\lp \nabla_{\alpha} \Kkil_{\beta\mu}+\nabla_{\beta}\Kkil_{\mu\alpha} -\nabla_{\mu} \Kkil_{\alpha\beta}\rp .
\end{align}
The first one only requires knowing $e^{\alpha}_ae^{\beta}_be_c^{\mu}\nabla_{\mu} \Kkil_{\alpha\beta} $, namely
\begin{align}
\nn e^{\alpha}_ae^{\beta}_be^{\mu}_c\nabla_{\mu} \Kkil_{\alpha\beta}=&\spc e_c^{\mu}\nabla_{\mu}(e^{\alpha}_ae^{\beta}_b \Kkil_{\alpha\beta})-\Kkil_{\alpha\beta}e^{\beta}_b e_c^{\mu}\nabla_{\mu}e^{\alpha}_a -\Kkil_{\alpha\beta}e^{\alpha}_ae_c^{\mu}\nabla_{\mu}e^{\beta}_b \\
\nn =&\spc \hate_c(2\alpha \U_{ab})-2\alpha\U_{db}\Gammao{}^{d}_{ac}+ \U_{ac}\qone_b -2\alpha\U_{da}\Gammao{}^{d}_{bc}+ \U_{bc}\qone_a \\
\nn =&\spc 2(\nablao_c\alpha) \U_{ab}+2\alpha \nablao_c\U_{ab}+ \U_{ac}\qone_b + \U_{bc}\qone_a,
\end{align}
where we have used \eqref{nablaXYnablao}, definitions \eqref{defwpqb}, \eqref{pullbackdeftensornull} and the fact that $\Kkil(\nu,\phi_{\star}X)= 2\alpha \bU(n,X)=0$ for any $X\in\Gamma(T\N)$. Replacing this in \eqref{Sigma:eta:cont1} and cancelling terms gives
\begin{align}
\label{eSigma} (e_c)_{\lambda}\Sigmakil^{\lambda}{} _{\alpha \beta}e^{\alpha}_ae^{\beta}_b= 2(\nablao_{(a}\alpha) \U_{b)c}+ \U_{ab}( \qone_c-\nablao_c\alpha )+\alpha( 2\nablao_{(a}\U_{b)c}- \nablao_c\U_{ab}) .
\end{align}
To elaborate \eqref{Sigma:eta:cont2}, we first use 
%\eqref{nablaXYnablao} and 
$\rig^{\mu}\nabla_{\mu} \Kkil_{\alpha\beta}=\pounds_{\rig}\Kkil_{\alpha\beta}-\Kkil_{\mu\beta}\nabla_{\alpha}\rig^{\mu}-\Kkil_{\mu\alpha}\nabla_{\beta}\rig^{\mu}$ and find %, which allows us to rewrite
\begin{align}
\nn \rig_{\lambda}\Sigmakil^{\lambda}{} _{\alpha \beta}e^{\alpha}_ae^{\beta}_b=&\spc 
\frac{1}{2}\Big(  e^{\alpha}_a\nabla_{\alpha} (\Kkil_{\beta\mu}e^{\beta}_b\rig^{\mu})-\Kkil_{\beta\mu}\rig^{\mu}e^{\alpha}_a\nabla_{\alpha} e^{\beta}_b\\
\nn &+e^{\beta}_b\nabla_{\beta}(\Kkil_{\mu\alpha}e^{\alpha}_a\rig^{\mu})-\Kkil_{\mu\alpha}\rig^{\mu}e^{\beta}_b\nabla_{\beta}e^{\alpha}_a -e^{\alpha}_ae^{\beta}_b\pounds_{\rig}\Kkil_{\alpha\beta}\Big).
\end{align}
Inserting now the decomposition \eqref{nablaXYnablao} and using again the definitions \eqref{defwpqb}-\eqref{defwpqb:Y} gives
\begin{align}
\nn \rig_{\lambda}\Sigmakil^{\lambda}{} _{\alpha \beta}e^{\alpha}_ae^{\beta}_b=&\spc \frac{1}{2}\Big(  \hate_a(\qone_b)-\Kkil_{\beta\mu}\rig^{\mu}(\Gammao{}^d_{ba}e_d^{\beta}-\Y_{ab}\nu^{\beta}-\U_{ab}\rig^{\beta})\\
\nn &+\hate_b(\qone_a)-\Kkil_{\mu\alpha}\rig^{\mu}(\Gammao{}^d_{ab}e_d^{\alpha}-\Y_{ab}\nu^{\alpha}-\U_{ab}\rig^{\alpha})-e^{\alpha}_ae^{\beta}_b\pounds_{\rig}\Kkil_{\alpha\beta}\Big)\\
\label{rig:Sigma} =&\spc \nablao_{(a}\qone_{b)}+\w\Y_{ab} +\p \U_{ab}-\calY_{ab}\defi \calP_{ab}.
\end{align}
Equation \eqref{W:Sigma} follows from substituting \eqref{eSigma} and \eqref{rig:Sigma} into \eqref{eq:inter:sigma}. 
\end{proof}
We have found that the symmetric $(0,2)$-tensor $\bcalP$ defined in \eqref{calPab:expression} appears naturally in the 
%expression \eqref{W:Sigma} for the 
pull-back $\phi^{\star}\big( g(W,\Sigmakil)\big)$.\ In particular, a direct consequence of \eqref{W:Sigma} is that, when the vector $W$ is chosen to be the rigging $\rig$ (hence $\beta=1$ and $\ov{W}=0$), then \eqref{W:Sigma} simplifies to 
%(as before we use the notation $\bs{\rig}\defi g(\rig,\cdot)$)
\begin{equation}
	\label{def:calP}\bcalP= \phi^{\star}\big( \bs{\rig}(\Sigmakil)\big),\quad \text{where}\quad\bs{\rig}\defi g(\rig,\cdot).
\end{equation}
%\tcr{-----Adapt this part to the new way of things-----}
%
%We can particularize \eqref{W:Sigma} to the case when the vector $W$ is chosen to be the rigging. Then $\beta=1$ and $\ov{W}=0$, so \eqref{W:Sigma} simplifies to (as before we use $\bs{\rig}\defi g(\rig,\cdot)$)
%\begin{align}
%  \label{W:Sigma:ovW=0} \phi^{\star}\lp \bs{\rig}
%  \Sigmakil\rp_{ab}=&\spc  \nablao_{(a}\qone_{b)}+\w\Y_{ab} +\p \U_{ab}-\calY_{ab}.
%\end{align}
%We therefore find that the combination $\nablao_{(a}\qone_{b)}+\w\Y_{ab} +\p \U_{ab}-\calY_{ab}$ appears naturally. One can expect this quantity to be of relevance, so we introduce the 
%%following 
%terminology 
%%\footnote{The letter $\calP$ (/he/, /hej/) is the fifth of the Hebrew alphabet.}
%\begin{align}
%	\color{red}
%%\label{calPab:expression}
%\calP_{ab}
%\defi \nablao_{(a}\qone_{b)}+\w\Y_{ab}+\p\U_{ab}-\calY_{ab}
%\end{align}
%so that 
%\begin{equation}
%  \label{def:calP}\calP= \phi^{\star}\lp \bs{\rig}\Sigmakil\rp.
%\end{equation}
%
%
%\tcr{-----Adapt this part to the new way of things-----}
Observe that \eqref{final2} can be rewritten explicitly in terms of $\bcalP$ and the derivative $\pounds_n\bY$ by inserting %the identity  
\begin{align}
	\label{lieetalien}\pounds_{\ovkil}\Y_{bd}=2r_{(b}\nablao_{d)}\alpha  +\alpha\pounds_n\Y_{bd}
\end{align}
into \eqref{final2}  and using the definition \eqref{calPab:expression} of $\bcalP$.\ Specifically, one obtains
\begin{align}
	\nn \calP_{bd} -\dfrac{\p}{2}\U_{bd}=&\spc \nablao_{b}\nablao_{d}\alpha+2(\sone_{(b}-r_{(b})\nablao_{d)}\alpha+2\alpha\nablao_{(b}\sone_{d)} \\
	\label{Lie:id:sigma} &+n(\alpha) \Y_{bd} -\alpha\pounds_n\Y_{bd} +\dfrac{\alpha}{2}  n(\elltwo)\U_{bd}.
\end{align}
As we shall see in Section \ref{sec:HD-KH0and1}, it is precisely due to  
%one consequence of 
\eqref{Lie:id:sigma} 
%is 
that 
%the tensor 
$\bcalP$ plays a key role %is also key 
%for 
%the paper because it enables us to find another identity for 
%the Lie derivative 
%$\pounds_{n}\bY$ (Theorem \ref{propLieetaY}) which is essential 
in the context of \textit{abstract} Killing horizons of order zero and one.\  
%(in particular, we will prove that it vanishes for Killing horizons of order one, see Proposition \ref{lem:Sigma:eta:0:KH}).\ 
%\tcr{The identity for $\pounds_n\bY$, in addition, allows us to identify a new tensor field} 
%that plays an important role in the study of Killing horizons of order zero and one. 
%that 
The reason 
%that justifies its relevance 
%in this context 
is that, when the null hypersurface is totally geodesic (namely when the second fundamental form $\bU=0$), \eqref{Lie:id:sigma} provides an explicit expression for $\bcalP$ solely in terms of hypersurface data.\ 
%This means, in particular, that 
Thus, when $\bU=0$ the combination $\calP_{ab}\vert_{\bU=0}
= \nablao_{(a}\qone_{b)}+\w\Y_{ab}-\calY_{ab}$ (which in principle codifies information about how $\kil$ extends off $\phi(\N)$) turns out to be  %completely 
determined by 
%independent of the properties of $\kil$ off $\phi(\N)$, as it depends on 
the geometry of the hypersurface and by the value of $\kil$ at $\phi(\N)$ exclusively.\ 
%
% 
%codifies, at the same time, information about how $\kil$ extends off $\phi(\N)$ and geometric properties of the hypersurface.\ 
%
%Equations \eqref{W:Sigma}, \eqref{def:calP} and \eqref{Lie:id:sigma} reinforce the importance of the tensor $\calP$ in the understanding of the relationship between the properties of $\kil$ and the geometry of the hypersurface.\ This is even more so in the case when the the hypersurface is totally geodesic (i.e.\ when the second fundamental form $\bU=0$).\ In such case,  namely when the 
%

The tensor $\bcalP$ is also relevant  
%plays also an essential role 
in the context of 
%geometry of 
\textit{general} null hypersurfaces admitting a privileged tangent null vector.\ In particular, in \cite{manzano2023master} (see also \cite{manzano2023PhD}) the master equation already alluded to is used to prove that $\bcalP$ governs the constancy of the surface gravity $\kappa$.\ %of the privileged vector both along the generators of the hypersurface and everywhere on the hypersurface.\ 

To conclude the section, we find a completely general expression for the vector field  $\Sigmakil(e_a,e_b)$. %$\Sigmakil(\phi_{\star}\vecY,\phi_{\star}\vecZ)$, $\vecY,\vecZ\in\Gamma(T\N)$. %under the only condition that the ambient vector $\kil$ is null and tangent to the hypersurface. 
\begin{lemma}\label{lem:Sigma:eta:general}
  Assume the hypotheses of Lemma \ref{lem:rig(sigma)} and take the notation of Setup \ref{setup:basis:e_a}. 
  %, consider  any two vector fields $\vecY,\vecZ\in\Gamma(T\N)$. 
  Then, 
  %$\Sigmakil(\phi_{\star}\vecY,\phi_{\star}\vecZ)$ is given by
  \begin{align}
  	\nn \Sigmakil&(e_a,e_b)= \Big( \big( \w-n(\alpha) \big)\U_{ab}- \alpha  (\pounds_n \bU)_{ab} \Big)\rig\\
  	\nn &+ \Big( \nablao_{a}\nablao_{b}\alpha +2(\nablao_{(a}\alpha)(\sone_{b)}-\Yn_{b)})+ 2\alpha\nablao_{(a}\sone_{b)}+n(\alpha) \Y_{ab}-\alpha(\pounds_{n}\bY)_{ab}
  	+\frac{1}{2}\big( \alpha n(\elltwo)+ \p\big)\U_{ab}\Big)\nu\\
  	\label{Sigma:general:xprssn} &+P^{cd}\lp 2(\nablao_{(a}\alpha) \U_{b)c}	+ \U_{ab}( \qone_c-\nablao_c\alpha )+\alpha( \nablao_a\U_{bc}+ \nablao_b\U_{ca}- \nablao_c\U_{ab}) \rp e_d .
  \end{align}
%  \begin{align}
%  	\nn \Sigmakil(e_a,e_b)=&\spc \Big( \big( \w-n(\alpha) \big)\U_{ab}- \alpha  \pounds_n \U_{ab}\Big) \rig + \alpha \phi_{\star}\Big( {\sigo}(\hate_a,\hate_b) \Big ) \\
%  	\nn &   + \left ( \nablao_a \nablao_b \alpha -\alpha \pounds_n \Y_{ab} 
%  	+n(\alpha)  \Y_{ab} + 2 (\sone -r)_{(a} \nablao_{b)} \alpha
%  	+\frac{1}{2}\big(\p-\alpha n (\elltwo) \big) \U_{ab}     \right  ) \nu \\
%  	 &   +P^{cd} \left ( \big(\bqone_c- \nablao_c \alpha- 2\alpha \sone_c \big) \U_{ab}
%  	+  2 \U_{c(a} \nablao_{b)} \alpha \right )e_d .
%  	\label{Sigma:general:xprssn}
%  \end{align} 
In particular, if $\bU=0$ equation \eqref{Sigma:general:xprssn} becomes
  \begin{align}
	\label{Sigma:U=0:xprssn} \Sigmakil&(e_a,e_b)= \Big( \nablao_{a}\nablao_{b}\alpha +2(\nablao_{(a}\alpha)(\sone_{b)}-\Yn_{b)})+ 2\alpha\nablao_{(a}\sone_{b)}+n(\alpha) \Y_{ab}-\alpha(\pounds_{n}\bY)_{ab}\Big)\nu,
\end{align}
%  \begin{align}
%	\Sigmakil(e_a,e_b)=&\spc  \alpha \phi_{\star}\Big( {\sigo}(\hate_a,\hate_b) \Big )    + \left ( \nablao_a \nablao_b \alpha -\alpha \pounds_n \Y_{ab} 
%	+n(\alpha)  \Y_{ab} + 2 (\sone -r)_{(a} \nablao_{b)} \alpha    \right  ) \nu,
%	\label{Sigma:U=0:xprssn}
%\end{align}
%Moreover, \eqref{Sigma:U=0:xprssn} 
which is equivalent to
\begin{align}
	\Sigmakil(e_a,e_b)=
	\calP_{ab}\hspace{0.05cm} \nu.
	\label{Sigma:U=0:xprssn2} 
\end{align}
\end{lemma}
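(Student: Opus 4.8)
The plan is to reconstruct the ambient vector $\Sigmakil(e_a,e_b)$ along $\phi(\N)$ from its $g$-contractions against the basis $\{\rig,e_c\}$, both of which are already available from Lemma \ref{lem:rig(sigma)}. Concretely, I would feed the first form of the completeness relation \eqref{gup}, namely $g^{\mu\nu}=n^c(\rig^\mu e_c^\nu+\rig^\nu e_c^\mu)+P^{cd}e_c^\mu e_d^\nu$, into the trivial identity $\Sigmakil(e_a,e_b)^\mu=g^{\mu\nu}g_{\nu\lambda}\Sigmakil(e_a,e_b)^\lambda$. Abbreviating $E_{abc}\defi (e_c)_\lambda\Sigmakil^\lambda{}_{\alpha\beta}e_a^\alpha e_b^\beta$, which is given by \eqref{eSigma}, and recalling $\rig_\lambda\Sigmakil^\lambda{}_{\alpha\beta}e_a^\alpha e_b^\beta=\calP_{ab}$ from \eqref{rig:Sigma}, this yields the clean decomposition
\begin{align*}
\Sigmakil(e_a,e_b)=\big(n^cE_{abc}\big)\rig+\calP_{ab}\,\nu+P^{cd}E_{abc}\,e_d,
\end{align*}
where the middle term arises precisely because $n^ce_c=\nu$ by \eqref{normal}. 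So no new contraction is required: the content of the lemma reduces to evaluating these three coefficients.

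Two of the three are essentially free. The $e_d$-coefficient is literally $P^{cd}$ times \eqref{eSigma}, which is exactly the last line of \eqref{Sigma:general:xprssn} (using symmetry of $P$ to relabel the summation index). The $\nu$-coefficient is $\calP_{ab}$, and substituting \eqref{Lie:id:sigma} rewrites $\calP_{ab}$ verbatim as the bracket multiplying $\nu$ in \eqref{Sigma:general:xprssn}; here one only needs $\Yn_b=r_b$, which holds since $\bsYn=\bY(n,\cdot)$. Thus both coefficients follow by quoting earlier results.

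The single substantive computation is the $\rig$-coefficient $n^cE_{abc}$. Contracting \eqref{eSigma} with $n^c$, the symmetrized-gradient term $2(\nablao_{(a}\alpha)\U_{b)c}n^c$ vanishes by $\bU(n,\cdot)=0$ (cf.\ \eqref{Un}), while $\U_{ab}(\qone_c-\nablao_c\alpha)n^c$ collapses to $(\w-n(\alpha))\U_{ab}$ using $\qone_cn^c=\w$ and $(\nablao_c\alpha)n^c=n(\alpha)$. The delicate piece is $\alpha(2\nablao_{(a}\U_{b)c}-\nablao_c\U_{ab})n^c$: I would rewrite each $n^c\nablao_a\U_{bc}=-\U_{bc}\nablao_an^c$ (again using $\bU(n,\cdot)=0$), and handle $-\alpha\,n^c\nablao_c\U_{ab}$ via the torsion-free expansion $\pounds_n\U_{ab}=n^c\nablao_c\U_{ab}+\U_{cb}\nablao_an^c+\U_{ac}\nablao_bn^c$. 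The point where the argument could go wrong, and hence the step I would verify most carefully, is that the resulting $\nablao n$ terms cancel pairwise by symmetry of $\bU$, leaving exactly $-\alpha(\pounds_n\bU)_{ab}$; this produces the $\rig$-coefficient $(\w-n(\alpha))\U_{ab}-\alpha(\pounds_n\bU)_{ab}$ of \eqref{Sigma:general:xprssn}.

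Finally, the $\bU=0$ reduction is immediate. Every term of \eqref{eSigma} carries a factor of $\bU$ or of $\nablao\bU$, so imposing $\bU\equiv0$ (whence $\nablao\bU=0$ as well) forces $E_{abc}=0$, killing both the $\rig$- and the $e_d$-coefficients and leaving $\Sigmakil(e_a,e_b)=\calP_{ab}\,\nu$, which is \eqref{Sigma:U=0:xprssn2}. Dropping the $\U$-terms in the $\nu$-bracket (equivalently, setting $\bU=0$ in \eqref{Lie:id:sigma}) then gives the explicit form \eqref{Sigma:U=0:xprssn}.
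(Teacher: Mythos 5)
Your proposal is correct and follows essentially the same route as the paper's proof: both reconstruct $\Sigmakil(e_a,e_b)$ from its $g$-contractions with the basis $\{\rig,e_c\}$ (your use of the completeness relation \eqref{gup} is just the paper's decomposition \eqref{sigma(Y,Z):lemma} in disguise), identify the coefficients via Lemma \ref{lem:rig(sigma)} with $W=\rig$ and $W=e_c$, perform the same $n^c$-contraction using $\bU(n,\cdot)=0$ and the torsion-free Lie-derivative expansion to obtain $-\alpha(\pounds_n\bU)_{ab}$, and invoke \eqref{Lie:id:sigma} for the $\nu$-coefficient and the $\bU=0$ specializations. The only difference is cosmetic (you derive \eqref{Sigma:U=0:xprssn2} before \eqref{Sigma:U=0:xprssn}, the paper the reverse), so there is nothing to correct.
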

\begin{proof}
The proof relies on the fact that %the vector field $\Sigmakil(e_a,e_b)$ is given by
  \begin{align}
	\label{sigma(Y,Z):lemma}&\hspace{0.38cm}\Sigmakil(e_a,e_b)=\mu_{cab}n^c\rig+\big( P^{cd}\mu_{cab}+\mathfrak{r}_{ab} n^d\big) e_d,\\
	&\textup{where}\qquad  
	\label{def:v0:beta}\mathfrak{r}_{ab}\defi \phi^{\star}\Big( g\big(\rig,\Sigmakil(e_a,e_b)\big)\Big),\qquad \mu_{cab}\defi \phi^{\star}\Big( g\big(e_c,\Sigmakil(e_a,e_b)\big)\Big),
\end{align}
which follows from calculating the scalar products of \eqref{sigma(Y,Z):lemma} with the basis vectors $\{\rig,e_c\}$.\ 
%Indeed, calculating $g\big(\rig,\Sigmakil(e_a,e_b)\big)$ and $ g\big(e_c,\Sigmakil(e_a,e_b)\big)$ with \eqref{sigma(Y,Z):lemma}    the explicit form \eqref{sigma(Y,Z):lemma} of $\Sigmakil(e_a,e_b)$ follows directly from contracting
It therefore suffices to compute the contraction $\mu_{cab}n^c$ and the term $P^{cd}\mu_{cab}+\mathfrak{r}_{ab} n^d$ explicitly.
Particularizing \eqref{W:Sigma} first for $W=\rig$ and then for $W=e_c$ gives
%\begin{align}
%	\label{def:v0} \mathfrak{r}_{ab}=&\spc \calP_{ab}
%%	\stackbin{\eqref{Lie:id:sigma}}=\nablao_{a}\nablao_{b}\alpha \hspace{-0.05cm}+\hspace{-0.05cm}2(\nablao_{(a}\alpha)(\sone_{b)}\hspace{-0.05cm}-\hspace{-0.05cm}\Yn_{b)})\hspace{-0.05cm}+\hspace{-0.05cm} 2\alpha\nablao_{(a}\sone_{b)}\hspace{-0.05cm}+\hspace{-0.05cm}n(\alpha) \Y_{ab}\hspace{-0.05cm}-\hspace{-0.05cm}\alpha(\pounds_{n}\bY)_{ab}\hspace{-0.05cm}+\hspace{-0.05cm}\frac{1}{2}\lp \alpha n(\elltwo)\hspace{-0.05cm}+\hspace{-0.05cm} \p\rp\U_{ab}
%	,\\
%	\mu_{cab}=&\spc   2(\nablao_{(a}\alpha) \U_{b)c}
%	+ \U_{ab}( \qone_c-\nablao_c\alpha )+\alpha( \nablao_a\U_{bc}+ \nablao_b\U_{ca}- \nablao_c\U_{ab}).  
%%\nonumber 
%%	\\
%%	= & \spc   (\nablao_a\alpha) \U_{bc}+(\nablao_b\alpha) \U_{ac}
%%	+ \U_{ab}( \qone_c-\nablao_c\alpha )+\alpha
%%	( \gamma_{cf} {\sigo}{}^f{}_{ab} - \ell_c \pounds_n \U_{ab} - 2 \sone_c \U_{ab} ),
%%	% \label{def:beta(X)2} 
%%	%\end{align*}
%	\label{def:beta(X)}
%\end{align}

\vspace{-0.65cm}

%729

\begin{minipage}[t]{0.22\textwidth}
	\begin{align}
		\label{def:v0} \mathfrak{r}_{ab}=&\spc \calP_{ab},
	\end{align}
\end{minipage}%
\hfill
%\vrule
\hfill
\begin{minipage}[t]{0.78\textwidth}
	\begin{align}
		\mu_{cab}=&\spc   2(\nablao_{(a}\alpha) \U_{b)c}
		+ \U_{ab}( \qone_c-\nablao_c\alpha )+\alpha( \nablao_a\U_{bc}+ \nablao_b\U_{ca}- \nablao_c\U_{ab}).  
		\label{def:beta(X)}
	\end{align}
\end{minipage}

\vspace{0.cm}

To obtain $\mu_{cab}n^c$ we contract  
\eqref{def:beta(X)} with $n^c$ and use $\bU(n,\cdot)=0$, $\bqone(n)=\w$ and $n^c( \nablao_a\U_{bc}+ \nablao_b\U_{ca}- \nablao_c\U_{ab})= -\U_{bc}\nablao_an^c- \U_{ca}\nablao_bn^c- n^c\nablao_c\U_{ab}=-(\pounds_n\bU)_{ab}$. This gives
 \begin{align}
	\label{def:beta(n)} \mu_{cab}n^c=&\spc \big( \w-n(\alpha) \big)\U_{ab}- \alpha  (\pounds_n \bU)_{ab}.
\end{align}
The term $P^{cd}\mu_{cab}+\mathfrak{r}_{ab} n^d$ follows directly from \eqref{def:beta(X)}, \eqref{def:v0} and \eqref{Lie:id:sigma}.\ It reads
\begin{align}
\label{P:beta:plus:v0:n} &P^{cd}\mu_{cab}+\mathfrak{r}_{ab} n^d= P^{cd}\lp 2(\nablao_{(a}\alpha) \U_{b)c}
+ \U_{ab}( \qone_c-\nablao_c\alpha )+\alpha( \nablao_a\U_{bc}+ \nablao_b\U_{ca}- \nablao_c\U_{ab}) \rp\\
\nn &+ n^d\Big( \nablao_{a}\nablao_{b}\alpha +2(\nablao_{(a}\alpha)(\sone_{b)}-\Yn_{b)})+ 2\alpha\nablao_{(a}\sone_{b)}+n(\alpha) \Y_{ab}-\alpha(\pounds_{n}\bY)_{ab}
%\\
%&
+\frac{1}{2}\lp \alpha n(\elltwo)+ \p\rp\U_{ab}\Big).
\end{align}
Inserting \eqref{def:beta(n)} and  \eqref{P:beta:plus:v0:n} into \eqref{sigma(Y,Z):lemma} yields \eqref{Sigma:general:xprssn} (recall that $\phi_{\star}n=\nu$, cf.\ \eqref{normal}).\
%\begin{align}
%\nn \Sigmakil&(e_a,e_b)= \Big( \big( \w-n(\alpha) \big)\U_{ab}- \alpha  (\pounds_n \bU)_{ab} \Big)\rig\\
%\nn &+ \Big( \nablao_{a}\nablao_{b}\alpha +2(\nablao_{(a}\alpha)(\sone_{b)}-\Yn_{b)})+ 2\alpha\nablao_{(a}\sone_{b)}+n(\alpha) \Y_{ab}-\alpha(\pounds_{n}\bY)_{ab}
%+\frac{1}{2}\big( \alpha n(\elltwo)+ \p\big)\U_{ab}\Big)\nu\\
%&+P^{cd}\lp 2(\nablao_{(a}\alpha) \U_{b)c}	+ \U_{ab}( \qone_c-\nablao_c\alpha )+\alpha( \nablao_a\U_{bc}+ \nablao_b\U_{ca}- \nablao_c\U_{ab}) \rp e_d .
%\end{align}
%\begin{align}
%\nn &(P^{cd}\mu_{cab}+\mathfrak{r}_{ab} n^d)e_d= P^{cd}\lp 2(\nablao_{(a}\alpha) \U_{b)c}	+ \U_{ab}( \qone_c-\nablao_c\alpha )+\alpha( \nablao_a\U_{bc}+ \nablao_b\U_{ca}- \nablao_c\U_{ab}) \rp e_d\\
%	\nn &+ \Big( \nablao_{a}\nablao_{b}\alpha +2(\nablao_{(a}\alpha)(\sone_{b)}-\Yn_{b)})+ 2\alpha\nablao_{(a}\sone_{b)}+n(\alpha) \Y_{ab}-\alpha(\pounds_{n}\bY)_{ab}
%	%\\
%	%&
%	+\frac{1}{2}\lp \alpha n(\elltwo)+ \p\rp\U_{ab}\Big)\nu.
%\end{align}
Equation \eqref{Sigma:U=0:xprssn} follows from enforcing $\bU=0$ in \eqref{Sigma:general:xprssn}, while \eqref{Sigma:U=0:xprssn2} is a consequence of \eqref{Lie:id:sigma} and \eqref{Sigma:U=0:xprssn}. 
\end{proof}
Once again, the case $\bU=0$ is particularly interesting because expression 
\eqref{Sigma:U=0:xprssn} states that the
vector field $\Sigmakil(e_a,e_b)$  
can be codified entirely by the function $\alpha$ and the hypersurface data objects $n$, $\bsone$, $\nablao$ and $\bY$.\ This means, in particular, that  \textit{$\Sigmakil(e_a,e_b)$ 
is completely independent of how $\kil$ behaves off $\phi(\N)$ when the null hypersurface is totally geodesic} (i.e.\ when $\bU=0$).\ 
%the second fundamental form $\bU$ vanishes).\ 
As mentioned before, the very same property is also satisfied by $\bcalP$ (see \eqref{Lie:id:sigma}), which is of course consistent with \eqref{Sigma:U=0:xprssn2}.\ 
As we shall see next, 
%in the next section, 
it is precisely this property that allows one to construct new abstract notions of Killing horizons of order zero and one.

 %(recall that $\bU$ coincides with the second fundamental form of the hypersurface with respect to $\phi_{\star}n$, so $\bU=0$ means that the hypersurface is totally geodesic).  
%\end{remark}

\section{Abstract notions of Killing horizons of order zero and one}\label{sec:HD-KH0and1} 

Killing horizons have played a fundamental role in General Relativity, mainly because its close relation with black holes in equilibrium (see e.g.\ \cite{hawking2023large}).\ Their defining property is the existence of 
a Killing vector field $\kil$ which is null and tangent at the hypersurface.\ 
%(a typical situation is when the Killing changes its causal character from timelike to spacelike across the hypersurface, but other possibilities occur as well).
One often adds to this the condition that $\eta$ is nowhere zero at the hypersurface, but we shall not do so here.\ The deformation tensor $\Kkil$ of $\eta$ is identically zero, so in particular it vanishes together with all its derivatives at the hypersurface.\ It turns out, however, that some of the most relevant properties of Killing horizons can be fully recovered by only requiring that a few derivatives of $\Kkil$ vanish on the horizon.\ It is in these circumstances that the notions of 
%non-expanding, (weakly) isolated horizons and 
Killing horizons of order zero and one arise naturally. %(see Section \ref{secKH} for details). 

By definition, a Killing horizon of order $m\in \mathbb{N} \cup \{ 0 \}$ is a null hypersurface together with a vector field $\kil$ defined in a neighbourhood of the hypersurface, such that $(i)$ $\kil$ is null and tangent at the hypersurface and $(ii)$ the transverse derivatives up to order $m$ of the deformation tensor $\Kkil$ vanish on the hypersurface.\ The purpose of this section is to provide \textit{abstract} definitions of Killing horizons of order zero and one.\ The main idea is to be able to describe such horizons \textit{in a detached way from any ambient space} where they may be embedded. 

It is to be expected that the abstract notions of Killing horizons of order zero/one rely on hypersurface data $\hypdata$ satisfying certain additional  restrictions.\ However, this already raises the question of how much geometric information from the embedded picture can be codified only in terms of $\{\gamma,\ellc,\elltwo,\bY\}$.\ For the order zero, ideally one would like the definition to enforce $\Kkil=0$ everywhere on the hypersurface.\  However, as we already know, only the pull-back of $\Kkil$ can be expressed solely in terms of the data, namely by means of $\bU$ (see \eqref{pullbackdeftensornull}).\ The remaining components of $\Kkil$ are given by $\bqone$, $\p$ (cf.\ \eqref{defwpqb}) and cannot be encoded in the tensor fields $\{\gamma,\ellc,\elltwo,\bY\}$. 

For this reason, we split the process of building notions of Killing horizons of order zero/one 
%definition 
in two different levels.\ We start with a weaker definition which only restricts the metric hypersurface data and which is truly at the abstract level, in the sense that no embedding into an ambient space is required.\ In a second stage we assume the data to be embedded and add extra restrictions that  enforce  that the remaining components of the deformation tensor also vanish on the hypersurface.

Obviously, to define abstractly the notion of Killing horizon of order zero 
we need a privileged gauge-invariant vector field $\ovkil$ on the data $\hypdata$, pointing along the degeneration direction of $\gamma$.\ 
%Since non-trivial Killing vectors can have zeroes at most on codimension two submanifolds,
We want to allow for the possibility that $\ovkil$ has zeroes on $\N$, but we do not admit that its zero set has non-empty interior, as in such case we
would have open subsets of $\N$ where the data is not restricted at all.\ 
This assumption is natural also because Killing vectors cannot vanish on any codimension one subset of the spacetime 
%hypersurface 
unless identically zero.\ 
%
%
%However, if it vanished on some open sets, this would leave with no restriction whatsoever
%vanishes somewhere on the abstract hypersurface $\N$, but not on open subsets thereof.
This leads us to the following 
%natural 
definition.
%it is straightforward to come to the following definition. 
\begin{definition} \label{defKHD0} (Abstract Killing horizon of order zero, AKH$_0$) Consider null hypersurface data $\{\mathcal{N},{\gamma},\bs{\ell},\ell^{(2)},\bY\}$ admitting a gauge-invariant vector field $\ovkil\in\textup{Rad}\gamma$.  
%with associated surface gravity $\kappa$ according to Lemma \ref{lemkappagauge}. 
  Define $\mathcal{S}\defi \{p\in\mathcal{N}\spc\vert\spc \ovkil\vert_p=0\}$. Then $\{\mathcal{N},{\gamma},\bs{\ell},\ell^{(2)},\bY\}$ is an abstract Killing horizon of order zero if 
  %\marc{Item (i) chamged. Os it ok?}
  $(i)$ $\mathcal{S}$ has empty interior and $(ii)$ $\pounds_{\ovkil}\gamma=0$. 
%\begin{align*}
%\begin{array}{ll}
%  \hspace{-8cm}(i) & \hspace{-0.3cm}\mathcal{S}\hspace{0.1cm}
%                        \text{has\hspace{0.1cm}empty\hspace{0.1cm}interior\hspace{0.1cm}}, \\
%                          %\text{is\hspace{0.1cm}either\hspace{0.1cm}empty\hspace{0.1cm}or\hspace{0.1cm}a\hspace{0.1cm}finite\hspace{0.1cm}union\hspace{0.1cm}of\hspace{0.1cm}smooth\hspace{0.1cm}connected\hspace{0.1cm}closed}\hspace{0.1cm}(\n-1)\text{-dimensional\hspace{0.1cm}submanifolds},\\
%\hspace{-8cm}(ii) & \hspace{-0.3cm}\lieo_{\ovkil}\gamma=0.
%\end{array}
%\end{align*}
\end{definition}
%Condition $(i)$ in Definition \ref{defKHD0} ensures that
Since by $(i)$ $\N\setminus\mathcal{S}$ is dense in $\N$ and
%
%(hence that $\ovkil$ does not vanish on open subsets of $\N$). Moreover, it mimics the possible behaviour of the zeros of a Killing vector, which justifies the definition. Combining this with the fact that
$\ovkil$ is proportional to $n$, condition $(ii)$ is equivalent to $\bU=0$ because (cf. \eqref{liefngammaANDU})
%everywhere on $\N$, as (cf.\ \eqref{threetensors})
\begin{equation}
    \pounds_{\ovkil}\gamma=2\alpha\bU.
  \end{equation}
  As mentioned elsewhere, when the data $\hypdata$ is embedded  
  %in $(\M,g)$, 
  %the tensor 
  $\bU$ coincides
  with the second fundamental of $\phi(\N)$ along the null normal
  $\nu = \phi_{\star}n$.\  
%For a better understanding of Definition \ref{defKHD0}, it is convenient to consider the embedded picture, so we embed 
%on a spacetime $(\M,g)$ with embedding $\phi$ and rigging $\rig$. As mentioned in Section \ref{sec:Null:HD:cap2}, in these circumstances the tensor $\bU$ coincides 
%form with respect to the null normal vector field 
  Thus, condition $(ii)$ means that $\phi(\N)$ is totally geodesic.\ In other words, \textit{an abstract Killing horizon of order zero $\hypdata$ is the 
  %detached-from-any-ambient-space 
  abstract version of a totally geodesic null hypersurface equipped with an extra
    tangent and null vector field $\eta\vert_{\phi(\N)}\defi \phi_{\star}\ovkil$
    with a mild restriction on its set of zeroes}.\ 
  For the rest of the paper, we call the vector fields $\ovkil$, $\eta\vert_{\phi(\N)}$ \textit{symmetry generators}
and the submanifolds $\mathcal{S}$, $\phi (\mathcal{S})$ \textit{fixed points sets}.

Definition \ref{defKHD0} places no restriction on the components $\bqone$ and $\p$ of the deformation tensor of $\kil$.\  So, it does not correspond to a \textit{full} Killing horizon of order zero.\ To capture the full notion we need to restrict ourselves to the embedded case.\ The definition 
is as follows. 
%of\textit{Killing horizon of order zero} in terms of hypersurface data is as follows.
%\tcb{(without the word abstract)}.   
\begin{definition} \label{defKH0}(Killing horizon of order zero, KH$_{0}$) %(Killing horizon of order zero, KH$_{0}$) 
Consider an abstract Killing horizon of order zero $\hypdata$ with symmetry generator $\ovkil\in\textup{Rad}\gamma$ and assume that $\hypdata$ is $\{\phi,\rig\}$-embedded in a spacetime $(\mathcal{M},g)$.\ Then, $\phi(\N)$ is a Killing horizon of order zero %in a Lorentzian manifold $(\mathcal{M},g)$ 
if there exists an extension $\kil$ of $\phi_{\star}\ovkil$ to a neighbourhood $\mathcal{O}\subset\M$ of $\phi(\N)$ such that 
\begin{align}
\label{condKH0}\phi ^{\star}(g(\pounds_{\rig}\eta,\rig))=-\dfrac{1}{2}\pounds_{\ovkil}\elltwo,\qquad\phi ^{\star}(g(\pounds_{\rig}\eta,\cdot))=-\pounds_{\ovkil}\ellc\quad\text{on}\quad\N.
\end{align}
\end{definition}
%\begin{remark}
Observe that, since 
%the Lie derivative $\pounds_{\rig}\kil$ on $\phi(\N)$ is  given by 
$\pounds_{\rig}\kil\vert_{\phi(\N)}=(\nabla_{\rig}\kil-\nabla_{\kil}\rig)\vert_{\phi(\N)}$ 
and this 
%which 
involves no transverse derivatives of the rigging,  
%(because $\kil\vert_{\phi(\N)}$ is tangent to $\phi(\N)$ at all its points), 
%So 
there is no need to extend the rigging vector field $\rig$ off $\phi(\N)$ in Definition \ref{defKH0}.\ 

%\end{remark}
%
Let us now prove that Definition \ref{defKH0} indeed guarantees that $\Kkil=0$ on $\phi (\mathcal{N})$.  
\begin{proposition}\label{prop:Keta=0:KH0}
The deformation tensor $\Kkil$ is everywhere zero on any KH$_{0}$.
\end{proposition}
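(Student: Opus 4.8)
The plan is to show that every component of $\Kkil$ in the basis $\{\rig,e_a\}$ of $\Gamma(T\M)\vert_{\phi(\N)}$ from Setup \ref{setup:basis:e_a} vanishes on $\phi(\N)$. By the definitions \eqref{defwpqb} and the pull-back identity \eqref{pullbackdeftensornull} these components are exactly $\Kkil(e_a,e_b)=2\alpha\U_{ab}$, $\Kkil(\rig,e_a)=\qone_a$ and $\Kkil(\rig,\rig)=\p$, so the whole statement reduces to proving $\bU=0$, $\bqone=0$ and $\p=0$.

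The tangential block is immediate: because the data form an AKH$_0$, condition $(ii)$ of Definition \ref{defKHD0} combined with the density of $\N\setminus\mathcal{S}$ (granted by $(i)$) forces $\bU=0$, since $\pounds_{\ovkil}\gamma=2\alpha\bU$ (cf.\ \eqref{liefngammaANDU}); hence $\Kkil(e_a,e_b)=0$ everywhere. For the two transverse components I would feed condition \eqref{condKH0} with the commutator already obtained in Theorem \ref{propLieetaY}. Writing $\pounds_{\rig}\gen=[\rig,\gen]=-[\gen,\rig]$ and inserting the second expression of \eqref{commutator}, one gets on $\phi(\N)$ the identity $\pounds_{\rig}\gen=\frac{1}{2}\big(\p-\ovkil(\elltwo)\big)\nu+\big(\qone_a-(\pounds_{\ovkil}\ellc)_a\big)\theta^a$.

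Contracting this with $e_a$ and with $\rig$, and using the dual-basis relations $g(\nu,\rig)=1$, $g(\nu,e_a)=0$, $g(\theta^a,\rig)=0$ and $g(\theta^a,e_b)=\delta^a_b$ of Setup \ref{setup:basis:e_a}, I arrive at the two scalar identities $g(\pounds_{\rig}\gen,e_a)=\qone_a-(\pounds_{\ovkil}\ellc)_a$ and $g(\pounds_{\rig}\gen,\rig)=\frac{1}{2}\big(\p-\pounds_{\ovkil}\elltwo\big)$, having used $\ovkil(\elltwo)=\pounds_{\ovkil}\elltwo$. Comparing the first with the covectorial equation of \eqref{condKH0} yields $\qone_a=0$, i.e.\ $\bqone=0$, and comparing the second with the scalar equation of \eqref{condKH0} yields $\p=0$. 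Together with $\bU=0$ this kills all components of $\Kkil$ in the basis $\{\rig,e_a\}$, so $\Kkil=0$ everywhere on $\phi(\N)$.

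I do not anticipate a real obstacle: once \eqref{commutator} is available the argument is pure bookkeeping. The only subtlety is the extension of $\gen$ off $\phi(\N)$ implicit in Definition \ref{defKH0}; but, as noted there, $\pounds_{\rig}\gen\vert_{\phi(\N)}=(\nabla_{\rig}\gen-\nabla_{\gen}\rig)\vert_{\phi(\N)}$ involves no transverse derivative of $\rig$, so every quantity entering the computation is intrinsic to $\phi(\N)$ and the result is independent of the chosen extension.
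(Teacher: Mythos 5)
Your argument is correct, and for the transverse components it takes a genuinely different route from the paper's own proof. The paper handles $\Kkil(\rig,\rig)$ and $\Kkil(\rig,X)$ by direct computation: it expands $\Kkil(\rig,\rig)=2g(\nabla_{\rig}\gen,\rig)$ and $\Kkil(\rig,X)=g(\nabla_{\rig}\gen,X)+X(g(\gen,\rig))-g(\gen,\nabla_{X}\rig)$, converts the tangential derivatives of the rigging via \eqref{pullbacknablaxirig} (itself a consequence of \eqref{omegas} and \eqref{nablarig3}), and invokes the antisymmetry of $\bF$ together with \eqref{sigmathing1} so that precisely the combinations appearing in \eqref{condKH0} are produced. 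You instead recycle the commutator formula \eqref{commutator} of Theorem \ref{propLieetaY}: writing $\pounds_{\rig}\gen=-[\gen,\rig]$ and contracting with $\{\rig,e_a\}$ through the dual-basis relations of Setup \ref{setup:basis:e_a} turns the two conditions \eqref{condKH0} into $\bqone=0$ and $\p=0$ by linear algebra alone; your sign bookkeeping and the identification of the components $\Kkil(e_a,e_b)=2\alpha\U_{ab}$, $\Kkil(\rig,e_a)=\qone_a$, $\Kkil(\rig,\rig)=\p$ are all accurate. The underlying analysis is the same --- \eqref{commutator} was obtained in the paper precisely by computing $\nabla_{\gen}\rig$ and $\nabla_{\rig}\gen$ from \eqref{nablarig3} and \eqref{nablaoll} --- so neither proof is more general, but yours is shorter and makes explicit something the paper leaves implicit: on an embedded AKH$_0$, conditions \eqref{condKH0} are \emph{equivalent} to the vanishing of the transverse components $\bqone$ and $\p$, since each of your comparisons is an if-and-only-if. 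One small wording caveat: your closing claim that ``the result is independent of the chosen extension'' is accurate only for the extension of the rigging $\rig$ (which is the point of the paper's remark after Definition \ref{defKH0}); the quantities $\bqone$ and $\p$ do depend on how $\gen$ is extended off $\phi(\N)$, and the proposition asserts $\Kkil=0$ for the particular extension postulated in Definition \ref{defKH0}, which is exactly what your computation establishes.
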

\begin{proof}
  From \eqref{pullbackdeftensornull} and $\bU=0$ we get that 
  the \textit{tangent-tangent} components of $\Kkil$ are automatically zero. Concerning $\Kkil(\rig,\rig)$, we compute
\begin{align}
\Kkil(\rig,\rig)=2g(\nabla_{\rig}\eta,\rig)=2g(\pounds_{\rig}\eta+\nabla_{\eta}\rig,\rig)=2g(\pounds_{\rig}\eta,\rig)+\eta(g(\rig,\rig)).
\end{align}
So \eqref{condKH0} implies 
$\phi ^{\star}(\Kkil(\rig,\rig))=-\pounds_{\ovkil}\elltwo+\ovkil(\elltwo)=0$ and hence $\Kkil(\rig,\rig)=0$ on $\phi(\N)$.
Now let $X$ be a vector field tangent to $\phi(\N)$.\ Combining \eqref{omegas} and \eqref{nablarig3} one obtains
%Combining $\bU=0$,  $\F_{ab}=\nablao_{a}\ell_b$ (cf.\ \eqref{nablaoll}) and \eqref{nablarig3}, 
%%\eqref{keyforlater} 
%%(with $\gen = X$  so in particular $\beta=0$) 
%one obtains 
\begin{equation}
\label{pullbacknablaxirig}\phi^{\star}(\nabla_{X}\bs{\rig})=\bF(X,\cdot)+\bY(X,\cdot),\qquad \bs{\rig}\defi g(\rig,\cdot),
\end{equation}
where we have used \eqref{prod4} and that $\nu$ is normal to $\phi(\N)$, hence $\phi^{\star}\bs{\nu}=0$. 
Then, from the fact that $\Kkil(\rig,X)=
g(\nabla_{\rig}\eta,X)+g(\nabla_{X}\eta,\rig)
= g(\nabla_{\rig}\eta,X)+ X ( g(\eta,\rig)) - g ( \eta, \nabla_X \rig)$
it follows 
%Consequently, the computation of $\Kkil(\rig,X)$ gives
\begin{align}
\nonumber \phi^{\star}\lp \Kkil(\rig,X)\rp=&\spc 
%
%\phi^{\star}\lp g(\nabla_{\rig}\eta,X)+g(\nabla_{X}\eta,\rig)\rp= 
%
\phi^{\star}\lp g(\pounds_{\rig}\eta,X)+(\nabla_{\eta}\bs{\rig})(X)+X(\alpha )- (\nabla_{X}\bs{\rig})(\eta)\rp\\
\nonumber =&\spc \phi^{\star}\lp g(\pounds_{\rig}\eta,X)\rp+\alpha (\bsone(X)+\bs{r}(X))+X(\alpha )-\alpha   (\bF(X,n)+\bs{r}(X))\\
\nonumber =&\spc \phi^{\star}\lp g(\pounds_{\rig}\eta,X)\rp+2\alpha \bsone(X)+X(\alpha )= \phi ^{\star}\lp g(\pounds_{\rig}\eta,\cdot)\rp(X)+(\pounds_{\ovkil} \ellc)(X)=0,
\end{align}
where in the first line we used $g(\rig,\eta)\vert_{\phi(\N)}=\alpha g(\rig,\nu)\vert_{\phi(\N)}=\alpha$, in the second line we
inserted %\eqref{prod}, 
\eqref{pullbacknablaxirig} and in the last 
line 
%three equalities 
we used that  $\bF$ is antisymmetric, \eqref{sigmathing1} 
  and Definition \ref{defKH0}. 
\end{proof}
Having defined Killing horizons of order zero  both at the abstract and the embedded levels, we proceed with the order one.\ The key object to consider is the tensor $\Sigmakil\defi \pounds_{\kil}\nabla$.\  Since we obviously want 
horizons of order one to be a subset of those of order zero, we restrict
ourselves to data $\hypdata$ with $\bU=0$.\ In such case, it follows from
\eqref{Sigma:U=0:xprssn} that, for any tangent vectors
$Z_1,Z_2\in\Gamma(T\N)$, the quantity
$\Sigmakil(\phi_{\star}Z_1,\phi_{\star}Z_2)$
can be written in terms of $\{\gamma,\ellc,\elltwo,\bY\}$ and $\alpha$
exclusively.\ On the other hand, by \eqref{calPab:expression} and \eqref{def:calP} we know that  $\big(\bs{\rig}(\Sigmakil)\big) (\phi_{\star}Z_1,\phi_{\star}Z_2)$
contains information about the zeroth and the \textit{first} order derivatives
of $\Kkil$ (because of the presence of the tensor $\bcalY$ in \eqref{calPab:expression}).\ These two ingredients can be combined to introduce a sensible definition of abstract Killing horizon of order one.\ Another reason why  $\Sigmakil$ is useful to define Killing horizons of
order one comes from identity \eqref{Sigma:Curvature}.\ Any Killing vector $\widetilde{\kil}$ in $(\M,g)$ satisfies the well-known property $0=\nabla_{\alpha} \nabla_{\beta}\widetilde{\kil}^{\mu}+R{}^{\mu}{}_{\beta\nu\alpha}\widetilde{\kil}^{\nu}$ (see e.g.\ \cite{frolov2012black}).\ 
Together with \eqref{Sigma:Curvature}, this suggest that the first order information can be codified by requiring that some components of $\Sigmakil$ vanish on the hypersurface.\ These considerations, combined with Lemma \ref{lem:Sigma:eta:general},  lead us to the following definition (recall that $\bs{r}$ is defined in \eqref{defY(n,.)andQ}).\
\begin{definition} \label{defKH1}(Abstract Killing horizon of order one, AKH$_{1}$) Consider null hypersurface data $ \{\mathcal{N},{\gamma},\bs{\ell},\ell^{(2)},\bY\}$ admitting a gauge-invariant vector field $\ovkil\in\textup{Rad}\gamma$.\   
%with associated surface gravity $\kappa$ according to Lemma \ref{lemkappagauge}. 
%Define $\mathcal{S}\defi \{p\in\mathcal{N}\spc\vert\spc \ovkil\vert_p=0\}$ and 
  Let $\alpha\in\Fcal(\N)$ be given by $\ovkil=\alpha n$.\  
  %and $\bs{r}$  be defined by \eqref{defY(n,.)andQ}.\ 
  Then, $\hypdata$ defines an abstract Killing horizon of order one if it is an abstract Killing horizon of order zero and, in addition,
  \begin{align}
        0= 
         \nablao_{a} \nablao_b \alpha  
+ 2 (\nablao_{(a} \alpha)
         (\sone_{b)} - r_{b)} ) +          2 \alpha  \nablao_{(a} \sone_{b)}  +
  n (\alpha) \Y_{ab} 
         - \alpha\pounds_{n}\Y_{ab}.
          \label{condAKH1}
        \end{align}  
\end{definition}
%\begin{remark}\label{rem:horizon:discussion}
An abstract Killing horizon of order one embedded in $(\M,g)$ does not need to satisfy \eqref{condKH0}, so in general it does not define a Killing horizon of order zero.\  This may be confusing at first sight.\ The key to understand the terminology is the word ``\textit{abstract}".\ Any definition that carries ``abstract'' in the name is fully insensitive to the data being embedded or not, hence to any kind of extension of $\kil$.\  Since Definition \ref{defKH0} of a Killing horizon of order zero is embedded and requires the existence of an  extension of $\kil$ with specific properties, it follows that 
%one is logically forced to conclude that
  \textit{abstract Killing horizons of order one may not correspond to Killing horizons of order zero, even if the former happen to be embedded}.
%\end{remark}

We have motivated Definition  \ref{defKH1} in terms of $\Sigmakil$.\ However, we still need to explain
in what sense Definition
\ref{defKH1} is connected to the vanishing of certain components of
$\Sigmakil$ and also to the vanishing
of the first transverse derivative of the deformation tensor $\Kkil$.\ To establish the link, we must of course assume that the data is embedded.\ We start by proving that \textit{a $\{\phi,\rig\}$-embedded AKH$_1$
%abstract Killing horizon of order one 
necessarily satisfies
$\Sigmakil(\phi_{\star}Z_1,\phi_{\star}Z_2)=0$, $Z_1,Z_2\in\Gamma(T\N)$ and $\bcalP=0$}.\ By \eqref{calPab:expression}, condition $\bcalP=0$ can be understood as a restriction on the quantities $\{\w,\bqone,\bcalY\}$ introduced in \eqref{defwpqb}-\eqref{defwpqb:Y} (note that $\p$ plays no role, as $\bU=0$).\ It turns out that when the horizon satisfies the additional restriction $\bqone=0$ (which indeed occurs for Killing horizons of order zero, see  Proposition \ref{prop:Keta=0:KH0}), 
%$\calY=0$ and hence 
the pull-back of the first transverse derivative of $\Kgen$ 
%the deformation tensor 
vanishes everywhere on the hypersurface.\  
%The vanishing of $\bqone$ will be connected with the notion of Killing horizon of order zero.
%
%We will then find
%a general identity 
%involving  $\{\bqone,\w,\calY\}$ valid  for any AKH$_{1}$. The tensor $\calY$ encodes precisely the pull-back to $\N$ of the first transverse derivative of $\Kkil$ and we will show that the identity  forces $\calY=0$ whenever
%$\bqone=0$. The vanishing of $\bqone$ will be connected with the notion of
%Killing horizon of order zero and will allow us to introduce the new notion of \textit{Killing horizon of order $\nicefrac{1}{2}$} to refer to a KH$_0$ which also satisfies \eqref{condAKH1}.
%
%\tcr{In the following lemma we address the first two results.}
%stages of the procedure above.  
\begin{proposition}\label{lem:Sigma:eta:0:KH}
  Let   $\hypdata$ be an abstract Killing horizon of order one
  $\{\phi,\rig\}$-embedded in a spacetime $(\M,g)$, and 
  %.\ Let 
  $\eta$ be any extension of $\phi_{\star}\ovkil$ off $\phi(\N)$.\  
  %and $\Sigmakil\defi \pounds_{\eta} \nabla$.\ 
  Then,
\begin{align}
%\label{result:1:KH}
\Sigmakil(\phi_{\star}Z_1,\phi_{\star}Z_2)=0,\quad\forall Z_1,Z_2\in\Gamma(T\N)\qquad\text{and}\qquad
%\\
\label{Sigma:w:qone:bcalY} \calP_{ab} 
%\defi\nablao_{(a}\qone_{b)}+\w\Y_{ab}-\calY_{ab}
=0.
\end{align}
%where $\{\bqone,\w,\calY\}$ are defined in \eqref{defwpqb}-\eqref{defwpqb:Y},
If, in addition, condition $\bqone=0$ holds, then 
\begin{equation}
	\label{Sigma:w:qone:bcalY:2}
	\phi^{\star}(\pounds_{\rig}\Kkil)=0.
\end{equation}
\end{proposition}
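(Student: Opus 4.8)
The plan is to exploit that an AKH$_1$ is in particular an AKH$_0$, so that condition $(ii)$ of Definition \ref{defKHD0} forces $\bU=0$. This places us exactly in the totally geodesic setting studied at the end of Section \ref{app:rig(sigma)}, where the relevant objects are insensitive to the extension of $\kil$ off $\phi(\N)$. All three assertions will then follow by reading off the simplified $\bU=0$ expressions of Lemmas \ref{lem:rig(sigma)} and \ref{lem:Sigma:eta:general} and invoking the defining relation \eqref{condAKH1}.

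First I would establish the vanishing of $\Sigmakil$ on tangent vectors. Since $\bU=0$, Lemma \ref{lem:Sigma:eta:general} yields \eqref{Sigma:U=0:xprssn}, in which the coefficient multiplying $\nu$ is precisely the right-hand side of the AKH$_1$ condition \eqref{condAKH1}. As that coefficient vanishes by hypothesis, one obtains $\Sigmakil(e_a,e_b)=0$, and by linearity $\Sigmakil(\phi_{\star}Z_1,\phi_{\star}Z_2)=0$ for all $Z_1,Z_2\in\Gamma(T\N)$, for any extension $\eta$. Next I would deduce $\calP_{ab}=0$. The cleanest route is \eqref{Sigma:U=0:xprssn2}, which states $\Sigmakil(e_a,e_b)=\calP_{ab}\,\nu$ whenever $\bU=0$; since $\nu=\phi_{\star}n$ is a nowhere-vanishing null generator and the left-hand side has just been shown to vanish, we conclude $\calP_{ab}=0$. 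Equivalently, one may particularize \eqref{Lie:id:sigma} to $\bU=0$ and note that its right-hand side coincides termwise (after symmetrizing $2(\sone_{(b}-r_{(b})\nablao_{d)}\alpha$) with \eqref{condAKH1}.

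Finally, for the last statement I would expand the definition \eqref{calPab:expression} of $\bcalP$. Imposing $\bU=0$ removes the $\p\,\bU$ term, and the extra hypothesis $\bqone=0$ (which also gives $\w=\bqone(n)=0$) removes both $\nablao_{(a}\qone_{b)}$ and $\w\,\Y_{ab}$, leaving $\calP_{ab}=-\calY_{ab}$. Combining this with $\calP_{ab}=0$ from the previous step forces $\bcalY=0$, and by the definition \eqref{defwpqb:Y} this is exactly $\phi^{\star}(\pounds_{\rig}\Kkil)=0$.

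The argument is essentially a bookkeeping exercise once Lemmas \ref{lem:rig(sigma)} and \ref{lem:Sigma:eta:general} are in hand; no genuinely hard step remains. If there is any point worth flagging, it is the conceptual one that the $\bU=0$ forms of $\Sigmakil(e_a,e_b)$ and of $\bcalP$ are extension-independent, so the identities \eqref{Sigma:w:qone:bcalY}, although phrased for an arbitrary extension $\eta$, actually reduce to the purely abstract relation \eqref{condAKH1}. The one small fact that must not be overlooked is the implication $\bqone=0\Rightarrow\w=0$, without which the reduction $\calP_{ab}=-\calY_{ab}$ in the last step would not close.
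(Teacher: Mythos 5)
Your proof is correct and follows essentially the same route as the paper: invoke $\bU=0$ from the AKH$_0$ property, read off \eqref{Sigma:U=0:xprssn} together with \eqref{condAKH1} to kill $\Sigmakil(\phi_{\star}Z_1,\phi_{\star}Z_2)$, use \eqref{Sigma:U=0:xprssn2} (with $\nu$ nowhere zero) to get $\bcalP=0$, and then set $\bU=0$, $\bqone=0$, $\w=0$ in \eqref{calPab:expression} to conclude $\bcalY=0$, i.e.\ \eqref{Sigma:w:qone:bcalY:2}. The flagged detail $\bqone=0\Rightarrow\w=0$ is exactly the observation the paper also makes.
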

\begin{proof}
  Since $\bU=0$, the quantity
    $\Sigmakil(\phi_{\star}Z_1,\phi_{\star}Z_2)$ is given by \eqref{Sigma:U=0:xprssn} and its vanishing is a direct consequence of \eqref{condAKH1}.\ 
%    , after one takes into account that
%    ${\sigo}{}^c{}_{ab} = 2 \nablao_{(a} \sone_{b)} n^c$ when
%    $\bU=0$ (see \eqref{sigo} in Lemma \ref{ExpreSigo}).
    The claim \eqref{Sigma:w:qone:bcalY}
    then follows from 
    %\eqref{calPab:expression} and 
    \eqref{Sigma:U=0:xprssn2}.\ Setting $\bU=0$ and $\bqone=0$ (hence $\w=\bqone(n)=0$, cf.\ \eqref{defwpqb}) in \eqref{calPab:expression} and using $\bcalP=0$ gives $\bcalY=0$, which proves \eqref{Sigma:w:qone:bcalY:2}. %(cf.\  \eqref{defwpqb:Y}). 
\end{proof}
%Equation \eqref{Sigma:w:qone:bcalY} 
Propositions \ref{prop:Keta=0:KH0} and \ref{lem:Sigma:eta:0:KH} relate the abstract Definitions \ref{defKHD0} and 
\ref{defKH1} with the deformation tensor $\Kgen$ and its first transverse 
%Lie 
derivative $\pounds_{\rig}\Kgen$ at the horizon.\ 
%It is immediate to check that if $\bqone$ vanishes  on $\N$ (and hence also $\w=\bqone(n)=0$), then $\calY=0$.
%\begin{corollary}
%Under the hypotheses of Proposition \ref{lem:Sigma:eta:0:KH}, whenever $\bqone=0$ it holds
%\begin{equation}
%  %\label{Sigma:w:qone:bcalY:2}
%  \color{red}
%  \phi^{\star}(\pounds_{\rig}\Kkil)=0.
%\end{equation} 
%\end{corollary}
In particular, 
%the results of Propositions \ref{prop:Keta=0:KH0} and \ref{lem:Sigma:eta:0:KH} 
they 
allow us to conclude that 
%
%the results \eqref{Sigma:w:qone:bcalY}-\eqref{Sigma:w:qone:bcalY:2} in Proposition \ref{lem:Sigma:eta:0:KH} allow us to conclude that, 
%
%if 
\textit{an embedded abstract Killing horizon of order one which is, in addition, a Killing horizon of order zero   
%(i.e.\ it satisfies \eqref{condKH0} for at least one extension $\kil$ of $\phi_{\star}\ovkil$), 
satisfies the two conditions $\Kgen\vert_{\phi(\N)}=0$ and $\phi^{\star}(\pounds_{\rig}\Kgen)\vert_{\phi(\N)}=0$}.
We emphasize 
that 
the crucial fact that has allowed us to define an abstract notion of Killing horizon of order one is the property that $\Sigma(\phi_{\star}\vecY,\phi_{\star}\vecZ)$ 
%, $\vecY,\vecZ\in\Gamma(T\N)$ 
depends only on hypersurface data quantities.\ In turn, this relies essentially on the  condition $\bU=0$ (cf.\ \eqref{Sigma:U=0:xprssn}).\ If the right hand side of \eqref{Sigma:U=0:xprssn} had involved any combination of $\{\bqone,\w,\p,\bcalY\}$ depending on the behaviour of $\kil$ off $\phi(\N)$, then it would have been impossible to impose conditions only on the data $\hypdata$ guaranteeing the validity of $\Sigmakil(\phi_{\star}\vecY,\phi_{\star}\vecZ)=0$ in the embedded picture.
%\end{remark}

%-----------resumen--------------------
%-----------resumen--------------------
%-----------resumen--------------------
%
%To summarize, we have introduced abstract notions of Killing horizons or order zero and one  (Definitions \ref{defKHD0} and \ref{defKH1}) and an 
%%two 
%embedded notion of Killing horizon of order zero %and one-half
%(Definition \ref{defKH0}).\  
%%and \ref{defKHhalf}). 
%The whole construction has been devised so that
%%Summarizing, so far 
%%in this section 
%%we have introduced the two fully abstract notions of Killing horizons of order zero and one 
%%as well as the concepts of Killing horizons of order zero and $\nicefrac{1}{2}$ , which apply at the embedded level. The whole construction has been performed so that %, at the embedded level, 
%\begin{itemize}
%\item[$(a)$] $\phi^{\star}\Kkil=0$ for an \textit{embedded} AKH$_0$ (which does \textit{not} necessarily correspond to a KH$_0$), and
%\item[$(b)$] $\Kkil\vert_{\phi(\N)}=0$ for a KH$_0$.\miguel{He retirado todo lo que tiene que ver con KH$_{\nicefrac{1}{2}}$.} 
%\item[$(c)$] $\Kkil\vert_{\phi(\N)}=0$ and $\phi^{\star}(\pounds_{\rig}\Kkil)\vert_{\phi(\N)}=0$ for 
%%KH$_{\nicefrac{1}{2}}$.
%a KH$_0$ which is, in addition, an AKH$_1$.
%\end{itemize}
%
%-----------resumen--------------------
%-----------resumen--------------------
%-----------resumen--------------------

There are additional refinements that one can do in this classification of horizons.\ For instance, in some situations it can be of interest to impose only a subset of all defining conditions.\ In fact, in Section \ref{sec:connection:NEH,MKH} 
%we will address a situation where 
the following notion of abstract \textit{weak} Killing horizon of order one plays a role.
  
%\color{blue}
%
%The reader may have noticed that, prior to the Definition \ref{defKH1} of AKH$_1$, there is still space for an in-between notion, namely that of an AKH$_0$ satisfying condition \eqref{condAKH1} \textit{but contracted with $n^a$}. As we shall see in Section 
%, such notion turns out to be useful, so we include it as well. We refer to it as weak abstract Killing horizon of order one.

\begin{definition} \label{def:weak:KH1}(Abstract weak Killing horizon of order one, AWKH$_1$)
    %Consider null hypersurface data $ \{\mathcal{N},{\gamma},\bs{\ell},\ell^{(2)},\bY\}$ admitting a gauge-invariant vector field $\ovkil\in\textup{Rad}\gamma$. Let $\alpha\in\Fcal(\N)$ be given by $\ovkil=\alpha n$ and $\bs{r}$  be defined by \eqref{defY(n,.)andQ}. Then, $\hypdata$ defines a
A abstract weak Killing horizon of order one is an abstract Killing horizon of order zero that satisfies, in addition, 
\begin{align}
0=  \alpha\pounds_{n}(\bsone-\bs{r})+d\big(n(\alpha)\big)+\Q d\alpha.
\label{cond:weak:AKH1}
\end{align}  
\end{definition}
The name comes from the fact that condition \eqref{cond:weak:AKH1} is precisely the
contraction of \eqref{condAKH1} with $n^a$, so the horizon is more special than an abstract Killing horizon of order zero, but less restrictive than an abstract Killing horizon of order one. To prove this claim 
%yields \eqref{cond:weak:AKH1}. Indeed,
it suffices to insert
the identities (recall that $\bU=0$) %for AKH$_0$s)
\begin{align*}
n^a\nablao_a\nablao_b\alpha&=n^a\nablao_b
\nablao_a\alpha=\nablao_b\big( n(\alpha)\big)-\big(\nablao_a\alpha\big)\big(\nablao_b n^a\big)\stackbin{\eqref{nablaonnull}}=\nablao_b\big( n(\alpha)\big)-n(\alpha)\sone_b,\\
2n^a\big(\nablao_{(a}\alpha\big)(\sone_{b)}-r_{b)})&=n(\alpha)(\sone_b-r_b)-\big(\nablao_b\alpha\big)r_an^a
\stackbin{\eqref{defY(n,.)andQ}}=n(\alpha)(\sone_b-r_b)+\Q\nablao_b\alpha,\\
  2\alpha n^a\nablao_{(a}\sone_{b)}&=\alpha\lp \pounds_{n}\sone_b+\nablao_b(\bsone(n))-2\sone_a\big(\nablao_bn^a\big)\rp\stackbin{\eqref{nablaonnull}}=\alpha \pounds_{n}\sone_b, 
  %\\
%n^a\pounds_n\Y_{ab} & \stackbin{\eqref{defY(n,.)andQ}}=\pounds_nr_b.
\end{align*}
and $n^a\pounds_n\Y_{ab} \stackbin{\eqref{defY(n,.)andQ}}=\pounds_nr_b$ into the contraction of \eqref{condAKH1} with $n^a$.
%%$n^a\Y_{ab}=r_b$ 
%and 

We conclude the section with a comment on how the defining conditions of abstract Killing horizons of order zero/one read in coordinates adapted to the hypersurface.\ 
%This is done in the next remark.\ 
%\begin{remark}\normalfont
%notions of abstract Killing horizons of order zero and one 
Let $\hypdata$ be null hypersurface data $\{\phi,\rig\}$-embedded in a spacetime $(\M,g)$.\ 
%, and assume that $\N$ has product topology $S\times\mathbb{R}$ with $S$ being a cross-section .\ 
Locally, 
%By restricting the hypersurface $\phi(\N)$ if necessary, 
one can always construct  
%\begin{remark}\normalfont
%Let $\nullhyp$ be a $\n$-dimensional  null hypersurface embedded in a spacetime $(\M,g)$.\ One can always construct 
Gaussian null coordinates \cite{kunduri2013classification} $\{u,v,x^A\}$ 
%adapted to $\nullhyp$ 
in which $\phi(\N)\equiv\{u=0\}$ and the metric $g$ reads
\begin{equation}
g=-2dv\Big( du+u h_A(u,v,x^B) dx^A+u H(u,v,x^B) dv\Big)+q_{AB}(u,v,x^B)dx^Adx^B.
\end{equation}
%where $h_A$, $H$ and $q_{AB}$ are suitable scalar functions.\ 
%In these circumstances, the hypersurface is given by $\nullhyp\equiv\{u=0\}$.\ 
For convenience, we introduce the notation 
\begin{align*}
\widehat{h}_A\defi h_A(0,v,x^B),\quad \widehat{H}\defi H(0,v,x^B),\quad \widehat{q}_{AB}\defi q_{AB}(0,v,x^B).
\end{align*}
%and identify scalar functions and vector fields on $\phi(\N)$ with their counterparts on $\N$.\ 
With the \textit{choice} of rigging $\rig=-\cp_{u}\vert_{\phi(\N)}$,  
%along $\phi(\N)$, 
%we can compute 
the 
%hypersurface 
data fields $\gamma$, $\ellc$, $\elltwo$ and $n$ are given by (cf.\ \eqref{emhd}, \eqref{prod1}-\eqref{prod2}) %($d$ denotes exterior derivative on $\N$)
\begin{align*}
\gamma=\widehat{q}_{AB}dx^A\otimes_sdx^B,\qquad \ellc=dv,\qquad\elltwo=0,\qquad n=\cp_{v},
\end{align*}
so in particular 
%this means that 
$\bF=0$ and hence $\bsone=0$ (recall \eqref{threetensors}).\ The coordinate $v$ therefore defines the null direction of $\N$ and a foliation of the hypersurface by spacelike cross-sections.\ %Clearly $\{\cp_{x^A}\}$ is a basis of vector fields tangent to the foliation, and satisfying $\pounds_n\cp_{x^A}=0$.\ 
The quantity $\Q$ and the pull-backs of $\bs{r}$ and $\bY$ to the leaves $\{u=0,v=\text{const.}\}$ 
%of the foliation 
read (cf.\ \eqref{defY(n,.)andQ}, \eqref{YtensorEmbDef}) 
\begin{align*}
\Q=-\widehat{H},\qquad r_A=\frac{\widehat{h}_A}{2},\qquad \Y_{AB}= -\frac{1}{2}\cp_{u}q_{AB}\vert_{\phi(\N)}.
\end{align*}
In fact, a direct calculation shows that 
%It is straightforward to  prove that 
$\Q$ coincides with the surface gravity of %the %null 
%generator 
$n=\cp_{v}$.\ 

Now, if $\hypdata$ is an AKH$_0$ with symmetry generator $\ovkil=\alpha n$ then the tensor field $\bU=\frac{1}{2}\pounds_n\gamma$ (cf.\ \eqref{threetensors}) vanishes identically.\ In coordinates, this translates into %This entails,   
%$\pounds_n\gamma=0$, i.e.\ 
$\cp_v\widehat{q}_{AB}=0$, so that the leaves of the foliation are all isometric (which is a well-known property of totally geodesic null hypersurfaces).\ If, in addition, $\hypdata$ constitutes an AKH$_1$ then a straightforwawrd computation allows one to write equations  \eqref{condAKH1} as
%$Y_{AB}=-\frac{1}{2}\cp_u\gamma_{AB}$
%$0=D_AD_B\alpha - h_{(A}(D_{B)}\alpha)+(\cp_v\alpha)Y_{AB}-\alpha\cp_vY_{AB}$
\begin{align}
	\nn	0&=\cp_v\cp_v\alpha-\alpha \cp_v\widehat{H}-\widehat{H}\cp_v\alpha, \\
	\nn 0&=\cp_{x^A}\cp_v\alpha -\frac{\alpha}{2} \cp_v \widehat{h}_A-\widehat{H}\cp_{x^A}\alpha,\\
	\nn 0&=D_AD_B\alpha - \widehat{h}_{(A}D_{B)}\alpha-\frac{1}{2}(\cp_v\alpha)\cp_uq_{AB}\vert_{\phi(\N)}+\frac{\alpha}{2}\cp_v\cp_uq_{AB}\vert_{\phi(\N)}.
	%
	%
%\cp_v\kappa=0, \qquad \cp_{x^A}\kappa=\frac{\alpha}{2}\big(\cp_v \widehat{h}_A-2\cp_{x^A}\widehat{H}\big),\qquad 0=D_AD_B\alpha-(D_{(A}\alpha) h_{B)}-\frac{1}{2}(\cp_v\alpha) (\cp_{u}q_{AB})\vert_{\phi(\N)},
\end{align}
where $D$ is the Levi-Civita covariant derivative on each leaf $\{u=0,v=\text{const.}\}\subset\N$.\ 
%The first two equations are therefore restrictions on the surface gravity $\kappa$ of the symmetry generator $\ovkil$, while the latter relates second derivatives of the function $\alpha$ with the derivative $\cp_{u}q_{AB}$ at $\phi(\N)$.
%correspond to condition \eqref{} Thus, they are conditions on the surface gravity $\kappa$ of the symmetry generator $\phi_{\star}\ovkil$.\ 
%suppose that $\nullhyp$ is an $\{\phi,\rig\}$-embedded AKH$_0$ with symmetry generator $\kil\defi\phi_{\star}\ovkil$, $\ovkil=\alpha n$.\  
%\miguel{Nota para Marc: Las dos ecuaciones primeras son las de horizonte\\  debil de orden uno. Añado una explicacion en \\ este sentido?}
%\end{remark}
Note that since $\alpha$ does not vanish on open sets, these equations can be seen as transport equations for $\widehat{H}$, $\widehat{h}_A$ and $\widehat{q}_{AB}$.\ Note also that the presence of zeroes of $\alpha$ affects greatly the type of equations that need to be solved.	
%\end{remark}

\section{Connection with non-expanding and isolated horizons}\label{sec:connection:NEH,MKH}

At the embedded level, a standard Killing horizon clearly encompasses any of the previously introduced notions of horizons.\ Nevertheless, there exist weaker versions of horizons in the literature, among which non-expanding, weakly isolated, and isolated horizons \cite{ashtekar2000generic,ashtekar2002geometry,ashtekar2000isolated,gourgoulhon20063+,jaramillo2009isolated,krishnan2002isolated} stand out.\ A natural question that arises at this point is 
what is the connection 
%whether they can be connected 
with our definitions above.\  %in some way. % (see Section \ref{sec:prelim:mix:nullhyp:horizons})
%Answering this question is precisely the aim of this section.\ 
We devote this section to answering this question.\ 
Besides establishing the connection with previous notions, the discussion will also allow us to emphasize what are the main differences. 
%In this section, our aim is precisely to establish such connection. 
%This is precisely our aim in this section. 
As we shall see, %the latter 
our abstract notions 
have three advantages with respect to 
%the former,
other sorts of horizons introduced in the literature, 
%those of non-expanding, weakly isolated and isolated horizons, 
namely that $(i)$ they \textbf{do not require the existence of an ambient space} where the horizon needs to be embedded, $(ii)$ they \textbf{do not restrict the topology of the horizon} and $(iii)$ they enable the \textbf{existence of fixed points} within the horizon.\ %, general topology and non-necessity of an ambient space.

The organization of this section is as follows.\ We first revisit basic results of totally geodesic null hypersurfaces and horizons (see e.g.\ \cite{duggal2007null, galloway2004null,gourgoulhon20063+} for further details) and then we connect our abstract notions with 
%%Definitions \ref{defKHD0}, \ref{defKH1} and \ref{def:weak:KH1} with those of 
the concepts of non-expanding, weakly isolated and isolated horizons.  

%\subsection{Basic results on embedded null hypersurfaces and horizons}\label{sec:prelim:mix:nullhyp:horizons} 
%%\tcr{!!!!!!!!!!!!!!!!}

Consider a totally geodesic (i.e.\ with vanishing second fundamental form) null hypersurface $\nullhyp$ embedded in a spacetime $(\M,g)$.\   
%satisfies the property that its second fundamental form vanishes at all its points. 
%In that case, 
Given a null vector field $\kil\in\Gamma(T\nullhyp)$ with zero set $\mathcal{S}\defi \{p\in\nullhyp\spc \vert \spc \kil\vert_p=0\}$, one can define \cite{gourgoulhon20063+} a one-form $\bkilone$ on $\nullhyp\setminus\mathcal{S}$ by 
\begin{equation}
	\label{def:bkilone}\nabla_X\kil\spc\stackbin{\nullhyp\setminus\mathcal{S}}=\spc\bkilone(X)\kil\qquad \forall X\in\Gamma\big(T(\nullhyp\setminus\mathcal{S})\big).
\end{equation}
Non-expanding, weakly isolated, and isolated horizons are defined as follows \cite{ashtekar2000generic,ashtekar2002geometry,ashtekar2000isolated,gourgoulhon20063+,jaramillo2009isolated,krishnan2002isolated}.
%
% (see e.g.\ \cite{ashtekar2000isolated}, \cite{ashtekar2000generic}, \cite{krishnan2002isolated}, \cite{ashtekar2002geometry}, \cite{gourgoulhon20063+}, \cite{jaramillo2009isolated} and references therein). 
\begin{definition}\label{def:NEH}
	(Non-expanding horizon) Let $\nullhyp$ be a null hypersurface embedded in a spacetime $(\M,g)$.\ Then, $\nullhyp$ is a non-expanding horizon if it is totally geodesic and has product topology $S\times\mathbb{R}$, where $S$ is a cross-section 
	%\miguel{Primera mencion a ``cross-section" desde preliminaries} 
	of $\nullhyp$ and the null generators are along $\mathbb{R}$. 
%	$(i)$ $\nullhyp$ is diffeomorphic to $\mathbb{S}^{\n-1}\times\mathbb{R}$, where $\mathbb{S}^{\n-1}$ is the $(\n-1)$-sphere and the null generators of $\nullhyp$ are along $\mathbb{R}$; $(ii)$ the expansion of any null generator of $\nullhyp$ vanishes and $(iii)$ the \textcolor{blue}{Einstein tensor $G^{\alpha}{}_{\beta}$ of $g$ satisfies that $-G^{\alpha}{}_{\beta}k^{\beta}$}
%	is causal and future on $\nullhyp$ for whatever choice of future null generator $k$ of $\nullhyp$.
\end{definition}
%\begin{remark}\label{rem:NEH:U=0}
%	A non-expanding horizon $\nullhyp$ is totally geodesic. 
%\end{remark}
\begin{definition}\label{def:WIH}(Weakly isolated horizon) A weakly isolated horizon is a non-expanding horizon $\nullhyp$ endowed with a null generator $\kil$ such that 
	\begin{equation}
		\label{def:eqWIH}(\pounds_{\kil}\bkilone)(X)\stackbin{\nullhyp}=0, \qquad\forall X\in\Gamma(T\nullhyp),
	\end{equation}
	where $\bkilone$ is a one-form field along $\nullhyp$ satisfying \eqref{def:bkilone}.
\end{definition}
\begin{definition}\label{def:IsolatedHorizon}
	(Isolated horizon) Consider a non-expanding horizon $\nullhyp$ and let $\ovnabla$ be the connection induced\footnote{For a general embedded null hypersurface the connection induced from the ambient space depends on the choice of a rigging (see e.g.\ \cite{mars1993geometry}), but this dependence drops off when the hypersurface is totally geodesic \cite{ashtekar2002geometry}.} on $\nullhyp$ from the Levi-Civita covariant derivative $\nabla$ of $(\M,g)$.\ Then, $\nullhyp$ is an isolated horizon if it is 
	%is a non-expanding horizon $\nullhyp$ 
	endowed with a null generator $\kil$ such that $[\pounds_{\kil},\ovnabla_a]=0$ on $\nullhyp$.
\end{definition}

%\subsection{Non-expanding, weakly isolated and isolated horizons}

We can now compare Definitions \ref{defKHD0}, \ref{defKH1} and \ref{def:weak:KH1} with those of non-expanding, weakly isolated and isolated horizons.  
\begin{nonexphor*}
\normalfont
The connection between non-expanding horizons and abstract Killing horizons of order zero follows immediately from Definition \ref{def:NEH}.\  \textit{A non-expanding horizon is an embedded AKH$_0$ $\hypdata$ with the additional condition that $\N$ has product topology}. 
%$\mathbb{S}^{\n-1}\times\mathbb{R}$, where $\mathbb{S}^{\n-1}$ is the $(\n-1)$-sphere and the generators of $\N$ are along $\mathbb{R}$. In particular, it is worth stressing that the notion of AKH$_0$ does not require any energy condition whatsoever. 
\hfill $\blacksquare$
\end{nonexphor*}

\begin{weakisolhor*}
\normalfont
%
%As mentioned above,  
%%In Section \ref{sec:prelim:mix:nullhyp:horizons} we have proven that, 
%given an embedded totally geodesic null hypersurface $\nullhyp$ 
%%embedded in a spacetime $(\M,g)$ 
%and a null vector field $\kil\in\Gamma(T\nullhyp)$ (possibly with a non-empty subset $\mathcal{S}$ of $\nullhyp$ where $\kil\vert_{\mathcal{S}}=0$), one can define a one-form $\bkilone$ on $\nullhyp\setminus\mathcal{S}$ according to \eqref{def:bkilone}. 
%%by $\nabla_X\kil=\bkilone(X)\kil$ (cf.\ \eqref{def:bkilone}). 
Weakly isolated horizons and Definition \ref{def:weak:KH1} are closely related, as we shall prove next.\ In order to understand their relationship, 
%weakly isolated horizons and Definition \ref{def:weak:KH1}, 
we first need to obtain an explicit expression for the pull-back of $\bkilone$ to a hypersurface data.\ So, let us consider null hypersurface data $\{\N,\gamma,\ellc,\elltwo,\bY\}$ embedded in a spacetime $(\M,g)$ with embedding $\phi$ and rigging $\rig$ 
%(see \eqref{emhd}, \eqref{YtensorEmbDef}) 
so that $\phi(\N)=\nullhyp$.\  
%The fact that $\nullhyp$ is everywhere null means that $\hypdata$ is null hypersurface data, 
Given 
that 
$\phi_{\star}n$ is a null generator of $\nullhyp$
we can define a function $\alpha \in \Fcal(\nullhyp)$ by 
%and that
$\kil=\alpha \phi_{\star}n$.\   
%for a function $\alpha\in\Fcal(\nullhyp)$ that vanishes only at $\mathcal{S}$. Moreover,
Moreover, $\nullhyp$ being totally geodesic means that 
%its second fundamental form w.r.t.\ any null generator is zero, so 
$\bU=0$.  %holds. 

The pull-back $\phi^{\star}\bkilone$ can be expressed in terms of the one-forms $\bsone$, $\bs{r}$ and $d\alpha$ by 
%as follows.\ 
combining \eqref{nablaonnull}, 
%\eqref{normal}, 
\eqref{nablaXYnablao} and \eqref{def:bkilone}.\ Specifically, one gets
\begin{align*}
\bkilone(\phi_{\star}X)\kil\hspace{0.02cm}\stackbin{\nullhyp\setminus\mathcal{S}}=\hspace{0.02cm}\nabla_{\phi_{\star}X}\kil\hspace{0.02cm}\stackbin{\nullhyp\setminus\mathcal{S}}=\hspace{0.02cm}\phi_{\star}\Big( \nablao_X(\alpha n)-\alpha\bs{r}(X)n\Big)\hspace{0.02cm}\stackbin{\nullhyp\setminus\mathcal{S}}=\hspace{0.02cm} \big( d\alpha +\alpha\lp  \bsone-\bs{r}\rp\big)(X) \hspace{0.05cm}\phi_{\star}n,\quad X\in\Gamma(T\N),
\end{align*} 
which, together with $\kil=\alpha\phi_{\star}n$, yields, at those points of $\N$ where $\alpha\neq0$,  %(recall that $\alpha\vert_{\hor}\neq0$)
\begin{equation}
\label{bkilone:and:alpha:s-r}\phi^{\star}\bkilone=\frac{d\alpha}{\alpha} + \bsone-\bs{r}.
\end{equation}
%Observe that, although $\bkilone$ is not univocally defined (see the discussion in Section \ref{sec:prelim:mix:nullhyp:horizons}), its pull-back to $\N$ is unique, as $\phi^{\star}\big(g(\kil,\cdot)\big)=0$. 
We now compute the pull-back $\phi^{\star}\big(\pounds_{\kil}\bkilone\big)$ at points where $\alpha\neq0$ (we use the same notation as before and define $\ovkil=\alpha n$ by $\kil\defi \phi_{\star}\ovkil$).\ A direct calculation gives
\begin{align*}
&\phi^{\star}\big(\pounds_{\kil}\bkilone\big)\stackbin{\eqref{LiePull}}=\pounds_{\ovkil}(\phi^{\star}\bkilone)=\alpha\pounds_n(\phi^{\star}\bkilone)+(\phi^{\star}\bkilone)(n)d\alpha=\alpha\pounds_n\lp\frac{d\alpha}{\alpha} + \bsone-\bs{r}\rp+\lp\frac{n(\alpha)}{\alpha} -\bs{r}(n)\rp d\alpha\\
&=\alpha\lp\frac{d\big(n(\alpha)\big)}{\alpha}-\frac{n(\alpha) d\alpha}{\alpha^2} + \pounds_n(\bsone-\bs{r})\rp+\lp\frac{n(\alpha)}{\alpha} +\Q\rp d\alpha= d\big(n(\alpha)\big)+ \alpha\pounds_n(\bsone-\bs{r}) +\Q d\alpha, 
\end{align*}
after using \eqref{bkilone:and:alpha:s-r}, $\bsone(n)=0$,  $\pounds_n(d\alpha)=d\big(n(\alpha)\big)$ and $-\bs{r}(n)=\Q$ (cf.\ \eqref{defY(n,.)andQ}).  

It is now immediate to check that
$\phi^{\star}\big(\pounds_{\kil}\bkilone\big)=0$ (or equivalently 
$(\pounds_{\kil}\bkilone)(X)=0\spc\forall X\in\Gamma(T\nullhyp)$) holds 
for any (embedded) abstract weak Killing horizon of order one (because of condition \eqref{cond:weak:AKH1}).\ Combining this fact with \eqref{def:eqWIH}, we conclude that \textit{a weakly isolated horizon is an embedded abstract weak Killing horizon of order one together with the additional restrictions that $(i)$ the horizon has product topology 
%is foliated by spheres 
\textup{(because a weakly isolated horizon is a non-expanding horizon, see Definition \ref{def:WIH})} 
and $(ii)$ that the symmetry generator $\kil$ is everywhere non-zero on the horizon}. 
%
%(which, as already pointed out, has not been enforced in Definition \ref{def:weak:KH1}). 
%
Definition \ref{def:weak:KH1} enables much greater flexibility concerning the existence of fixed points on the horizon.\ This is not a minor point since the presence of fixed points can have important implications.\ For instance, a standard Killing horizon with vanishing surface gravity can have zeroes along some generators.\ The need to erase them can have unpleasant consequences, such as turning a null hypersurface with compact cross sections into a horizon with non-compact sections.\ Our definition does not have this problem.\ This can have interesting consequences, e.g.\ in the classification of near horizon geometries. \hfill $\blacksquare$
\end{weakisolhor*}

\begin{isolhor*}
\normalfont
Consider now 
%Let us now consider 
%the setup above but for the case when $\nullhyp$ is 
an isolated horizon $\nullhyp$ %embedded on a spacetime $(\M,g)$
%(instead of a weakly isolated horizon) 
with generator $\kil$.\ The connection $\ovnabla$ induced on $\N$ from the ambient Levi-Civita covariant derivative $\nabla$ is given by (cf.\  \eqref{nablaXYnablao})
\begin{equation}
\label{def:ovnabla:U=0}\ovnabla_XW\hspace{0.05cm}\defi \hspace{0.05cm}\nablao_XW-\bY(X,W)n\quad\spc\forall X,W\in\Gamma(T\N),
%\qquad\Longleftrightarrow\qquad \stackrel{\circ}{\Gamma}\hspace{-0.1cm}{}^c_{ab}-\ov{\Gamma}{}^c_{ab}=n^c\Y_{ab},
\end{equation}
or, what is the same,
\begin{equation}
	\label{def:ovnabla:U=0:christoffel} \stackrel{\circ}{\Gamma}\hspace{-0.1cm}{}^c_{ab}-\ov{\Gamma}{}^c_{ab}=n^c\Y_{ab},
\end{equation}
where $\stackrel{\circ}{\Gamma}\hspace{-0.1cm}{}^c_{ab}$, $\ov{\Gamma}{}^c_{ab}$ are the corresponding Christoffel symbols of $\nablao$ and $\ovnabla$.\ 
Thus, condition $[\pounds_{\kil},\ovnabla_a]=0$ in Definition \ref{def:IsolatedHorizon} amounts to impose that the tensor Lie derivative of $\ovnabla$ along $\ovkil$ vanishes on the abstract manifold $\N$ (cf.\ \eqref{commutation}), namely that $\ov{\Sigma}\defi\pounds_{\ovkil}\ovnabla=0$.\ We therefore compute $\ov{\Sigma}$ 
%\footnote{Observe that $\ov{\Sigma}$ is a \textit{fully abstract} tensor field, as $\ovnabla$ can be defined independently of an ambient space (through \eqref{def:ovnabla:U=0}) and $\ovkil$ is an abstract vector.} 
explicitly and then check whether setting $\ov{\Sigma}=0$ indeed corresponds to any of our abstract definitions of horizons.\ As a prior step, it is convenient to prove the identity
\begin{align}
	\label{onemoreeq:1}\pounds_{\ovkil}\lp \F_{ab}+\Y_{ab} \rp=\alpha\big( \nablao_a\sone_b-\nablao_b\sone_a\big) +\alpha \pounds_{n} \Y_{ab}+(\nablao_{a}\alpha )\lp \sone_b+r_b \rp -(\nablao_{b}\alpha )\lp \sone_{a}-r_{a} \rp. 
\end{align}
By direct computation, we get
\begin{align}
	\nn \pounds_{\ovkil}\lp \F_{ab}+\Y_{ab} \rp&=\alpha \pounds_{n}\lp \F_{ab}+\Y_{ab} \rp+(\nablao_{a}\alpha )\lp \F_{cb}+\Y_{cb} \rp n^c+(\nablao_{b}\alpha )\lp \F_{ac}+\Y_{ac} \rp n^c\\
	\label{onemoreeq:2} &=\alpha \pounds_{n}\lp \F_{ab}+\Y_{ab} \rp+(\nablao_{a}\alpha )\lp \sone_b+r_b \rp -(\nablao_{b}\alpha )\lp \sone_{a}-r_{a} \rp, %\\
%	\label{onemoreeq:1}&=\alpha\lp \nablao_a\sone_b-\nablao_b\sone_a\rp +\alpha \pounds_{n} \Y_{ab}+(\nablao_{a}\alpha )\lp \sone_b+r_b \rp +(\nablao_{b}\alpha )\lp -\sone_{a}+r_{a} \rp, 
\end{align}
where in the second line we have used that $\bF$ is antisymmetric together with definitions 
%and the definitions $\bsone\defi\bF(n,\cdot)$ and $\bs{r}\defi\bY(n,\cdot)$ (cf.\ 
\eqref{threetensors}-\eqref{defY(n,.)andQ}.\ 
%).\ 
Inserting  
%Using the definition $\bF\defi \frac{1}{2}d\ellc$ (cf.\ \eqref{threetensors}) and 
Cartan's formula $(\pounds_n\bF)_{ab}=\big( d\bF(n,\cdot)+d(\bF(n,\cdot))\big)_{ab}\stackbin{\eqref{threetensors}}=(d\bsone)_{ab}=\nablao_a\sone_b-\nablao_b\sone_a$ into \eqref{onemoreeq:2},  equation \eqref{onemoreeq:1} follows at once.\ 
Now, 
%
%the precise definition of $\ov{\Sigma}$ is 
%%$\defi\pounds_{\ovkil}\ovnabla$ is
%\begin{align}
%	\ov{\Sigma} (X,W) \defi  \pounds_{\ovkil}\ovnabla_XW-\ovnabla_X\pounds_{\ovkil}W-\ovnabla_{\pounds_{\ovkil} X}W, \quad
%	\quad \forall X,W\in\Gamma(T\N)
%	\label{defderD}
%\end{align}
%and, 
%
in analogy with 
%the same way that $\Sigmakil$ fulfils 
\eqref{commutation}, 
the tensor $\ov{\Sigma}$ satisfies \cite{yano1957Lie}
\begin{align}
	\pounds_{\ovkil} \ovnabla_{a } T_{b_1 \cdots b_p } = \ovnabla_{a } \pounds_{\ovkil} T_{b_1 \cdots b_p } - \sum_{\mathfrak{i}=1}^p  \ov{\Sigma}^{d}{} _{a  b_\mathfrak{i}} T_{b_1 \cdots b_{\mathfrak{i}-1} d\hspace{0.03cm} b_{\mathfrak{i}+1} \cdots b_p }\quad \textup{for any $(0,p )$ tensor $T$ on $\N$.} \label{commutation:IHs}
\end{align}
%for any $(0,p )$ tensor $T$ on $\N$.\
Setting $T=\gamma$  in \eqref{commutation:IHs} yields 
%$\pounds_{\ovkil} \ovnabla_{a } \gamma_{bc} = \ovnabla_{a } \pounds_{\ovkil} \gamma_{bc} -  \ov{\Sigma}^{d}{} _{a b} \gamma_{dc} -  \ov{\Sigma}^{d}{} _{a  c} \gamma_{b d}$
$0 =   \ov{\Sigma}^{d}{} _{a b} \gamma_{dc} +  \ov{\Sigma}^{d}{} _{a  c} \gamma_{b d}$
because (recall $\gamma(n,\cdot)=0$ and $\bU=0$)
\begin{align*}
	\pounds_{\ovkil}\gamma\stackbin{\eqref{liefngammaANDU}}=2\alpha\bU=0\qquad\text{and}\qquad   \ovnabla_{a} \gamma_{bc}\stackbin{\eqref{def:ovnabla:U=0:christoffel}}=\nablao_{a}\gamma_{bc}\stackbin{\eqref{nablaogamma}}=0.
\end{align*}
Performing the same cyclic permutation of indices
%
% 
%Setting $T=\gamma$  in \eqref{commutation:IHs} and following the procedure 
%
%
that 
%led us to
gives  
%allowed us to derive 
%\eqref{Sigma:times:Sym:tensor} 
\eqref{Sigma:eta:def:tensor}, 
%before, 
one obtains 
%
%
%
%
%
%\begin{equation}
%\label{Sigma:times:Sym:tensor:IHs}2\gamma_{cd}\ov{\Sigma}^{d}{} _{a b}  = \mQ_{abc}+\mQ_{bc a} -\mQ_{c ab},\qquad\text{where}\qquad \mQ_{abc}\defi \ovnabla_{a} \pounds_{\ovkil} \gamma_{bc}-\pounds_{\ovkil} \ovnabla_{a} \gamma_{bc}.
%\end{equation}
%It turns out that the tensor $\mQ_{abc}$ vanishes identically, since (recall that $\gamma(n,\cdot)=0$ and $\bU=0$)
%\begin{align*}
%\pounds_{\ovkil}\gamma\stackbin{\eqref{liefngammaANDU}}=2\alpha\bU=0\qquad\text{and}\qquad   \ovnabla_{a} \gamma_{bc}\stackbin{\eqref{def:ovnabla:U=0}}=\nablao_{a}\gamma_{bc}\stackbin{\eqref{nablaogamma}}=0.
%\end{align*}
%%where for the right part we have used that $\gamma(n,\cdot)=0$, $\bU=0$.\  
%%\begin{align*}
%%\ovnabla_{a} \gamma_{bc}=\cp_a\gamma_{bc}-\ov{\Gamma}^d_{ba}\gamma_{dc}-\ov{\Gamma}^d_{ca}\gamma_{bd}\stackbin[\eqref{def:ovnabla:U=0}]{\gamma_{ab}n^b=0}=\cp_a\gamma_{bc}-\stackrel{\circ}{\Gamma}\hspace{-0.1cm}{}^d_{ba}\gamma_{dc}-\stackrel{\circ}{\Gamma}\hspace{-0.1cm}{}^d_{ca}\gamma_{bd}=\nablao_{a}\gamma_{bc}\stackbin[\bU=0]{\eqref{nablaogamma}}=0
%%\end{align*}
%%$-\ov{\Gamma}{}^d_{ba}=n^d\Y_{ab}-\stackrel{\circ}{\Gamma}\hspace{-0.1cm}{}^d_{ba}$
%%
%%$-\ov{\Gamma}{}^d_{ca}=n^d\Y_{ac}-\stackrel{\circ}{\Gamma}\hspace{-0.1cm}{}^d_{ca}$
%The combination of $\mQ_{abc}=0$ with 
%%Enforcing $\mQ_{abc}=0$ in 
%\eqref{Sigma:times:Sym:tensor:IHs} yields 
%
%
%
$\gamma_{cd}\ov{\Sigma}^{d}{} _{a b}=0$, so $\ov{\Sigma}$ lies in the kernel of $\gamma$ and therefore 
$\ov{\Sigma}^{c}{} _{a b}=\ell_d\ov{\Sigma}^{d}{} _{a b} n^c$.\ We can then compute the contraction $\ell_d\ov{\Sigma}^{d}{} _{a b}$ by particularizing \eqref{commutation:IHs} for $T=\ellc$ and using \eqref{prod2}, \eqref{sigmathing1}, \eqref{def:ovnabla:U=0} and the fact that $\bsone(n)=0$.\ Specifically, one finds
%\begin{scriptsize}
\begin{align*}
  \ell_{ d}\ov{\Sigma}^{d}{} _{a  b} =&\spc \ovnabla_{a } \pounds_{\ovkil} \ell_{b} -	\pounds_{\ovkil} \ovnabla_{a } \ell_{b} 
%  = \ovnabla_{a }\lp 2\alpha\sone_b+\nablao_b\alpha\rp  -	\pounds_{\ovkil}\lp \nablao_a\ell_b+\Y_{ab} \rp\\
  = \nablao_a\lp 2\alpha\sone_b+\nablao_b\alpha\rp+n(\alpha) \Y_{ab}  -	\pounds_{\ovkil}\lp \nablao_a\ell_b+\Y_{ab} \rp\\
  =&\spc  2\alpha\nablao_a \sone_b+2(\nablao_a \alpha)\sone_b+ \nablao_a\nablao_b\alpha+n(\alpha) \Y_{ab}  -	\pounds_{\ovkil}\lp \F_{ab}+\Y_{ab} \rp\\
%  =&\spc  2\alpha\nablao_a \sone_b+2(\nablao_a \alpha)\sone_b+ \nablao_a\nablao_b\alpha+n(\alpha) \Y_{ab}  -\alpha\nablao_a\sone_b+\alpha\nablao_b\sone_a -\alpha \pounds_{n} \Y_{ab}-(\nablao_{a}\alpha )\lp \sone_b+r_b \rp +(\nablao_{b}\alpha )\lp \sone_{a}-r_{a} \rp 
  =&\spc  2\alpha\nablao_{(a} \sone_{b)}+2(\nablao_{(a} \alpha)(\sone_{b)}- r_{b)}) + \nablao_a\nablao_b\alpha+n(\alpha) \Y_{ab}   -\alpha \pounds_{n} \Y_{ab} ,
\end{align*}
%\end{scriptsize}
%$\ovnabla_{a } \ell_{b}=\cp_a\ell_b-\ov{\Gamma}^d_{ba}\ell_d=\cp_a\ell_b+(n^d\Y_{ab}-\stackrel{\circ}{\Gamma}\hspace{-0.1cm}{}^d_{ba})\ell_d=\nablao_a\ell_b+n^d\Y_{ab}\ell_d=\nablao_a\ell_b+\Y_{ab}$
%
%$-\ov{\Gamma}{}^d_{ba}=n^d\Y_{ab}-\stackrel{\circ}{\Gamma}\hspace{-0.1cm}{}^d_{ba}$
%
%$\ovnabla_{a } u_{b}=\cp_a u_b-\ov{\Gamma}^d_{ba}u_d=\cp_a\ell_b+(n^d\Y_{ab}-\stackrel{\circ}{\Gamma}\hspace{-0.1cm}{}^d_{ba}) u_d=\nablao_au_b+\bs{u}(n)\Y_{ab}$
%
%$\ovnabla_{a }\lp 2\alpha\sone_b+\nablao_b\alpha\rp=\nablao_a\lp 2\alpha\sone_b+\nablao_b\alpha\rp+\lp 2\alpha\sone_d+\nablao_d\alpha\rp n^d\Y_{ab}=\nablao_a\lp 2\alpha\sone_b+\nablao_b\alpha\rp+n(\alpha) \Y_{ab}  $
where in the second and third lines we have used \eqref{nablaoll} with $\bU=0$ and \eqref{onemoreeq:1} respectively.\ 
%while in the third we inserted the result
%
%
%\begin{align}
%	\nablao_a \ell_b  =&\spc \F_{ab}
%\end{align}
%
%\begin{align*}
%\pounds_{\ovkil}T_{ab}=\alpha \pounds_{n}T_{ab}+(\nablao_{a}\alpha )T_{cb}n^c+(\nablao_{b}\alpha )T_{ac}n^c
%\end{align*}
In particular, $\ov{\Sigma}^{c}{}_{ab}=\ell_{ d}\ov{\Sigma}^{d}{} _{a  b} n^c =0$ for an abstract Killing horizon of order one (by \eqref{condAKH1}).\ We conclude that \textit{an isolated horizon is an embedded abstract Killing horizon of order one satisfying the additional restrictions that $(i)$ the horizon has product topology and $(ii)$ the symmetry generator $\kil$ vanishes no-where at the horizon.} \hfill $\blacksquare$
\end{isolhor*}
The connection between Definitions \ref{defKHD0}, \ref{defKH1} and \ref{def:weak:KH1} and the standard notions of non-expanding, weakly isolated and isolated horizons sets the possibility of generalizing  results obtained in the context of the latter to arbitrary topologies and to horizons with fixed points.\ 
%We present an example of this 
%%An example of this can be found 
%in \cite{manzano2023master} (see also \cite{manzano2023PhD}), where we obtain a completely general form of near horizon equation on a completely general null hypersurface. 

\section{An application: fixed points of symmetry generators}\label{sec:application:zeroes}
%in an abstract Killing horizon of order one}

As an application of the framework introduced in this paper we study 
the fixed points set of the symmetry generator $\ovkil$ of an abstract horizon.\ Zeroes of Killing vectors are well-understood.\ However, since the notions that we have introduced are substantially weaker, it is of interest to see what can be said about the zeroes in that case.

As a prior step, we revisit the  notions of transverse submanifold and cross-section of a null hypersurface data $\hypdata$.\ For further details we refer to \cite{manzano2023constraint}.\  
%
%A useful tool to understand the geometry of null hypersurface data, or correspondingly of  embedded null hypersurfaces, is the notion of transverse submanifolds \cite{manzano2023constraint}. 
A \textit{transverse submanifold} is a codimension-one embedded submanifold of $\N$ to which $n$ is everywhere transverse.\ Such $S$ always exists on sufficiently local domains of $\N$.\ Given the embedding $\psi: S\longhookrightarrow\N$ of $S$ in $\N$, the tensor $h\defi\psi^{\star}\gamma$ defines a metric on $S$,  
%\cite{manzano2023constraint}, 
so one can introduce its Levi-Civita covariant derivative $\nabh$.\ A special case of transverse submanifold is when $S$ 
is intersected precisely once by each generator.\ In such case we say that $S$ is a cross-section (or a ``section'' for short).\ Note that existence of a section entails global restrictions on the data, in particular on the topology of $\N$.\ 

A particularly relevant result in this context is that, when the data admits a cross-section, one can always find a gauge transformation that makes the %surface gravity 
scalar $\kappa_n$ equal to zero.\ We prove this next.
\begin{proposition}\label{lem:gauge:kappa}
	Consider null hypersurface data $\hypdata$ admitting a cross-section $S$
	and define  
	$\kappa_n$ by \eqref{defY(n,.)andQ}.\ Then, there always exists a choice of gauge in which $\G_{(z,V)}(\Q)=0$.\
%	the scalar ODE 
%	\begin{equation}
%		\label{ODE:hatz}\pounds_n x-x\Q=0
%	\end{equation}
%	admits a unique global solution for $x$ provided initial data $x\vert_S$.
%	Moreover, if $x\vert_S$ vanishes nowhere then the solution is everywhere non-zero. 
%	In particular, if $x\vert_S \neq 0$ and we let $\{z=x,V\in\Gamma(T\N)\}$ then 
%	%\begin{equation}
%	%\label{Qprime:equals:zero}
%	$\mathcal{G}_{(z,V)}(\Q)=0$.
%	%\end{equation}
\end{proposition}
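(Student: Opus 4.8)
The plan is to reduce the statement to solving a linear first-order ODE along the generators of $\N$, using the cross-section to supply globally consistent initial data. The key observation is that the gauge behaviour of $\Q$ is controlled entirely by the scalar $z$: by \eqref{Qprime} we have $\G_{(z,V)}(\Q)=z^{-2}\big(\Q z - n(z)\big)$, and since $z$ is nowhere zero this vanishes if and only if
\[
n(z)=\Q z \qquad \text{on } \N.
\]
The vector field $V$ plays no role here, so I would simply take $V=0$ and concentrate on producing a function $z\in\mathcal{F}^{\star}(\N)$ solving this equation. Since $n(z)$ is the derivative of $z$ along the integral curves of $n$, the above is precisely a linear transport equation for $z$ along the generators of $\N$.

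First I would use the cross-section $S$ to prescribe initial data, fixing $z=1$ on $S$ (any nowhere-zero smooth function on $S$ would serve equally well). Writing $\Phi_t$ for the flow of $n$ and identifying $S$ with its image in $\N$, the standard theory of linear ODEs gives, along the generator through a point $p\in S$, the explicit solution
\[
z\big(\Phi_t(p)\big)=\exp\!\Big(\int_0^t \Q\big(\Phi_s(p)\big)\,ds\Big).
\]
Because $S$ is a cross-section, every point of $\N$ lies on exactly one generator, and that generator meets $S$ in a single point; hence the formula above assigns an unambiguous value to $z$ at every point of $\N$ and identifies $\N$ with the flow-out of $S$. Smooth dependence of solutions of ODEs on initial conditions and parameters then guarantees that $z$ is smooth, while the exponential is strictly positive, so $z\in\mathcal{F}^{\star}(\N)$. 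By construction $n(z)=\Q z$, whence $\G_{(z,0)}(\Q)=0$, as required.

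The step I expect to be the crux is the global well-definedness and smoothness of $z$, and this is exactly where the cross-section hypothesis is indispensable. For a general transverse submanifold a given generator could fail to meet $S$ at all (leaving $z$ undetermined along it) or meet it more than once (over-determining the initial data, and producing an inconsistency in the closed-orbit case, where solvability would force the monodromy condition $\oint \Q = 0$). The assumption that each generator meets $S$ \emph{exactly once} removes both pathologies at a stroke: it promotes $S$ to a valid initial surface for the transport equation along $n$ and covers all of $\N$ by the flow-out of $S$, so the integration above is simultaneously consistent and exhaustive. The remaining verifications --- smoothness of $z$ and strict positivity of the exponential --- are then routine.
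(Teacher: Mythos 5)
Your proposal is correct and follows essentially the same route as the paper: both reduce the statement via \eqref{Qprime} to the linear transport equation $n(z)=\Q z$ along the generators, solved globally with nowhere-zero initial data on the cross-section $S$. The only cosmetic difference is that you exhibit the solution explicitly as an exponential (making positivity manifest), whereas the paper invokes uniqueness of solutions of the ODE to rule out zeroes of a solution with $z\vert_S\neq 0$.
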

\begin{proof}
	The proof is based on the equation 
	\begin{equation}
	\label{ODE:hatz}\pounds_n x-\Q x=0,
	\end{equation}
	%$\pounds_n x-x\Q=0$ 
	which is a linear homogeneous ODE and  hence admits a global unique solution for $x$ provided 
	%for given 
	initial data $x\vert_{ S }$.\ We first prove that $x\vert_S\neq0$ necessarily implies $x \neq 0$.\ For that we argue by contradiction.\ Suppose that $x\vert_S\neq0$ and that $x$ vanishes  at some point $q\in\N$.\ Equation 
	%$\pounds_n x-x\Q=0$ 
	\eqref{ODE:hatz} is an ODE along the generator containing $q$, so by uniqueness of solutions $x$ vanishes everywhere along the  generator.\ Since all
	generators intersect $S$, this contradicts the hypothesis that
	$x\vert_S\neq0$.\  
	%vanishes nowhere.\ 
	Therefore, $x\vert_S\neq0$ indeed guarantees $x\neq0$.\ Now, 
	%define $\Q$ by \eqref{defY(n,.)andQ} and 
	take gauge  parameters $\{z=x,V\in\Gamma(T\N)\}$, where $x$ is a solution of \eqref{ODE:hatz}
	%$\pounds_n x-x\Q=0$ 
	with $x\vert_S\neq0$.\ Then, for any $V\in\Gamma(T\N)$,  $\mathcal{G}_{(z,V)}(\Q)=0$ as a consequence of the gauge transformation \eqref{Qprime}. 
%	 yields $\mathcal{G}_{(z,V)}(\Q)=0$ 
%	%\eqref{Qprime:equals:zero} 
%	for any $V\in\Gamma(T\N)$.
%	\begin{align}
%		\mathcal{G}_{\lp z,V\rp}\lp \Q \rp & = -\frac{1}{z^2}\big(  \pounds_nz-\Q z   \big).\label{Qprime}
%	\end{align}
	%	
%	Equation \eqref{ODE:hatz} is a linear homogeneous ODE, hence it admits a global unique solution for $x$ for given initial data $x\vert_{ S }$.
%	To show that  $x\vert_S\neq0$ implies $x \neq 0$, we argue by contradiction.  Assume $x$ vanishes  at some point $q\in\N$. Equation \eqref{ODE:hatz}  is an ODE along the generator containing $q$, so by uniqueness of solutions, $x$ vanishes everywhere along the  generator. Since all
%	generators intersects $S$, this contradicts the hypothesis that
%	$x|_S$ vanishes nowhere.
	%
%	When $x\vert_S\neq0$  we may chose  $z=x$
%	and 
%	The gauge transformation \eqref{Qprime} yields $\mathcal{G}_{(z,V)}(\Q)=0$ 
%	%\eqref{Qprime:equals:zero} 
%	for any $V\in\Gamma(T\N)$. 
\end{proof}
We can now study the nature of the zeroes on abstract Killing horizons.\ We start with a lemma that shows that fixed points of AKH$_1$s 
%abstract Killing horizons of order one  
are always simple, i.e.\ 
%.\ More concretely, we prove 
that $d\alpha$ cannot vanish at 
%any 
fixed points.
\begin{lemma}\label{lem:alpha:dalpha:kappa:zero:AKH1}
Let $\hypdata$ be an AKH$_1$
%abstract Killing horizon of order one 
with symmetry generator $\ovkil$.\ Define  %the functions 
$\alpha,\kappa\in\Fcal(\N)$ by $\ovkil\defi \alpha n$ and \eqref{defkappaonN} respectively.\ Then, there cannot exist a point $p\in \N$ where $\{\alpha\vert_p=0,d\alpha\vert_p=0\}$. 
\end{lemma}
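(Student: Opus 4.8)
The plan is to argue by contradiction, exploiting the fact that the defining condition \eqref{condAKH1} of an AKH$_1$ is, when read as an equation for $\alpha$, nothing but an explicit formula for the full Hessian $\nablao_a\nablao_b\alpha$ in terms of the $1$-jet $(\alpha,d\alpha)$ of $\alpha$. Concretely, writing $\beta_a\defi\nablao_a\alpha=(d\alpha)_a$ (recall that $\nablao$ is torsion-free, so $\nablao_a\alpha$ is just the differential) and using $n(\alpha)=n^c\beta_c$, equation \eqref{condAKH1} can be solved for the Hessian as
\begin{equation*}
\nablao_a\beta_b=-\beta_a(\sone_b-r_b)-\beta_b(\sone_a-r_a)-(n^c\beta_c)\Y_{ab}-2\alpha\nablao_{(a}\sone_{b)}+\alpha\pounds_n\Y_{ab}.
\end{equation*}
The crucial observation is that the right-hand side is \emph{linear and homogeneous} in the pair $(\alpha,\beta)$, with coefficients built exclusively from the smooth hypersurface data $\bsone$, $\bs{r}$, $\bY$, $\nablao\bsone$, $\pounds_n\bY$ and $n$. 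This is the abstract counterpart of the classical fact that a Killing field is determined by its value and first derivatives at a point (cf.\ \eqref{Sigma:Curvature}); here it is the $1$-jet of $\alpha$ that closes.

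Together with the tautology $\nablao_a\alpha=\beta_a$, the previous display exhibits $(\alpha,\beta)$, viewed as a section of $\mathbb{R}\oplus T^{\star}\N$, as a solution of a first-order linear homogeneous total differential system. I would then suppose, for contradiction, that there is a point $p$ with $\alpha\vert_p=0$ and $d\alpha\vert_p=0$, i.e.\ with vanishing $1$-jet $(\alpha,\beta)\vert_p=(0,0)$. Given any $q\in\N$, connectedness of $\N$ provides a smooth curve from $p$ to $q$; restricting the system to this curve turns it into a linear homogeneous ODE for $(\alpha,\beta)$ along the curve. By uniqueness of solutions of such ODEs the zero initial datum propagates, so $(\alpha,\beta)$ vanishes along the entire curve and in particular $\alpha(q)=0$. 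Since $q$ was arbitrary this forces $\alpha\equiv0$ on $\N$, whence $\ovkil=\alpha n\equiv0$ and the fixed point set $\mathcal{S}=\{\alpha=0\}$ equals all of $\N$, which certainly has non-empty interior. This contradicts condition $(i)$ in Definition \ref{defKHD0}, and the lemma follows.

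I expect the only genuine subtlety to be the verification that \eqref{condAKH1} really closes as a first-order linear system on the $1$-jet, i.e.\ that no uncontrolled second derivatives of $\alpha$ survive on the right-hand side and that every coefficient is one of the stated smooth data fields; once this is secured the argument is the standard jet-propagation/ODE-uniqueness scheme and needs no estimates. As an aside, running the same computation only along the generator of $n$ (using \eqref{nablaonnull} together with $\bU=0$, and noting that $\kappa$ is constant along generators for any abstract weak Killing horizon of order one) already shows that the ``bad set'' $\{\alpha=0,\,d\alpha=0\}$ is saturated by generators; but the curve-independent formulation above is what yields the global contradiction directly, without any further topological input beyond connectedness.
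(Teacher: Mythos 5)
Your proof is correct, and it rests on the same underlying mechanism as the paper's: read the AKH$_1$ condition \eqref{condAKH1} as a linear homogeneous first-order system on the $1$-jet of $\alpha$, propagate the vanishing jet at $p$ by ODE uniqueness, and contradict the empty-interior condition $(i)$ of Definition \ref{defKHD0}. The implementation, however, is genuinely different. The paper never prolongs to the full covector $d\alpha$: it contracts \eqref{condAKH1} twice with the tangent $X$ of a $\nablao$-autoparallel curve through $p$ (so that $X^aX^b\nablao_a\nablao_b\alpha$ becomes $d^2\alpha/dt^2$), and closes the resulting system by treating $n(\alpha)$ as a third unknown whose evolution equation is the $X$-contraction of \eqref{cond:weak:AKH1}; the zero initial data $\{\alpha\vert_p=0,\,X(\alpha)\vert_p=0,\,n(\alpha)\vert_p=0\}$ then forces $\alpha=0$ along every autoparallel curve emanating from $p$, hence on an open neighbourhood of $p$ (the initial direction being arbitrary), which already contradicts $(i)$. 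Your version instead keeps all of $\beta=d\alpha$ as an unknown, which — as you correctly verify — closes the system along \emph{arbitrary} curves. This buys you three things: no autoparallels or exponential map are needed, only connectedness of $\N$ (which the paper assumes throughout); the conclusion is global rather than local, namely the rigidity statement that a vanishing $1$-jet at one point forces $\alpha\equiv0$ on all of $\N$, the exact analogue of Killing-field rigidity you allude to; and the trace equation \eqref{cond:weak:AKH1} is not needed as a separate ingredient, since $n(\alpha)=n^c\beta_c$ is just a component of the propagated data. The price is propagating $1+\mathfrak{n}$ scalar unknowns instead of three, and having to write the ODE for $\beta$ covariantly along the curve (the Christoffel symbols of $\nablao$ enter the coefficients, but they are smooth, so uniqueness is unaffected). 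Both final contradictions are legitimate: yours because $\mathcal{S}=\N$ certainly has non-empty interior, the paper's because an open set of zeroes already violates $(i)$.
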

\begin{proof}
  We argue by contradiction. So, assume the existence of $p\in \N$ where $\alpha\vert_p=0$ and $d\alpha\vert_p=0$. Consider an autoparallel (in the connection $\nablao$) curve $\mathcal{C}
  : (a,b) \subset \mathbb{R}  \longrightarrow \N$ starting at $p$ and with affine parameter $t$,  i.e.\ a curve with  tangent vector $X$ satisfying $X^a\nablao_aX^b=0$, $X(t)=1$ and $\mathcal{C} (0) = p$. We shall use the notation $\frac{df}{dt}=X(f)$ and $\frac{d^2f}{dt^2}=X^a\nablao_a(X^b\nablao_bf)=X^aX^b\nablao_a\nablao_bf$ for any function $f\in\Fcal(\N)$. The proof is based on the defining condition
  \eqref{condAKH1} of AKH$_{1}$ 
 and its contraction with $n$, given by 
 \eqref{cond:weak:AKH1}.
 Contracting these equations with $X$ 
 yields 
\begin{align}
	\label{Lie:id:sigma:beforeXX} 0&= \frac{d^2\alpha}{dt^2} +2(\sone_b-{r}_b)X^b\frac{d\alpha}{dt}+X^aX^b\lp n(\alpha)\Y_{ab}+\alpha \big( 2\nablao_{a}\sone_{b}-(\pounds_{n}\bY)_{ab}\big)\rp ,\\ 
	\label{Lie:id:sigma:eq:for:n(alpha)} 0&=\frac{d \big(n(\alpha)\big)}{dt}+\Q \frac{d\alpha}{dt}+\alpha\pounds_{n}(\bsone-\bs{r})_aX^a.
\end{align}
%Equations
%\eqref{Lie:id:sigma:beforeXX}-\eqref{Lie:id:sigma:eq:for:n(alpha)}
%constitute
This is a system of ordinary differential equations for $\alpha$ and $n(\alpha)$ admitting 
$\{\alpha=0,n(\alpha)=0\}$ as solution.\  The initial conditions are
$\{\alpha|_p=0, n(\alpha)|_p=0\}$ so by uniqueness  we conclude
$\{\alpha=0,n(\alpha)=0\}$ along  $\mathcal{C}$.\ The  initial  vector $X|_p$
of the autoparallel curve is arbitrary.\ It follows that 
$\alpha=0$ on an open neighbourhood of $p$, which contradicts condition $(i)$ in Definition \ref{defKHD0}.\ %and \ref{defKH1}.
%Thus, there cannot exist a point $p\in \N$ where $\{\alpha\vert_p=0,d\alpha\vert_p=0\}$.  
\end{proof}
%\marc{This remark is incorrect}
%\begin{remark}
%	\miguel{Quitar}
%  Lemma \ref{lem:alpha:dalpha:kappa:zero:AKH1}
%  automatically excludes the possibility that there exists accumulation points on $\N$ where both $\alpha$ and $d\alpha$ vanish, as such a sequence of points would tend to a point $p\in \N$ where $\{\alpha\vert_p=0,d\alpha\vert_p=0\}$.
%\end{remark}
\begin{remark}\label{rem:isolated:zeroes}
	\normalfont
	%\color{blue}
  Condition $\{\alpha\vert_p,d\alpha\vert_p=0\}$ is equivalent to  %$\{\alpha\vert_p=0,\kappa\vert_p=0,\nabh_A\alpha\vert_p=0 \}$
  $\{\alpha\vert_p=0,\kappa\vert_p=0,X(\alpha)\vert_p=0\spc\spc\forall X\!\in\! T_pS\}$
  %, where $\nabh$ is the Levi-Civita  derivative of the induced  metric $h$  from $\gamma$ 
  on any transverse submanifold $(S,h)$ of $\N$ containing $p$.\
  % (see Section \ref{sec:Null:HD:cap2} and \textup{\cite{manzano2023constraint}}).
  This is a
  %n immediate 
  consequence of \eqref{defkappaonN} and 
  $X(\alpha)\vert_p=d\alpha(X)\vert_p=0$.\ 
  %$\nabh_W\alpha\vert_p=d\alpha(W)\vert_p=0$ for  $W\in T_pS$.\ 
  The following interesting implications are worth mentioning:
  \begin{itemize}\itemsep0cm
  \item[\textup{\textbf{1.}}] Condition $\{\alpha\vert_p=0,\kappa\vert_p\neq0\}$ is compatible with $X(\alpha)\vert_p=0$ for all $X\in T_pS$.\ This indeed occurs for bifurcate Killing horizons, where the Killing vector vanishes on a spacelike section (namely on the bifurcation surface) but is non-zero everywhere else in the horizon.
  \item[\textup{\textbf{2.}}] If $\{\alpha\vert_p=0,\kappa\vert_p=0\}$, then there must exist \textit{at least one} vector $W\in T_pS$ verifying $W(\alpha)\vert_p\neq0$.\ 
  
  In particular, 
  %An interesting particular case occurs when 
  if $X(\alpha)\vert_p\neq0$ for \textit{all} $X\in T_p S$ then the point $p \in \N$ is isolated in %any transverse submanifold 
  $(S,h)$,  
  %containing $p$, 
  i.e.\ there exists a neighbourhood $\mathcal{U}_p\subset S$ of $p$ where $\ovkil$ vanishes only at $p$.\ Another interesting case occurs when there exists a non-empty subspace $\Pi_p\defi\{X\in T_pS\spc\vert\spc X(\alpha)\vert_p=0\}$ of $ T_pS$.\  
  %with dimension strictly less than $\text{dim}(S)$.\  %$\text{dim}(S)=\n-1$).\ 
  Then $\alpha$ is identically zero on the submanifold of $S$ generated by taking all possible $\nablao$-autoparallel curves 
  %$\mathcal{C}_{p}$
  %(with respect to the induced metric $h$ on $S$) 
  %(in the connection $\nablao$) 
  starting at $p$ with initial tangent vector in $\Pi_p$.\ This follows directly from \eqref{Lie:id:sigma:beforeXX}-\eqref{Lie:id:sigma:eq:for:n(alpha)}, since for any such autoparallel curve $\mathcal{C}_p$ we have $\alpha\vert_p=0$, $n(\alpha)\vert_p=0$ (by $\kappa\vert_p=0$, cf.\ \eqref{defkappaonN}) and hence $\alpha$ must necessarily vanish everywhere along $\mathcal{C}_p$.\ 
  %there cannot be accumulation points
  %  Observe also that 
%  We have $\text{dim}(S)=\n-1$ and we \textit{define} $\mathfrak{s}\defi \text{dim}(\mathcal{S})$.\ If $\{\alpha\vert_p=0,\kappa\vert_p=0\}$, we find:
%  %\begin{itemize}
%  %\item[$(a)$] $\mathfrak{s}$  must satisfy the condition $\mathfrak{s}<\n-1$, otherwise $\mathfrak{s}=\n-1$ and $\alpha\vert_S=0$, which turns into open subsets of $\N$ where $\alpha=0$. 
%  %\item[$(b)$] 
%  Observe that $\mathfrak{s}=0$ is allowed, and it correspond to an isolated point, i.e.\ the point $p \in \N$ is isolated in any transverse submanifold $S$ containing $p$, i.e. there exists a neighbourhood $\mathcal{U}_p\subset S$ of $p$ where $\ovkil$ vanishes only at $p$.\ 
%  %\item[$(c)$] 
%  For $\mathcal{S}$ to be a hypersurface on $(S,h)$ (i.e.\ a submanifold of $(S,h)$ of codimension-one), one needs $\mathfrak{s}=\n-2$. 
%  %\end{itemize}
  \end{itemize}
  \end{remark}
%  
%--------Cosas anteriores eliminadas--------
%--------Cosas anteriores eliminadas--------
%--------Cosas anteriores eliminadas--------
%  
%  In particular, from $(ii)$ it follows that 
%  
%  It then follows that at least 
%  
%  
%  
%  \tcr{It then follows from  Lemma \ref{lem:alpha:dalpha:kappa:zero:AKH1} that a point $p \in \N$ where
%  $\alpha|_p = \kappa|_p =0$ is necessarily isolated in any transverse submanifold $S$ containing $p$, i.e. there exists a neihgbourhood
%  $\mathcal{U}_p\subset S$ of $p$ where $\ovkil$ vanishes only at $p$.}
%  
%  --------Cosas anteriores eliminadas--------
%  --------Cosas anteriores eliminadas--------
%  --------Cosas anteriores eliminadas--------
%  
%  
% \end{remark}
% \marc{This remark is also incorrect}
%\begin{remark}\label{rem:isolated:zeroes}
%	\miguel{Quitar}
%If $\{\alpha\vert_p=0,\kappa\vert_p=0\}$ holds on a point $p$ of a transverse submanifold $S\subset\N$, then necessarily $\nabh_A\alpha\vert_p\neq0$.\ Therefore, a fixed point $p$ of $\ovkil$ (i.e.\ 
%%a point where 
%$\ovkil\vert_p=0$) where additionally $\kappa\vert_p=0$ must be isolated on $S$, meaning that any neighbourhood 
%is such that $\ovkil\vert_{\mathcal{U}_p-\{p\}}\neq0$.
%\end{remark} 
%
%
We have all the ingredients to study the properties of the fixed points set of 
%a symmetry generator of 
an abstract Killing horizon of order one.\ In fact, partial results hold also in the more general setting of an abstract weak Killing horizon of order one.\ For simplicity we state the proposition under the assumption that the data admits a cross-section.\  %\miguel{Aqui ya conviene saber cosas sobre abstract cross-sections}. 
Since local cross-sections always exist, this restriction is inessential and can be easily lifted (at the expense of a more cumbersome presentation).

%the surface gravity $\kappa$ of $\ovkil$ is constant along the null generators, we obtain the explicit form of $\ovkil$ in the gauge where $\Q=0$ (this requires the existence of a cross-section on the data, see Section \ref{sec:Null:HD:cap2}) and we examine the nature of the fixed points set of $\ovkil$ on  WAKH$_1$s and AKH$_1$s.
\begin{theorem}
	\label{lemzerosets}
	Let $\hypdata$ be an abstract weak Killing horizon of order one with symmetry generator $\ovkil$ and fixed points set $\mathcal{S}$.\  
	%and that the surface gravity $\kappa$ of $\ovkil$ is constant along the null generators of $\N$. 
	Then, the surface gravity $\kappa$ of $\ovkil$ is constant along the null generators.\ 
	Assume that $\N$ admits a cross-section $S\subset\N$ and select the gauge so that $\Q=0$ (cf.\ Proposition \ref{lem:gauge:kappa}).\ Then 
	%the symmetry generator 
	$\ovkil$ and $\mathcal{S}$ are given by
	\begin{equation}
		\label{final:ovkil:n} \ovkil=(f+\kappa\lambda)n,\qquad\mathcal{S}\defi\{p\in\N\spc\vert\spc f(p)+\kappa(p)\lambda(p)=0\},
	\end{equation}
	where $f,\lambda\in\Fcal(\N)$ are functions satisfying $n(\lambda)=1$, $n(f)=0$.\ Now, choose any point $p\in\N$ (not necessarily a fixed point).\ Then,
        	\begin{itemize}\itemsep0cm
		\item[$(i)$] If $\kappa\vert_p\neq0$ there exists a neighbourhood $\mathcal{U}_p$ of $p$ where $\kappa\neq0$, and either  
		$\mathcal{S}\cap\mathcal{U}_p=\emptyset$ or   
		$\mathcal{S}\cap\mathcal{U}_p=\{p\in\mathcal{U}_p\spc\vert\spc  f(p)+\kappa(p)\lambda(p)=0\}$ defines a transverse submanifold of $\N$.\		
              \item[$(ii)$] If $\kappa\vert_p=0$, consider the
    generator             $\mathcal{C}_p$ containing $p$.\ Then
     $\ovkil\vert_{\mathcal{C}_p}$ is either 
    identically zero or vanishes nowhere.\
              \item[$(iii)$] Restrict $\hypdata$ to be an abstract Killing horizon of order one.\ Then $\mathcal{S} = \mathcal{S}_1 \cup
                \mathcal{S}_2$, where  $\mathcal{S}_1$ is the (possibly empty) union of 
                %isolated  
                generators defined by 
		the simultaneous zeroes of the functions $f$ and $\kappa$,
                and $\mathcal{S}_2$ is a (possibly empty) union of transverse submanifolds of $\N$.\ Moreover, $\kappa$ vanishes identically on ${\mathcal{S}_1}$
                and nowhere on ${\mathcal S}_2$.
              		\end{itemize}
              \end{theorem}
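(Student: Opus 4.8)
The plan is to reduce every assertion to the behaviour of the single scalar $\alpha$ (defined by $\ovkil=\alpha n$) along and transverse to the generators, exploiting that after fixing the gauge $\Q=0$ the surface gravity collapses to $\kappa=n(\alpha)$. The analytic heart of the proof is the first claim, the constancy of $\kappa$ along the generators, and this already holds at the level of an AWKH$_1$. Starting from $\kappa=n(\alpha)-\alpha\bY(n,n)=n(\alpha)+\alpha\Q$ (cf.\ \eqref{defkappaonN} and \eqref{defY(n,.)andQ}), I would contract the defining one-form identity \eqref{cond:weak:AKH1} with $n$. Since $[n,n]=0$ gives $(\pounds_n(\bsone-\bs{r}))(n)=n\big((\bsone-\bs{r})(n)\big)$, and since $\bsone(n)=0$ and $\bs{r}(n)=\bY(n,n)=-\Q$, this term equals $\alpha\,n(\Q)$, while the remaining two terms contribute $n\big(n(\alpha)\big)$ and $\Q\,n(\alpha)$. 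The contraction therefore reads $0=\alpha\,n(\Q)+n\big(n(\alpha)\big)+\Q\,n(\alpha)=n(\kappa)$, so $\kappa$ is constant on each generator.

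Next I would put $\ovkil$ into the normal form \eqref{final:ovkil:n}. Assuming a cross-section $S$, I fix the gauge with $\Q=0$ by Proposition \ref{lem:gauge:kappa}, so $\kappa=n(\alpha)$. Choosing $\lambda\in\Fcal(\N)$ with $\lambda\vert_S=0$ and integrating the ODE $n(\lambda)=1$ along the generators (globally well defined because $S$ meets each generator exactly once), and setting $f\defi\alpha-\kappa\lambda$, one checks $n(f)=n(\alpha)-n(\kappa)\lambda-\kappa\,n(\lambda)=\kappa-\kappa=0$. Hence $\ovkil=(f+\kappa\lambda)n$ and $\mathcal{S}=\{f+\kappa\lambda=0\}$, which is exactly \eqref{final:ovkil:n}.

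For parts $(i)$ and $(ii)$ I would argue pointwise in this gauge. If $\kappa\vert_p\neq0$, continuity of $\kappa$ gives a neighbourhood $\mathcal{U}_p$ on which $\kappa\neq0$; since $d\alpha(n)=n(\alpha)=\kappa$, we have $d\alpha\neq0$ throughout $\mathcal{U}_p$, so $\alpha$ is a submersion there and $\{\alpha=0\}\cap\mathcal{U}_p$ is either empty or a regular embedded codimension-one submanifold, transverse to $n$ because $d\alpha(n)=\kappa\neq0$, i.e.\ a transverse submanifold. If instead $\kappa\vert_p=0$, then by the first step $\kappa\equiv0$ along the generator $\mathcal{C}_p$, whence $n(\alpha)\equiv0$ and $\alpha$ is constant on $\mathcal{C}_p$; the constant is either zero (so $\ovkil\equiv0$ on $\mathcal{C}_p$) or nonzero (so $\ovkil$ vanishes nowhere on $\mathcal{C}_p$).

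Finally, for $(iii)$ I would split $\mathcal{S}=\mathcal{S}_1\sqcup\mathcal{S}_2$ according to whether $\kappa=0$ or $\kappa\neq0$. On $\mathcal{S}_1$ the relation $\alpha=f+\kappa\lambda$ with $\kappa=0$ forces $f=0$, so $\mathcal{S}_1=\{f=0,\kappa=0\}$; since both $f$ and $\kappa$ are $n$-constant and, by $(ii)$, every generator meeting $\mathcal{S}_1$ lies entirely in $\mathcal{S}$, this is a union of full generators on which $\kappa\equiv0$. By $(i)$, $\mathcal{S}_2$ is a union of transverse submanifolds, with $\kappa\neq0$ there by construction. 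The restriction to an AKH$_1$ is what licenses Lemma \ref{lem:alpha:dalpha:kappa:zero:AKH1}: it forbids points with $\alpha=0$ and $d\alpha=0$, so all fixed points are simple, which is consistent with and sharpens the dichotomy (on $\mathcal{S}_1$, where $d\alpha(n)=\kappa=0$, the nonvanishing of $d\alpha$ lies in the cross-section directions). I expect the main obstacle to be precisely the first step---the reduction of the Lie-derivative term in \eqref{cond:weak:AKH1} to $n(\Q)$ and the recognition that the result is $n(\kappa)$---since the entire structural analysis rests on $n(\kappa)=0$; the remaining parts are then standard transversality and ODE-uniqueness arguments.
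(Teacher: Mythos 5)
Your proposal is correct and follows essentially the same route as the paper: contracting \eqref{cond:weak:AKH1} with $n$ to get $n(\kappa)=0$ (the paper first rewrites it as $d\kappa=\alpha\big(d\Q-\pounds_n(\bsone-\bs{r})\big)$ and then contracts, which is equivalent), then fixing the gauge $\Q=0$, solving $n(\lambda)=1$ with $\lambda\vert_S=0$, setting $f=\alpha-\kappa\lambda$, and deducing $(i)$--$(iii)$ from the regular-value/transversality argument when $\kappa\neq0$ and the constancy of $f,\kappa$ along generators when $\kappa=0$. The only cosmetic difference is your closing appeal to Lemma \ref{lem:alpha:dalpha:kappa:zero:AKH1}, which (as in the paper, where it appears only in a remark) is not actually needed for the splitting $\mathcal{S}=\mathcal{S}_1\cup\mathcal{S}_2$.
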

              \begin{remark}\normalfont
\tcb{Regarding $(iii)$, the structure of $\mathcal{S}_1$ must be compatible with  \textup{Remark \ref{rem:isolated:zeroes}}, item \textup{\textbf{2}}.}
              \end{remark}
              %\marc{proof rewritten by me}
\begin{proof}
Inserting $n(\alpha)=\kappa-\alpha\Q$ (cf.\ \eqref{defY(n,.)andQ}, \eqref{defkappaonN}) into \eqref{cond:weak:AKH1} yields 
\begin{align}
  %0=	\alpha\pounds_{n}(\bsone-\bs{r})+d\kappa-\alpha d\Q\quad\Longleftrightarrow\quad
  d\kappa=\alpha\big(  d\Q-\pounds_{n}(\bsone-\bs{r})\big),
		 \label{Lie:id:sigma:times:n}
\end{align}
Contracting  with $n$ gives $n(\kappa)=0$ after using \eqref{defY(n,.)andQ} and $\bsone(n)=0$.\ Thus, $\kappa$ is constant along the generators of $\N$.\ To prove \eqref{final:ovkil:n} we first apply Proposition \ref{lem:gauge:kappa} and fix a gauge where  $\Q=0$. Then, the combination of \eqref{defY(n,.)andQ} and \eqref{defkappaonN} gives
\begin{equation}
	\label{n(alpha):equals:kappa}n(\alpha)=\kappa.
      \end{equation}
      Select any function on $\N$ satisfying $n(\lambda)=1$ (e.g.\ solve the linear ODE
      $n(\lambda)=1$ with initial data $\lambda |_S =0$).\ Then $f \defi \alpha - \kappa \lambda$ satisfies $n(f)=0$. This demonstrates \eqref{final:ovkil:n}. 

      Define $u \defi f + \kappa \lambda$.\ The fixed point set corresponds to
      the zeroes of $u$, i.e.\ $\mathcal{S} = \{ q \in \N;  u(q)=0\}$.\
      Take any point $p\in\N$.\ If  $\kappa\vert_p\neq0$ then it is also non-zero
      in a sufficiently small neighbourhood  $\mathcal{U}_p\subset\N$ of $p$.\ Then either $\mathcal{S}\cap\mathcal{U}_p=\emptyset$ or $\mathcal{S}$  defines a
            transverse submanifold (it is a hypersurface of $\N$ because $du$ is non-zero on $\cu_p$ and transverse because in fact $n (u) = \kappa \neq 0$).\ 
            If $\kappa\vert_p=0$, then $\ovkil \vert_{{\mathcal C}_p} = f|_p n$, because
            $\kappa\vert_{{\mathcal C}_p} = 0$ and $n(f)=0$. This proves items $(i)$ and $(ii)$.

            For item $(iii)$ we assume that $\hypdata$ is an abstract Killing horizon of order one.\ Define $\mathcal{S}_1 \defi \mathcal{S} \cap \{\kappa = 0\}$
              and $\mathcal{S}_2 \defi \mathcal{S} \cap \{\kappa \neq 0\}$, so that $\mathcal{S} = \mathcal{S}_1 \cup \mathcal{S}_2$. 
              By item $(i)$, $\mathcal{S}_2$ is a (possibly empty) union of transverse submanifolds.\ Assume that $\mathcal{S}_1\neq\emptyset$ and take a point $p \in \mathcal{S}_1$.\ Define $q$ to be 
              the intersection of the cross-section $S$ with the generator $\mathcal{C}_p$ containing $p$.\ Then $\kappa(q) = f(q)=0$ and $\mathcal{C}_p \subset \mathcal{S}_1$.\ 
              %By Remark \ref{rem:isolated:zeroes}  $q$ is an isolated fixed point in $S$.\ 
              It follows that
              $\mathcal{S}_1$ is a (possibly empty) union of 
              %isolated 
              generators defined by the simultaneous zeroes of the functions $f$ and $\kappa$.
%              By item
%                $(ii)$, $\mathcal{S}_2$ is a (possibly empty) union of transverse submanifolds. Take a point $p \in \mathcal{S}_1$  and define $q$ to be 
%                the intersection of the cross-section $S$ with the generator
%                $\mathcal{C}_q$. Clearly $\kappa(q) = f(q)=0$
%                and $\mathcal{C}_q \subset \mathcal{S}_2$. By Remark \ref{rem:isolated:zeroes}  $q$ is an isolated fixed point in $S$. It follows that
%                $\mathcal{S}_1$ is a (possibly empty) union of isolated generators defined by the simultaneous zeros of the functions $f$ and $\kappa$.
\end{proof}

\section*{Acknowledgements}
The authors acknowledge financial support under the project PID2021-122938NB-I00 (Spanish Ministerio de Ciencia, Innovaci\'on y Universidades and FEDER ``A way of making Europe'') and  SA096P20 (JCyL). M. Manzano also acknowledges the Ph.D. grant FPU17/03791 (Spanish Ministerio de Ciencia, Innovaci{\'o}n y Universidades).

\begingroup
\let\itshape\upshape
\bibliographystyle{acm}

\bibliography{ref}

\end{document}